\renewcommand{\section}{\@startsection%
{section}%
{1}%
{0em}%
{1.7em}%
{1.2em}%
{\normalfont\large\centering\bfseries}}
\renewcommand{\@seccntformat}[1]%
{\csname the#1\endcsname.\hspace{0.5em}}
\numberwithin{equation}{section}
\newtheorem{thm}{Theorem}[section]
\newtheorem{prop}[thm]{Proposition}
\newtheorem{lem}[thm]{Lemma}
\newtheorem{cor}[thm]{Corollary}
\theoremstyle{defn}
\newtheorem{defn}[thm]{Definition}
\newtheorem{rem}[thm]{Remark}
\newcommand{\abs}[1]{\left|#1\right|}
\newcommand\rE[1]{\upharpoonleft_{#1}}
\newcommand\lrb[1]{\left\lbrace#1 \right\rbrace}
\newcommand\lrp[1]{\left(#1 \right)}
\newcommand\C{{\mathbb C}}
\newcommand\N{{\mathbb N}}
\newcommand\R{{\mathbb R}}
\newcommand\cc[1]{\overline {#1}}
\newcommand\vb[1]{\!\left\langle {#1} \right |}
\newcommand\vk[1]{\left |{#1}\right\rangle\!}
\newcommand\ip[2]{\left\langle {#1},{#2} \right\rangle}
\newcommand\no[1]{\left\| {#1} \right\|}
\newcommand\tr[1]{{{\rm tr}}\left(#1\right)}
\newcommand\dom{\mathrm dom\,}
\newcommand\ran{\mathrm ran\,}
\newcommand\Span{\mathrm span\,}
\newcommand{\mathleft}{\@fleqntrue\@mathmargin0pt}
\newcommand{\mathcenter}{\@fleqnfalse}
\begin{document}
\begin{titlepage}
\title{The fast recurrent subspace on an\\ $N$-level quantum energy transport model
\footnotetext{Mathematics Subject Classification(2010):
82C70; 
43A32; 
47D07; 
46N50. 
}
\footnotetext{Keywords: Quantum energy transport; discrete Fourier transform; open systems; weak coupling limit.}}
\author{
\normalsize\textbf{Jorge R. Bola\~nos-Serv\'in, Josu\'e I. Rios-Cangas and Alfredo Uribe}
\\
\small Departamento de Matem\'aticas \\[-1.6mm] 
\small Universidad Aut\'onoma Metropolitana, Iztapalapa Campus \\[-1.6mm] 
\small San Rafael Atlixco 186, 09340 Iztapalapa, Mexico City.\\[-1.6mm]
\small \texttt{jrbs@xanum.uam.mx}, \;
\small \texttt{jottsmok@xanum.uam.mx}, \; 
\small \texttt{alur@xanum.uam.mx}
}

\date{\today}
\maketitle

\begin{center}
\begin{minipage}{5in}
\centerline{{\bf Abstract}} \medskip
The fast recurrent subspace (the biggest support of all invariant states) of a Weak Coupling Limit Type Quantum Markov Semigroup modeling a quantum transport open system of $N$-energy levels is determined. This is achieved by characterizing the structure of all the invariant state and their spectra in terms of a natural generalization of the Discrete Fourier Transform operator. Finally, the attraction domains and long-time behavior of the evolution are studied on hereditary subalgebras where faithful invariant states exist.\end{minipage}
\end{center}
\thispagestyle{empty}
\end{titlepage}

\section{Introduction}\label{sec:intro}\setcounter{equation}{0}
The family of Quantum Markov Semigroups (QMS's) is a tool which can be used to model of the evolution without memory of a microscopic system in accordance with the laws of quantum physics in the framework of open quantum systems. From a mathematical point of view, QMS's are a natural generalization of classical Markov semigroups on a function spaces in classical probability to a non-commutative operator algebras. This generalization gives a rigorous basis to the study of the qualitative behavior of evolution equations (master equations) on an operator
algebra, which can be computed explicitly in some cases or simulated numerically (see \cite{MR3804452} and the references therein).
	
As such, concepts like irreducibility, transience, and recurrence have been defined as the natural extension of the corresponding classical ones, for instance irreducible semigroups are shown to be either transient or recurrent \cite{SGFGOC2011}. A QMS is shown to be decomposable into “sub”-semigroups corresponding to classes of transient and recurrent states through the \textit{fast recurrent projection} $P_{\mathcal R_\mathcal L}$  \cite{MR2214906}, where  the \emph{fast recurrent subspace} $\mathcal R_\mathcal L$ is determined by the supports of normal invariant states. Determining the fast recurrent space allows restricting the domain of the semigroup to interesting hereditary subalgebras, where the  faithful invariant states exist and  long-time asymptotic properties are exhibited \cite{MR661704,MR479136}. 

This paper is a follow-up to the question left open in \cite{MR4192523} where the structure of the invariant states supported on some subspace $V$ of a quantum transport model of $N$-levels was determined.  This model is formulated in terms of a GKSL generator $\mathcal L$  of a weak coupling limit type QMS (WCLT QMS), where every Kraus operator is seen as a scalar multiple of a linear transformation, namely a \emph{transition operator}, which naturally generalizes the discrete Fourier transform between two Hilbert spaces. The transition operators play a fundamental role in the description of all the invariant states. The structure of the invariant states is attained by means of the powers of the transport operator (the orthogonal sum of all transition operators). The invariant states shed light on the so-called detailed balance (see Remark~\ref{eq:detail-balance-state2}) which is crucial in the study of ergodic QMS's and its  relative entropy \cite{MR3666729}.

The model discussed here generalizes the original setting in \cite{MR3399653} as well as the variations presented in \cite{MR3860251,MR4107240,AGQ2019D}. The main purpose of this paper is to prove the validity of the conjecture in \cite{MR4192523} and its consequences on the ergodic behavior of the QMS's. The conjecture establishes that the \emph{fast recurrent subspace}  satisfies
\begin{gather}\label{eq:frs-conjecture}
\mathcal R_\mathcal L = V \oplus  \text{\{one-dimensional   subspace\}. }\ 
\end{gather}
To this end, we first characterize all the invariant state of $\mathcal L$ and their supports to obtain \eqref{eq:frs-conjecture}.  Thereby, the transport scheme of invariant states proved in \cite{MR4192523} for some invariant states holds for any invariant state. This scheme establishes that any (non-trivial) invariant state is in fact a state supported on a smaller subspace of the first level, which is then transported along the rest of the levels. A similar transportation scheme is proved for the spectrum of the invariant states. With the above, we are able to explicitly study the long-term behavior and attraction domains of states under a suitable dimension hypothesis.

The structure of the paper is as follows: we briefly recall in Section~\ref{s2} some standard properties of the transition and transport operators  presented in \cite{MR4192523}. We define in Section~\ref{s3} the $N$-level WCLT QMS and describe the quantum transport model. The fast recurrent subspace is addressed in Section~\ref{s4} and we prove here the conjecture \eqref{eq:frs-conjecture} (see Theorem~\ref{th:fast-recurrent-subspace}). We also emphasize here the characterization of all invariant states, which is provided by Theorem~\ref{th:convex-decomposition-IS}. The study of the spectrum of any invariant state is addressed in Section~\ref{s6}, which is described by Theorem~\ref{cor:spectrum-decomposition} as a convex combination of spectra of their states in the first level. Section~\ref{s5} is devoted to the study of attraction domains and the long-time asymptotics of hereditary semigroups acting on hereditary subalgebras associated with subspaces of the first level. This will permit us to describe the evolution of states in certain subalgebras, in terms of structures of invariant states (see Theorems~\ref{cor:evolution-rho} and \ref{th:evolution-states-vw}). To conclude and as illustrative examples of this work, we present in Section~\ref{s7} the Kozyrev-Volovich \cite{Kozyrev2018} and Aref’eva-Volovich-Kozyrev \cite{MR3399653} quantum photosynthesis models.

\section{Transition and transport operators}\label{s2}\setcounter{equation}{0}
For $N\in\N$ let us consider a finite-dimensional Hilbert space $\mathcal H=\bigoplus_{k=0}^{N+1} E_k$, divided into $n_k$-dimensional mutually orthogonal subspaces $E_k$,  each one with \emph{canonical} basis
\begin{gather}\label{eq:k-canonical-bases}
\lrb{\vk{a_k}\,:\,0\leq a\leq n_k-1}\,,
\end{gather} 
where  $n_k\geq n_{k+1}$ and $n_0=n_{N+1}=1$ (see Fig. \ref{fig:graph}). For simplicity, the orthogonal projection of $\mathcal H$  onto $E_k$ shall be denoted by $P_k$, while for any other subspace $M\subset \mathcal H$, $P_M$ denotes the orthogonal projection onto $M$.

In contrast to the basis \eqref{eq:k-canonical-bases}, we also consider the \emph{entangled} basis
\begin{gather*}
\lrb{\varphi_{a_k}\,:\,\,0\leq a\leq n_k-1}\,,\quad\text{where}\quad \varphi_{a_k}\colonequals\frac{1}{\sqrt{n_k}}\sum_{b=0}^{n_k-1}\zeta_k^{-ab}\vk{b_k}\,, 
\end{gather*}
with $\zeta_k\colonequals e^{2\pi i/n_k}$, which is an orthonormal basis on $E_k$, for $k=0,\dots,N+1$.

\begin{defn}\label{def:transition-operator} For $k=0,\dots,N$, the \emph{transition operator} $Z_k\colon E_k\to E_{k+1}$ is given by
\begin{gather}\label{eq:Gen-discreteFF}
Z_{k}\colonequals\frac{1}{\sqrt{n_k}}\sum_{a=0}^{n_{k+1}-1}\sum_{b=0}^{n_k-1}\zeta_k^{ab}\vk{a_{k+1}}\vb{b_{k}}\,.
\end{gather}
\end{defn}
Note that $Z_0=\sqrt{n_1}\vk{\varphi_{0_1}}\vb{0_0}$. Thus, $\ker Z_0=E_0^\perp$ and $\ker Z_0^*=\{\varphi_{0_1}\}^\perp$. Besides,
\begin{gather*}
Z_k=\sum_{a=0}^{n_{k+1}-1}\vk{a_{k+1}}\vb{\varphi_{a_{k}}}\,, \quad k=1,\dots,N
\end{gather*}
which implies 
\begin{gather}\label{eq:Zentangled-canonical}
Z_{k}\varphi_{a_k}=\vk{a_{k+1}}\quad\mbox{and}\quad Z_{k}^*\vk{a_{k+1}}=\varphi_{a_k}\,.
\end{gather}
In addition, 
\begin{gather*}
\ker Z_k=\lrb{\Span\lrb{\varphi_{a_k}}_{a=0}^{n_{k+1}-1}}^\perp\quad\mbox{and}\quad \ker Z_k^*=E_{k+1}^\perp\,.\quad (k=1,\dots,N)  
\end{gather*}
Thereby, it is a simple matter to verify the following properties (cf. \cite{MR4192523}):
\begin{enumerate}
\item $Z_0Z_0^*=n_1 P_{\varphi_{0_1}}$ and $Z_0^*Z_0=n_1 P_{0}$.
\item For $k=1,\dots,N$, it follows that $Z_kZ_k^*=P_{k+1}$ and 
\begin{gather*}
\abs Z_k\colonequals Z_k^*Z_k=\sum_{a=0}^{n_{k+1}-1}\vk{\varphi_{a_{k}}}\vb{\varphi_{a_{k}}}\,,
\end{gather*}
which is a subprojection of $P_k$. Besides, $\ker \abs Z_k=\ker Z_k$. 
\item The last item implies that ${Z_k}$ and $Z_k^*$ are isometric isomorphisms between the subspaces $\abs Z_{k}E_k$ and $E_{k+1}$.
\end{enumerate}

It is useful to consider the orthogonal projection onto $\ker \abs Z_k$, given by
\begin{gather*}
\abs Z_k^\perp\colonequals P_k-\abs Z_k=\sum_{a=n_{k+1}}^{n_k-1}\vk{\varphi_{a_{k}}}\vb{\varphi_{a_{k}}}\,,\qquad k=1,\dots,N\,.
\end{gather*}
Also, we regard the \emph{transport} operator 
\begin{gather}\label{eq:trasport-operator}
Z\colon \bigoplus_{k=0}^NE_k\to\mathcal H\quad\mbox{as}\quad Z\colonequals  \bigoplus_{k=0}^NZ_k\,,
\end{gather}
which satisfies $ZP_k=Z_k$, 
\begin{gather*}
ZZ^*=n_1P_{\varphi_{0_1}}\oplus\bigoplus_{k=2}^{N+1}P_k\quad\mbox{and}\quad Z^*Z=n_1P_0\oplus\bigoplus_{k=1}^{N}\abs Z_k\,.
\end{gather*}
Thus, the maps ${Z}$ and $Z^*$ are isometric isomorphisms between $\bigoplus_{k=1}^{N}\abs Z_{k}E_k$ and $\bigoplus_{k=1}^{N}E_{k+1}$.

It is clear from \eqref{eq:Zentangled-canonical} that
\begin{gather}\label{eq:transition-on-bases}
Z\varphi_{a_k}=\vk{a_{k+1}}\quad\mbox{and}\quad Z^*\vk{a_{k+1}}=\varphi_{a_k}\,,\quad k=1,\dots,N\,.
\end{gather}
Besides, for $k=1,\dots,N-1$ and $m=1,2,\dots\leq (N-k)/2$, it follows that (cf. \cite[Cor.\,4]{MR4192523})
\begin{align}\label{eq:0k-m-even}
\begin{split}
Z^{2m-1}\vk{0_k}&=\prod_{j=0}^{m-1}\left(\frac{n_{k+2j+1}}{n_{k+2j}}\right)^{1/2}\varphi_{0_{k+2m-1}}
\,;\\[3mm]
Z^{2m}\vk{0_k}&=\prod_{j=0}^{m-1}\left(\frac{n_{k+2j+1}}{n_{k+2j}}\right)^{1/2}\vk{0_{k+2m}}\,.
\end{split}
\end{align}

Both transitions \eqref{eq:Gen-discreteFF} and transport \eqref{eq:trasport-operator} operators play a crucial role in the sequel.

 \section{$N$-level quantum energy transport model}\label{s3}\setcounter{equation}{0}

Recall that in an open quantum system (a quantum system interacting with the environment) the evolution of a state $\rho \mapsto {\mathcal T}_{t}(\rho)$, $t\geq 0$, is described by completely positive trace-preserving maps  $\mathcal T_t$ and the master equation  
\begin{gather*}
\frac{d{\mathcal T}_{t}(\rho)}{dt}= {\mathcal L}\lrp{{\mathcal T}_{t}(\rho)}\,, \qquad {\mathcal T}_{0}(\rho)=\rho
\end{gather*}
which involves an infinitesimal generator ${\mathcal L}$ with the Gorini-Kossakowski-Sudarshan and Lindblad (GKSL) structure.  The family $({\mathcal T}_{t})_{t\geq 0}$ of operators acting on $L_{1}({\mathcal H})$ (the space of finite trace operators) is called \textit{Quantum Markov Semigroup} (QMS).

We consider a GKSL Markov generator ${\mathcal L}$ belonging to the class of Weak Coupling Limit Type (WCLT) with degenerate reference Hamiltonian
\begin{gather*}
H\colonequals \sum_{k=0}^{N+1} \varepsilon_kP_k \,,\qquad (P_k\mbox{ the orthogonal projection of $\mathcal H$  onto $E_k$})
\end{gather*}
where the positive energies satisfy $\varepsilon_k>\varepsilon_{k+1}$ and the positive \emph{Bohr frequencies} (q.v. \cite[Subsect.\,1.1.5]{MR1929788}) $\omega_k=\varepsilon_k-\varepsilon_{k+1}$ are assumed to be pairwise different, for $k=0,\dots,N$.

In this fashion, $H$ corresponds to a quantum graph, viz. a graph whose vertices are the canonical basis $\lrb{\vk{a_k}\,:\,0\leq a\leq n_k-1}_{k=0}^{N+1}$ of $\mathcal H=\bigoplus_{k=0}^{N+1} E_k$ (see Fig. \ref{fig:graph}), and edges 
\begin{gather*}
\lrb{\zeta_k^{ab}\,:\, a=0,\dots,n_k-1\mbox{ and }b=0,\dots,n_{k+1}-1}_{k=0}^N\,,
\end{gather*}
where edge $\zeta_k^{ab}$ connects $\vk{a_{k}}$ with $\vk{b_{k+1}}$. 
\begin{figure}[h]
\centering
\begin{tikzpicture}
  [scale=.75,auto=left,every node/.style={}]
  \node (n0) at (4.5,7.5) {$0_0$};
  \node (n1) at (0,6)  {$0_1$};
  \node (n2) at (3,6)  {$1_1$};
  \node (n3) at (6,6) {$\cdots$};
  \node (n4) at (9,6)  {$(n_1-1)_1$};
   \node (m1) at (0,4.5)  {$0_2$};
   \node (m2) at (3,4.5)  {$1_2$};
   \node (m3) at (6,4.5)  {$\cdots$};
   \node (m4) at (9,4.5)  {$(n_2-1)_2$};
   \node (p1) at (0,3)  {$\vdots$};
   \node (p2) at (3,3)  {$\vdots$};
   \node (p3) at (6,3)  {$\ddots$};
   \node (p4) at (9,3)  {$\vdots$};
    \node (N1) at (0,1.5)  {$0_N$};
   \node (N2) at (3,1.5)  {$1_N$};
   \node (N3) at (6,1.5)  {$\cdots$};
   \node (N4) at (9,1.5)  {$(n_N-1)_N$};
    \node (o0) at (4.5,0)  {$0_{N+1}$};
     \node (lM) at (13,7.5)  {$0$-level\,,\quad $E_0$};
      \node (l1) at (13,6)  {$1$-level\,,\quad $E_1$};
       \node (l2) at (13,4.5)  {$2$-level\,,\quad $E_2$};
        \node (ld) at (13,3)  {$\vdots$};
         \node (lN) at (13,1.5)  {$N$-level\,,\quad $E_N$};
          \node (lm) at (13,0)  {$N+1$-level\,,\quad $E_{N+1}$};
  \foreach \from/\to in {n0/n1,n0/n2,n0/n3,n0/n4,o0/N1,o0/N2,o0/N3,o0/N4,n1/m1,n1/m2,n1/m3,n2/m1,n2/m2,n2/m3,n3/m1,n3/m2,n3/m3,n3/m4,n4/m3,n4/m4,
  m1/p1,m1/p2,m1/p3,m2/p1,m2/p2,m2/p3,m3/p1,m3/p2,m3/p3,m3/p4,m4/p3,m4/p4,
  p1/N1,p1/N2,p1/N3,p2/N1,p2/N2,p2/N3,p3/N1,p3/N2,p3/N3,p3/N4,p4/N3,p4/N4}
    \draw (\from) -- (\to);
\end{tikzpicture}
\caption{Graph of states and transitions.}\label{fig:graph}
\end{figure}
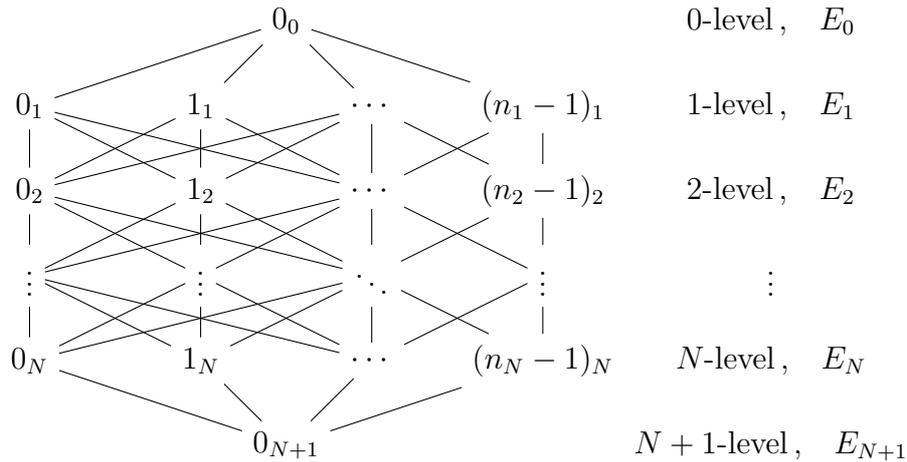

The structure of the WCLT Markov generator ${\mathcal L}$, in the Schr\"odinger's picture, is given by
\begin{eqnarray*}
\begin{aligned}
\mathcal L(\rho)\colonequals \sum_{k=0}^N & -i[\Delta_{\omega_k},\rho]+
  \left(L_{-, \omega_k} \rho L_{-, \omega_k}^{*}- \frac{1}{2} \lrb{L_{-, \omega_k}^* L_{-, \omega_k},\rho}
  \right)  \\ &+
 \lrp{L_{+, \omega_k} \rho L_{+, \omega_k}^*-\frac{1}{2}\lrb{ L_{+, \omega_k}^* L_{+, \omega_k}, \rho}}\,,
\end{aligned}
\end{eqnarray*}
while the dual generator is
\begin{eqnarray*}
\begin{aligned}
\mathcal L^*(x) \colonequals \sum_{k=0}^N& i[\Delta_{\omega_k},x]+\lrp{L_{-, \omega_k}^* x L_{-, \omega_k}-\frac{1}{2}\lrb{L_{-, \omega_k}^*L_{-, \omega_k}x}} \\ &+
\lrp{L_{+, \omega_k}^* x L_{+, \omega_k}-\frac{1}{2}\lrb{ L_{+, \omega_k}^*L_{+, \omega_k} x}}\,.
\end{aligned}
\end{eqnarray*}

Explicitly, the Kraus operators take the form  
\begin{gather}
\label{eq:Gamma-Krauss}
L_{-,\omega_k}=\sqrt{\Gamma_{-,\omega_k}}Z_k\quad L_{+,\omega_k}=\sqrt{\Gamma_{+,\omega_k}}Z_k^*\,, \quad 0\leq k\leq N\,,
\end{gather} 
with $\Gamma_{+,\omega_N}=0$
and the effective Hamiltonian is 
\begin{align}\label{eq:eff-Hamiltonian}
H_{\rm{eff}}\colonequals \sum_{k=0}^N \Delta_{\omega_k}\,,\quad\mbox{where}\quad \Delta_{\omega_k}= \gamma_{-,\omega_k}Z_k^*Z_k-
\gamma_{+,\omega_k}Z_kZ_k^*\,.
\end{align} 
The term $\gamma_{+,\omega_N}Z_NZ_N^*$ is absent from the effective Hamiltonian in the model studied in \cite{MR3860251}.

We use \eqref{eq:Gamma-Krauss} and \eqref{eq:eff-Hamiltonian} to rewrite $\mathcal L$ as the following:
\begin{align}\label{eq:L-generator-rewrite}
\begin{split}
\mathcal L(\rho)={}&\rho\lrp{n_1\eta_{-,\omega_0}P_0+n_1\overline{\eta}_{+,\omega_0}P_{\varphi_{0_1}}+\sum_{j=1}^{N}\eta_{-,\omega_j} \abs{Z}_j+\overline{\eta}_{+,\omega_j} P_{j+1}}\\&+
\lrp{n_1\overline{\eta}_{-,\omega_0}P_0 + n_1\eta_{+,\omega_0}P_{\varphi_{0_1}}+\sum_{j=1}^{N}\overline{\eta}_{-,\omega_j}\abs{Z}_j + \eta_{+,\omega_j}P_{j+1}}\rho  \\ & +
\Gamma_{-,\omega_N} Z_N \rho Z_N^*+\sum_{k=0}^{N-1}\Gamma_{-,\omega_k} Z_k \rho Z_k^* + \Gamma_{+,\omega_k} Z_k^*\rho Z_k \,,
\end{split}
\end{align}
where $\eta_{\pm,\omega_k}=-\dfrac{\Gamma_{\pm,\omega_k}}{2}+i\gamma_{\pm,\omega_k}$ and $\eta_{+,\omega_N}=i\gamma_{+,\omega_N}$. The expression~\eqref{eq:L-generator-rewrite} will be useful in the sequel for characterizing invariant states, i.e., states  which belong to  $\ker \mathcal L$. 

Let $e^{\beta_k}\colonequals\Gamma_{-,\omega_k}/\Gamma_{+,\omega_k}$,  with $\beta_k\colonequals{\omega_k\beta(\omega_k)}$, for $k=1,\dots,N-1$, and  
\begin{gather}\label{eq:varphi-operator}
\varphi(\rho)\colonequals n_1\lrp{\Gamma_{+,\omega_0}P_{\varphi_{0_1}}\rho P_{\varphi_{0_1}}+\eta_{+,\omega_0}P_{\varphi_{0_1}}\rho+\cc\eta_{+,\omega_0}\rho P_{\varphi_{0_1}}}\,.
\end{gather}

\begin{rem}\label{rm:varphi-commutes-rho}
Clearly, a state $\rho$ commutes with $P_{\varphi_{0_1}}$ if and only if $\varphi(\rho)=0$.
\end{rem}

\section{The fast recurrent subspace}\label{s4}\setcounter{equation}{0}
We address in this section the main goal of this article (see Theorem~\ref{th:fast-recurrent-subspace}), which assert the conjecture proposed in \cite{MR4192523}. Let us start with the following.

\begin{defn}
The fast recurrent subspace $\mathcal R_{\mathcal L}$ of a GKSL is the biggest support of its invariant states, namely
\begin{gather}\label{eq:frs-RL}
\mathcal R_{\mathcal L}\colonequals\sup \lrb{{\rm supp}\, \rho\,:\, \rho\mbox{ is an invariant state}}\,.
\end{gather}
\end{defn}

Since any convex combination of invariant states is an invariant state, one has that $\mathcal{R_L}$ is the union of the ranges of the all invariant states of $\mathcal L$. Thereby, we shall investigate some properties that invariant states must possess in terms of their support, commutation with some projections and general transport structures.

The following result shows that any invariant states can be described on each level in terms of the transport of \eqref{eq:varphi-operator}.  

\begin{lem}\label{lem:invariant-states}
An invariant state $\rho$ commutes with $\abs Z_1$, $P_0$, $P_1,\dots,P_{N+1}$, satisfies $Z_N\rho Z_N^*=0$, $P_{\varphi_{0_1}}\rho P_{\varphi_{0_1}}=\lrp{n_1\Gamma_{+,\omega_0}}^{-1}\Gamma_{-,\omega_0}Z_0\rho Z_0^*$, and
\begin{gather}\label{eq:detail-balance-state0}
\rho P_{k+1}=e^{\beta_k}Z_k\rho Z_k^*-\frac{1}{\Gamma_{+,\omega_k}}Z^k\varphi(\rho){Z^*}^k\,,\quad k=1,\dots, N-1\,.
\end{gather}
\end{lem}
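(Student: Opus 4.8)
The strategy is to feed $\mathcal L(\rho)=0$ into the rewritten form \eqref{eq:L-generator-rewrite}. Put
\[
G\colonequals n_1\overline{\eta}_{-,\omega_0}P_0+n_1\eta_{+,\omega_0}P_{\varphi_{0_1}}+\sum_{j=1}^{N}\bigl(\overline{\eta}_{-,\omega_j}\abs Z_j+\eta_{+,\omega_j}P_{j+1}\bigr),
\]
i.e.\ the operator acting on $\rho$ from the left in \eqref{eq:L-generator-rewrite}; the operator acting from the right is then exactly $G^*$, so $\mathcal L(\rho)=G\rho+\rho G^*+\Gamma_{-,\omega_N}Z_N\rho Z_N^*+\sum_{k=0}^{N-1}(\Gamma_{-,\omega_k}Z_k\rho Z_k^*+\Gamma_{+,\omega_k}Z_k^*\rho Z_k)$. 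Two structural facts are used throughout: $G$ is \emph{block diagonal} for $\mathcal H=\bigoplus_kE_k$ (a $\C$-combination of the mutually commuting self-adjoint projections $P_0,P_{\varphi_{0_1}},\abs Z_1,\dots,\abs Z_N,P_2,\dots,P_{N+1}$), and each jump summand is supported on a single level ($Z_k\rho Z_k^*$ on $E_{k+1}$, $Z_k^*\rho Z_k$ on $E_k$). Hence, compressing $\mathcal L(\rho)=0$ by $P_m(\cdot)P_m$ gives, with $\rho_{mm}\colonequals P_m\rho P_m$ and $G_m\colonequals P_mGP_m$,
\[
G_m\rho_{mm}+\rho_{mm}G_m^*+\Gamma_{-,\omega_{m-1}}Z_{m-1}\rho_{m-1,m-1}Z_{m-1}^*+\Gamma_{+,\omega_m}Z_m^*\rho_{m+1,m+1}Z_m=0
\]
(out-of-range or vanishing terms omitted). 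Two assertions drop out already: for $m=N+1$, $G_{N+1}=\eta_{+,\omega_N}P_{N+1}=i\gamma_{+,\omega_N}P_{N+1}$ is skew-adjoint, so $G_{N+1}\rho_{N+1,N+1}+\rho_{N+1,N+1}G_{N+1}^*=0$ and the equation collapses to $\Gamma_{-,\omega_N}Z_N\rho Z_N^*=0$; for $m=0$, using $G_0=n_1\overline{\eta}_{-,\omega_0}P_0$, $Z_0^*\rho Z_0=n_1\langle\varphi_{0_1},\rho\,\varphi_{0_1}\rangle P_0$ and $Z_0Z_0^*=n_1P_{\varphi_{0_1}}$, it becomes $-n_1\Gamma_{-,\omega_0}\rho_{00}+n_1\Gamma_{+,\omega_0}\langle\varphi_{0_1},\rho\,\varphi_{0_1}\rangle P_0=0$, which rearranges to $P_{\varphi_{0_1}}\rho P_{\varphi_{0_1}}=(n_1\Gamma_{+,\omega_0})^{-1}\Gamma_{-,\omega_0}Z_0\rho Z_0^*$.

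Commutation with the $P_k$ comes from the off-diagonal blocks. For $j\neq l$, compressing $\mathcal L(\rho)=0$ by $P_j(\cdot)P_l$ kills every jump term (each lives on one level) and, by block-diagonality of $G$, leaves the Sylvester equation $G_j(P_j\rho P_l)+(P_j\rho P_l)G_l^*=0$. Diagonalising: $G_0$ has eigenvalue $n_1\overline{\eta}_{-,\omega_0}$; $G_1=(\overline{\eta}_{-,\omega_1}+n_1\eta_{+,\omega_0})P_{\varphi_{0_1}}+\overline{\eta}_{-,\omega_1}(\abs Z_1-P_{\varphi_{0_1}})$ has eigenvalues $\overline{\eta}_{-,\omega_1}+n_1\eta_{+,\omega_0}$, $\overline{\eta}_{-,\omega_1}$ and, on $\ker\abs Z_1$, $0$; for $2\le m\le N$, $G_m=(\overline{\eta}_{-,\omega_m}+\eta_{+,\omega_{m-1}})\abs Z_m+\eta_{+,\omega_{m-1}}\abs Z_m^\perp$ has eigenvalues of real part $-\tfrac12\Gamma_{-,\omega_m}-\tfrac12\Gamma_{+,\omega_{m-1}}<0$ and $-\tfrac12\Gamma_{+,\omega_{m-1}}<0$; and $G_{N+1}$ has eigenvalue $i\gamma_{+,\omega_N}$. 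Thus all eigenvalues of all $G_m$ lie in the closed left half-plane, with vanishing real part only for the eigenvalue $0$ of $G_1$ and for $i\gamma_{+,\omega_N}$; so for $j\neq l$ one has $\sigma(G_j)\cap\sigma(-G_l^*)=\emptyset$ (a common value would be purely imaginary, hence $0$ in both arguments, forcing $j=l=1$; the residual case $\{j,l\}=\{1,N+1\}$ requires $\gamma_{+,\omega_N}=0$, excluded since $\gamma_{+,\omega_N}Z_NZ_N^*$ is present in $H_{\mathrm{eff}}$ — otherwise one closes it with $\rho\ge0$), and the Sylvester equation forces $P_j\rho P_l=0$. Hence $\rho=\sum_k\rho_k$ with $\rho_k\colonequals P_k\rho P_k$. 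Commutation with $\abs Z_1$ then follows by compressing the $m=1$ level equation by $\abs Z_1^\perp(\cdot)\abs Z_1$: $\abs Z_1^\perp$ annihilates $G_1$ (its summands lie in $\abs Z_1E_1$), $Z_0$ (range $\C\varphi_{0_1}\subset\abs Z_1E_1$) and $Z_1^*$ (range $\abs Z_1E_1$), so the compression reduces to $(\abs Z_1^\perp\rho_1\abs Z_1)(G_1^*\abs Z_1)=0$ with $G_1^*\abs Z_1=\eta_{-,\omega_1}\abs Z_1+n_1\overline{\eta}_{+,\omega_0}P_{\varphi_{0_1}}$ invertible on $\abs Z_1E_1$, whence $\abs Z_1^\perp\rho_1\abs Z_1=0$ and $[\rho,\abs Z_1]=0$.

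The identities \eqref{eq:detail-balance-state0} I would prove by induction on $k$, the engine being that $Z_k(\cdot)Z_k^*$ acts like a unital map $E_k\to E_{k+1}$ since $Z_kZ_k^*=P_{k+1}$, together with $Z_k^*Z_k=\abs Z_k$ and $Z_k\abs Z_k=Z_k$. For $k=1$: in the $m=1$ level equation substitute $\Gamma_{-,\omega_0}Z_0\rho Z_0^*=n_1\Gamma_{+,\omega_0}P_{\varphi_{0_1}}\rho P_{\varphi_{0_1}}$ (just proven); by \eqref{eq:varphi-operator} and $[\rho,P_1]=0$, the $P_{\varphi_{0_1}}$-part of $G_1\rho_1+\rho_1G_1^*$ together with this term is exactly $\varphi(\rho)$, so the equation becomes $\Gamma_{+,\omega_1}Z_1^*\rho_2Z_1=-\varphi(\rho)-\overline{\eta}_{-,\omega_1}\abs Z_1\rho_1-\eta_{-,\omega_1}\rho_1\abs Z_1$; applying $Z_1(\cdot)Z_1^*$ and using $Z_1\abs Z_1=Z_1$, $Z_1Z_1^*=P_2$, $Z\rE{E_1}=Z_1$, $\overline{\eta}_{-,\omega_1}+\eta_{-,\omega_1}=-\Gamma_{-,\omega_1}$ gives $\Gamma_{+,\omega_1}\rho_2=\Gamma_{-,\omega_1}Z_1\rho Z_1^*-Z\varphi(\rho)Z^*$. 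For the step $k-1\to k$ ($2\le k\le N-1$): rewrite the $m=k$ level equation as $\Gamma_{+,\omega_k}Z_k^*\rho_{k+1}Z_k=\Gamma_{+,\omega_{k-1}}\rho_k-\overline{\eta}_{-,\omega_k}\abs Z_k\rho_k-\eta_{-,\omega_k}\rho_k\abs Z_k-\Gamma_{-,\omega_{k-1}}Z_{k-1}\rho_{k-1}Z_{k-1}^*$; by the inductive hypothesis $\Gamma_{-,\omega_{k-1}}Z_{k-1}\rho_{k-1}Z_{k-1}^*=\Gamma_{+,\omega_{k-1}}\rho_k+Z^{k-1}\varphi(\rho){Z^*}^{k-1}$, so after substitution the $\rho_k$-terms cancel, and applying $Z_k(\cdot)Z_k^*$ (using $Z_k\abs Z_k=Z_k$, $Z_kZ_k^*=P_{k+1}$ and $Z_k\circ Z^{k-1}\rE{E_1}=Z^k\rE{E_1}$) yields $\Gamma_{+,\omega_k}\rho_{k+1}=\Gamma_{-,\omega_k}Z_k\rho Z_k^*-Z^k\varphi(\rho){Z^*}^k$, i.e.\ \eqref{eq:detail-balance-state0} at $k$ (recall $\rho P_{k+1}=\rho_{k+1}$).

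The real obstacle is the commutation step: it rests on the exact spectra of the $G_m$ and on $\sigma(G_j)\cap\sigma(-G_l^*)=\emptyset$ for $j\neq l$, which is delicate precisely at the levels with zero or purely imaginary eigenvalues ($\ker\abs Z_1$ and the top level $E_{N+1}$) and may have to be closed using $\rho\ge0$. Once $\rho$ is known to be block diagonal, everything else is bookkeeping with $Z_kZ_k^*=P_{k+1}$, $Z_k^*Z_k=\abs Z_k$, $Z_k\abs Z_k=Z_k$ and \eqref{eq:transition-on-bases}.
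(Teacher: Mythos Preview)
Your approach is essentially the paper's. The paper too compresses $\mathcal L(\rho)=0$ by $P_j(\cdot)P_l$; where you invoke the Sylvester criterion $\sigma(G_j)\cap\sigma(-G_l^*)=\emptyset$, the paper writes the compressed equation concretely as $(\text{diagonal operator})(P_j\rho P_l)=0$ and observes that ``the real parts of the coefficients are strictly negative''---the same computation in coordinates. For $[\rho,\abs Z_1]=0$ the paper uses the mirror compression $\abs Z_1(\cdot)\abs Z_1^\perp$ instead of your $\abs Z_1^\perp(\cdot)\abs Z_1$. For \eqref{eq:detail-balance-state0} the paper runs the same induction, compressing by $\abs Z_k(\cdot)\abs Z_k$ rather than $P_k(\cdot)P_k$ before applying $Z_k(\cdot)Z_k^*$; since $Z_k\abs Z_k=Z_kP_k=Z_k$ these produce identical identities. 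The $(0,0)$ and $(N{+}1,N{+}1)$ blocks are handled exactly as you do.

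One caveat on your fallback: positivity alone does \emph{not} close the $(1,N{+}1)$ block when $\gamma_{+,\omega_N}=0$ and $\abs Z_1^\perp\neq0$. Indeed, for a unit $w\in\abs Z_1^\perp E_1$ one checks $Gw=0$, $G\vk{0_{N+1}}=i\gamma_{+,\omega_N}\vk{0_{N+1}}$, and every jump term annihilates both vectors; so if $\gamma_{+,\omega_N}=0$ the pure state $\tfrac12(\vk w+\vk{0_{N+1}})(\vb w+\vb{0_{N+1}})$ is genuinely $\mathcal L$-invariant yet fails to commute with $P_1$. The paper glosses over this (``one can follow the same reasoning''), so it too is relying on $\gamma_{+,\omega_N}\neq0$---consistent with its remark that the term $\gamma_{+,\omega_N}Z_NZ_N^*$ is present in $H_{\rm eff}$, in contrast to \cite{MR3860251}. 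Your primary argument is correct under that reading; drop the ``close it with $\rho\ge0$'' fallback.
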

\begin{proof}
Since $\rho$ is an invariant state, then $\mathcal L(\rho)=0$. Thus, by virtue of \eqref{eq:L-generator-rewrite}, 
\begin{gather}\label{eq:commutep0}
0=P_1\mathcal L(\rho) P_0=\lrp{n_1\eta_{-,\omega_0}P_1+n_1\eta_{+,\omega_0}P_{\varphi_{0_1}}+\cc\eta_{-,\omega_1}\abs Z_1}\rho P_0\,,
\end{gather}
i.e., $P_1\rho P_0=0$, since $\abs Z_1=\oplus_{a=0}^{n_2-1}P_{\varphi_{a_1}}$, $P_k=\abs Z_k\oplus\abs Z_k^\perp$, for $k=1,\dots, N$, and the real parts of the coefficients of \eqref{eq:commutep0} are strictly negative. Besides,
\begin{gather*}
0=P_j\mathcal L(\rho) P_0=\lrp{\lrp{n_1\eta_{-,\omega_0}+\eta_{+,\omega_{k-1}}}P_j+\cc\eta_{-,\omega_{k}}\abs Z_j}\rho P_0\,,
\end{gather*}
implies $P_j\rho P_0=0$, for $j=2,\dots N+1$. One can follow the same  reasoning from above to show that $P_j\rho P_k=0$, for all $j,k=0,\dots,N+1$, with $j\neq k$. Thereby,
\begin{gather*}
[\rho,P_k]=\sum_{s=0,\,s\neq k}^{N+1} P_s\rho P_k-P_k\rho\sum_{j=0,\,j\neq k}^{N+1} P_j=0\,,
\end{gather*}
which means that $\rho$ commutes with $P_k$ for all $k=0,\dots,N+1$. Now,
\begin{gather*}
0=\abs Z_1\mathcal L(\rho) \abs Z_1^\perp=n_1\eta_{+,\omega_0}P_{\varphi_{0_1}}\rho \abs Z_1^\perp+\cc\eta_{-,\omega_1} \abs Z_1\rho \abs Z_1^\perp\,,
\end{gather*}
yields $ \abs Z_1\rho \abs Z_1^\perp=0$ and since $\rho$ commutes with $P_1=\abs Z_1+\abs Z_1^\perp$,
\begin{gather}\label{eq:comuutes-Z1}
[\rho, \abs Z_1]=\rho P_1\abs Z_1- \abs Z_1P_1\rho=P_1\rho \abs Z_1- \abs Z_1\rho P_1=0\,,
\end{gather}
i.e., $\rho$ commutes with $\abs Z_1$. Note that $0=P_{N+1}\mathcal L(\rho)P_{N+1}$ implies ${Z_N\rho Z_N^*=0}$, and as consequence of $0=P_0\mathcal L(\rho)P_0$ one obtains $$P_{\varphi_{0_1}}\rho P_{\varphi_{0_1}}=\lrp{n_1\Gamma_{+,\omega_0}}^{-1}\Gamma_{-,\omega_0}Z_0\rho Z_0^*.$$ In addition, one computes from $0=\abs Z_1\mathcal L(\rho)\abs Z_1$ that
\begin{gather*}
Z_1^*\rho Z_1=e^{\beta_1}\abs Z_1\rho\abs Z_1-\frac{1}{\Gamma_{+,\omega_1}}\abs Z_1\varphi(\rho)\abs Z_1\,,
\end{gather*}
whence it follows \eqref{eq:detail-balance-state0} for $k=1$, since $Z_1Z_1^*=P_2$ and $Z_1^*Z_1=\abs Z_1$. Thus, if we suppose that \eqref{eq:detail-balance-state0} is true for $k-1$, viz. 
\begin{gather}\label{eq:h-induction}
\Gamma_{-,\omega_{k-1}}Z_{k-1}\rho Z_{k-1}^*=\Gamma_{+,\omega_{k-1}}\rho P_{k}+Z^{k-1}\varphi(\rho){Z^*}^{k-1}\,.
\end{gather}
Then, on account of $0=\abs Z_k\mathcal L(\rho)\abs Z_k$ and \eqref{eq:h-induction}, one has
\begin{align*}
0=&\, \Gamma_{+,\omega_k}Z_k^*\rho Z_k+\Gamma_{-,\omega_{k-1}}\abs Z_kZ_{k-1}\rho Z_{k-1}^* \abs Z_k -\lrp{\Gamma_{-,\omega_k}+\Gamma_{+,\omega_{k-1}}}\abs Z_k\rho \abs Z_k\\
=&\,\Gamma_{+,\omega_k}Z_k^*\rho Z_k+\abs Z_kZ^{k-1}\varphi(\rho){Z^*}^{k-1}\abs Z_k -\Gamma_{-,\omega_k}\abs Z_k\rho\abs Z_k\,, 
\end{align*}
which implies
\begin{gather}\label{eq:other-form}
Z_k^*\rho Z_k=e^{\beta_k}\abs Z_k\rho\abs Z_k-\frac{1}{\Gamma_{+,\omega_k}}\abs Z_kZ^{k-1}\varphi(\rho){Z^*}^{k-1}\abs Z_k\,,
\end{gather}
wherefrom one arrives at \eqref{eq:detail-balance-state0}, since $Z_kZ_k^*=P_{k+1}$ and $Z_k^*Z_k=\abs Z_k$. 
\end{proof}

\begin{rem}\label{rm:not-supported} For a state $\rho$ and $u\in\mathcal H$, one has that $\ip{u}{\rho u}=0$ if and only if $u\in\ker \rho$. Indeed, since $\rho$ is positive, if
\begin{gather*}
0=\ip{u}{\rho u}=\no{\rho^{1/2}u}^2\,,
\end{gather*}
then $\rho^{1/2}u=0$, i.e., $\rho u=\rho^{1/2}\rho^{1/2}u=0$. The converse is immediate.
\end{rem}

Now we establish necessity and sufficiency conditions for a state  $\rho$ to be invariant in terms of its support, transport operators, and commutation with projections. Indeed, we will see that in this case $\rho$ must commute with $P_{\varphi_{0_1}}$, which implies by Remark~\ref{rm:varphi-commutes-rho} that  $\varphi(\rho)=0$.

\begin{thm}\label{th:properties-invariant-state}
A state $\rho$ is invariant if and only if it is supported in $\lrb{\vk{0_0},\varphi_{0_1},\varphi_{0_N}}^\perp$, commutes with $P_1,\dots,P_{N+1}, \abs Z_1,\dots,\abs Z_{N-1}$, and  
\begin{gather}\label{eq:detail-balance-state}
\rho P_{k+1}=e^{\beta_k}Z_k\rho Z_k^*\,,\quad k=1,\dots,N-1\,.
\end{gather}
\end{thm}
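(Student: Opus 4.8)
The plan is to prove the two implications separately, using Lemma~\ref{lem:invariant-states} as the backbone for the forward direction and a direct substitution into \eqref{eq:L-generator-rewrite} for the converse.

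\textbf{Necessity.} Suppose $\rho$ is invariant. Lemma~\ref{lem:invariant-states} already gives that $\rho$ commutes with $P_0,\dots,P_{N+1}$ and with $\abs Z_1$, that $Z_N\rho Z_N^*=0$, that $P_{\varphi_{0_1}}\rho P_{\varphi_{0_1}}=(n_1\Gamma_{+,\omega_0})^{-1}\Gamma_{-,\omega_0}Z_0\rho Z_0^*$, and that \eqref{eq:detail-balance-state0} holds. The first thing I would extract is the support condition at the extremes: since $\rho$ commutes with $P_0$ and $P_0$ is rank one onto $\{\vk{0_0}\}$, the scalar $\langle 0_0,\rho\,0_0\rangle$ is an eigenvalue; plugging into $0=P_0\mathcal L(\rho)P_0$ in \eqref{eq:L-generator-rewrite} and using $Z_0 P_0 = Z_0$, $Z_0^*Z_0 = n_1 P_0$, one finds (after canceling the genuinely dissipative terms, whose coefficients have strictly negative real part on the relevant subspace) that $\langle 0_0,\rho\,0_0\rangle=0$, hence $\vk{0_0}\in\ker\rho$ by Remark~\ref{rm:not-supported}. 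Then $P_{\varphi_{0_1}}\rho P_{\varphi_{0_1}} = c\, Z_0\rho Z_0^* = c\, n_1 P_{\varphi_{0_1}}\langle 0_0,\rho\,0_0\rangle/\ldots$ — more precisely $Z_0\rho Z_0^*$ is a multiple of $\langle 0_0,\rho 0_0\rangle$ by property (1) of Section~\ref{s2} — so $P_{\varphi_{0_1}}\rho P_{\varphi_{0_1}}=0$, i.e. $\varphi_{0_1}\in\ker\rho$. Since $\varphi_{0_1}\in\ker\rho$ we get $P_{\varphi_{0_1}}\rho = \rho P_{\varphi_{0_1}} = 0$, so $\rho$ commutes with $P_{\varphi_{0_1}}$ and $\varphi(\rho)=0$ by Remark~\ref{rm:varphi-commutes-rho}; feeding $\varphi(\rho)=0$ into \eqref{eq:detail-balance-state0} collapses it to \eqref{eq:detail-balance-state}. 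For $\varphi_{0_N}$: from $Z_N\rho Z_N^*=0$ and $Z_N Z_N^*=P_{N+1}$, the isometry property (item 3 of Section~\ref{s2}) — $Z_N$ is an isometric isomorphism of $\abs Z_N E_N$ onto $E_{N+1}$ — forces $\rho$ to vanish on $\abs Z_N E_N$; combined with the already-established commutation with $\abs Z_{N-1}$ and the relation \eqref{eq:detail-balance-state} at $k=N-1$ (which ties $\rho P_N$ to $Z_{N-1}\rho Z_{N-1}^*$), one isolates the $\varphi_{0_N}$ component; I expect the cleanest route is: $Z_N\rho Z_N^* = 0$ means $\abs Z_N\rho\abs Z_N = 0$ (since $Z_N$ restricted to $\abs Z_N E_N$ is isometric), so $\rho$ is supported in $\ker\abs Z_N = (\Span\{\varphi_{a_N}\}_{a=0}^{n_{N+1}-1})^\perp = \{\varphi_{0_N}\}^\perp$ within $E_N$ (recall $n_{N+1}=1$). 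Finally the commutation with $\abs Z_2,\dots,\abs Z_{N-1}$: use \eqref{eq:detail-balance-state} to push $\rho$'s commutation with $P_{k+1}$ and $\abs Z_k$ downward — from $\rho$ commuting with $P_{k+1}$, relation \eqref{eq:detail-balance-state} shows $Z_k\rho Z_k^*$ is the restriction of $e^{\beta_k}$-scaled $\rho$ to $E_{k+1}$; pulling back by the isometry $Z_k$ gives commutation of $\rho$ with $\abs Z_k$ on $E_k$; iterate for $k=1,\dots,N-1$. (The case $k=1$ is already in the Lemma.)

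\textbf{Sufficiency.} Conversely, assume $\rho$ satisfies all the listed conditions. Since $\rho$ is supported in $\{\vk{0_0},\varphi_{0_1},\varphi_{0_N}\}^\perp$, we have $P_0\rho=\rho P_0=0$, $P_{\varphi_{0_1}}\rho=\rho P_{\varphi_{0_1}}=0$, hence $\varphi(\rho)=0$, and $Z_N\rho Z_N^* = 0$ (because $\rho$ kills $\varphi_{0_N}=Z_N^*\vk{0_{N+1}}$, i.e. kills $\abs Z_N E_N$). Now substitute into \eqref{eq:L-generator-rewrite}: the $P_0$- and $P_{\varphi_{0_1}}$-terms in the first two lines vanish, the $\Gamma_{-,\omega_N}Z_N\rho Z_N^*$ term vanishes, $\Gamma_{+,\omega_0}Z_0^*\rho Z_0 = 0$ and $\Gamma_{-,\omega_0}Z_0\rho Z_0^*=0$ (the latter since $\rho P_0 = 0$; note $Z_0^* = \sqrt{n_1}\vk{0_0}\vb{\varphi_{0_1}}$), so one is left with, for $k=1,\dots,N-1$,
\begin{gather*}
\sum_{j=1}^{N}\lrp{\eta_{-,\omega_j}\abs Z_j + \overline\eta_{+,\omega_j}P_{j+1}}\rho + \rho\lrp{\overline\eta_{-,\omega_j}\abs Z_j + \eta_{+,\omega_j}P_{j+1}} + \sum_{k=1}^{N-1}\Gamma_{-,\omega_k}Z_k\rho Z_k^* + \Gamma_{+,\omega_k}Z_k^*\rho Z_k\,,
\end{gather*}
together with the leftover $\Gamma_{+,\omega_{N-1}}Z_{N-1}^*\rho Z_{N-1}$ piece. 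Using that $\rho$ commutes with each $\abs Z_j$ and each $P_j$, the first two sums become $\sum_j (\eta_{-,\omega_j}+\overline\eta_{-,\omega_j})\abs Z_j\rho + (\eta_{+,\omega_j}+\overline\eta_{+,\omega_j})P_{j+1}\rho = -\sum_j \Gamma_{-,\omega_j}\abs Z_j\rho - \Gamma_{+,\omega_j}P_{j+1}\rho$ (real parts only survive, and $\mathrm{Re}\,\eta_{\pm}=-\Gamma_{\pm}/2$, noting $\eta_{+,\omega_N}$ is purely imaginary so its $P_{N+1}\rho$ contribution cancels against nothing — consistent since $Z_N\rho Z_N^*=0$). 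Then \eqref{eq:detail-balance-state} rewritten as $\Gamma_{+,\omega_k}\rho P_{k+1} = \Gamma_{-,\omega_k}Z_k\rho Z_k^*$ for $k=1,\dots,N-1$, together with $Z_k^*\rho Z_k = \abs Z_k\rho$ (valid since $Z_k Z_k^* = P_{k+1}$, so $Z_k^*\rho Z_k = Z_k^*\rho P_{k+1}Z_k$, and $\rho$ commuting with $P_{k+1}$ plus \eqref{eq:detail-balance-state} gives $Z_k^*\rho Z_k = e^{\beta_k}Z_k^*Z_k\rho Z_k^*Z_k/\ldots = \abs Z_k\rho$ after a short computation), makes each term cancel in pairs: $\Gamma_{-,\omega_k}Z_k\rho Z_k^*$ against $-\Gamma_{+,\omega_k}\rho P_{k+1}$, and $\Gamma_{+,\omega_k}Z_k^*\rho Z_k = \Gamma_{+,\omega_k}\abs Z_k\rho$ against $-\Gamma_{-,\omega_k}\abs Z_k\rho$... wait, that last pairing needs $\Gamma_{+,\omega_k} = \Gamma_{-,\omega_k}$ which is false — the correct bookkeeping is that $\Gamma_{+,\omega_k}Z_k^*\rho Z_k - \Gamma_{-,\omega_k}\abs Z_k\rho = (\Gamma_{+,\omega_k}-\Gamma_{-,\omega_k})\abs Z_k\rho$ must be absorbed, so in fact one should use \eqref{eq:detail-balance-state} pulled back by $Z_k$, giving $\abs Z_k\rho = \abs Z_k\rho\abs Z_k$ relates the $E_k$-diagonal block to the $E_{k+1}$-block one level up; a careful index shift shows all terms telescope to zero. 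I would organize this as: the $E_1$-diagonal block of $\mathcal L(\rho)$ involves only $-\Gamma_{-,\omega_1}\abs Z_1\rho$ and $\Gamma_{+,\omega_1}Z_1^*\rho Z_1$... and this equals zero precisely by the $k=1$ detail-balance relation, and inductively the $E_{k+1}$-block vanishes by the $k$-th relation. I will present it level by level.

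\textbf{Main obstacle.} The delicate point is the bookkeeping in the converse: correctly matching which term of \eqref{eq:L-generator-rewrite} lands in which diagonal block $P_j \mathcal L(\rho) P_j$, and verifying that the off-diagonal blocks $P_j\mathcal L(\rho)P_\ell$ ($j\neq\ell$) vanish automatically from $\rho$ commuting with all the $P_j$ — the operators $Z_k$ shift levels by one, so $Z_k\rho Z_k^*$ and $Z_k^*\rho Z_k$ are block-diagonal given the commutation, and the remaining terms are manifestly block-diagonal. Once every block is isolated, each is killed by exactly one instance of \eqref{eq:detail-balance-state} (or by $Z_N\rho Z_N^* = 0$ at the top, and by the support condition at the bottom), so $\mathcal L(\rho)=0$. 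The forward direction's only subtlety beyond quoting the Lemma is the derivation of the three support conditions, which I handle via Remark~\ref{rm:not-supported} and the isometry properties of $Z_0$ and $Z_N$ recorded in Section~\ref{s2}.
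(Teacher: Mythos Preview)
There is a genuine gap in your forward direction: the step where you claim that $0=P_0\mathcal L(\rho)P_0$ forces $\langle 0_0,\rho\,0_0\rangle=0$ is incorrect. Computing that block from \eqref{eq:L-generator-rewrite} yields
\[
0=-n_1\Gamma_{-,\omega_0}\ip{0_0}{\rho\,0_0}P_{0_0}+n_1\Gamma_{+,\omega_0}\ip{\varphi_{0_1}}{\rho\,\varphi_{0_1}}P_{0_0}\,,
\]
because the gain term $\Gamma_{+,\omega_0}Z_0^*\rho Z_0=n_1\Gamma_{+,\omega_0}\ip{\varphi_{0_1}}{\rho\,\varphi_{0_1}}P_{0_0}$ carries a \emph{positive} coefficient. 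This is precisely the balance relation $P_{\varphi_{0_1}}\rho P_{\varphi_{0_1}}=(n_1\Gamma_{+,\omega_0})^{-1}\Gamma_{-,\omega_0}Z_0\rho Z_0^*$ already recorded in Lemma~\ref{lem:invariant-states}; it ties $\ip{0_0}{\rho\,0_0}$ to $\ip{\varphi_{0_1}}{\rho\,\varphi_{0_1}}$ but does not kill either one separately. Consequently your chain $\vk{0_0}\in\ker\rho\Rightarrow P_{\varphi_{0_1}}\rho P_{\varphi_{0_1}}=0\Rightarrow\varphi(\rho)=0\Rightarrow\eqref{eq:detail-balance-state}$ collapses at the very first link.

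The paper's proof runs in the opposite order and is substantially more involved. It first extracts $\varphi_{0_N}\in\ker\rho$ from $Z_N\rho Z_N^*=0$ (this part you have correctly), then uses $0=P_N\mathcal L(\rho)P_N$ together with \eqref{eq:detail-balance-state0} to get both $\rho P_N=e^{\beta_{N-1}}Z_{N-1}\rho Z_{N-1}^*$ and $Z^{N-1}\varphi(\rho){Z^*}^{N-1}=0$. A recursive argument through all intermediate levels (equations \eqref{eq:projectionsZk}--\eqref{eq:recursive-Z-powers}) then yields $\rho P_N=e^{\sum\beta_j}Z^{N-1}\rho{Z^*}^{N-1}$, so that ${Z^*}^{N-1}\varphi_{0_N}\in\ker\rho$. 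The key nontrivial input is that $\ip{\varphi_{0_1}}{{Z^*}^{N-1}\varphi_{0_N}}\neq0$ (via \eqref{eq:0k-m-even}); feeding the vector ${Z^*}^{N-1}\varphi_{0_N}$ into $Z^{N-1}\varphi(\rho){Z^*}^{N-1}=0$ then forces $\ip{\varphi_{0_1}}{\rho\,\varphi_{0_1}}=0$. Only \emph{after} this does the balance relation give $\ip{0_0}{\rho\,0_0}=0$, hence $\varphi(\rho)=0$ and \eqref{eq:detail-balance-state}. In short, the asymmetry $\Gamma_{+,\omega_N}=0$ at the top level is what ultimately pins down the support at level~$0$; your argument never uses it and cannot succeed without it. Your sketch of the converse and of the $\abs Z_k$-commutation is salvageable, but the core of the necessity argument needs to be rebuilt along these lines.
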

\begin{proof}
If $\rho$ is invariant then it satisfies conditions of Lemma~\ref{lem:invariant-states}, which will be used freely. So, $0=Z_N\rho Z_N^*=\ip{\varphi_{0_N}}{\rho \varphi_{0_N}}P_{0_{N+1}}$ and Remark~\ref{rm:not-supported} imply $\varphi_{0_N}\in\ker \rho$ and $\rho P_{\varphi_{0_N}}=0$. Besides, since $\abs Z_N=P_{\varphi_{0_N}}$ and $\rho$ commutes with $P_N$, then $0=\mathcal L(\rho)P_{N}$ and \eqref{eq:detail-balance-state0} imply
\begin{gather}\label{eq:varphi-transported}
\rho P_N=e^{\beta_{N-1}}Z_{N-1}\rho Z_{N-1}^*\qquad\mbox{and}\qquad Z^{N-1}\varphi(\rho){Z^*}^{N-1}=0\,.
\end{gather}
Now, for $k=2,\dots,N-1$, one computes from $0=Z_k\mathcal L(\rho)Z_k^*$ that
\begin{gather}\label{eq:projectionsZk}
\rho P_{k+1}=\lrp{e^{\beta_k}+\frac{\Gamma_{+,\omega_{k-1}}}{\Gamma_{+,\omega_k}}}Z_k\rho Z_k^*-\frac{\Gamma_{-,\omega_{k-1}}}{\Gamma_{+,\omega_k}}Z_kZ_{k-1}\rho Z_{k-1}^*Z_k^*\,.
\end{gather}
For $j=0,\dots,N-3$, we claim that 
\begin{align}\label{eq:recursive-Z-powers}
\begin{split}
&Z_{N-1}Z_{N-2}\cdots Z_{N-1-j}\rho Z_{N-1-j}^*\cdots Z_{N-2}^*Z_{N-1}^*\\&=e^{\beta_{N-2-j}}Z_{N-1}Z_{N-2}\cdots Z_{N-2-j}\rho Z_{N-2-j}^*\cdots Z_{N-2}^*Z_{N-1}^*\,.
\end{split}
\end{align}
Indeed, the left-hand side of \eqref{eq:varphi-transported} and \eqref{eq:projectionsZk} imply the case $j=0$ in \eqref{eq:recursive-Z-powers}. Thus, by induction, we may suppose that \eqref{eq:recursive-Z-powers} holds for $j-1$ and after substituting $P_{N-1-j}$ of \eqref{eq:projectionsZk} in
\begin{align*}
&Z_{N-1}\cdots Z_{N-1-j}\rho Z_{N-1-j}^*\cdots Z_{N-1}^*\\&=Z_{N-1}\cdots Z_{N-1-j}\rho P_{N-1-j} Z_{N-1-j}^*\cdots Z_{N-1}^*
\end{align*}
one obtains \eqref{eq:recursive-Z-powers}. 
Thus, we use \eqref{eq:recursive-Z-powers} recursively in the left-hand side of \eqref{eq:varphi-transported} to get
\begin{gather}
\rho P_{N}=e^{\sum_{j=1}^{N-1}\beta_j}Z^{N-1}\rho {Z^*}^{N-1}\,.
\end{gather}
Besides, since $0=\ip{\varphi_{0_N}}{\rho P_N\varphi_{0_N}}=e^{\sum_{j=1}^{N-1}\beta_j}\ip{{Z^*}^{N-1}\varphi_{0_N}}{\rho {Z^*}^{N-1}\varphi_{0_N}}$, then Remark~\ref{rm:not-supported} asserts that $ {Z^*}^{N-1}\varphi_{0_N}\in\ker\rho$. One has by \eqref{eq:0k-m-even} that 
\begin{gather*}
\ip{\varphi_{0_1}}{{Z^*}^{N-1}\varphi_{0_N}}=\ip{Z^{N-1}\varphi_{0_1}}{\varphi_{0_N}}=\ip{Z^{N-2}\vk{0_2}}{\varphi_{0_N}}\neq0\,.
\end{gather*}
Thus, the right-hand side of \eqref{eq:varphi-transported} implies $0=\ip{{Z^*}^{N-1}\varphi_{0_N}}{\varphi(\rho){Z^*}^{N-1}\varphi_{0_N}}$ and
\begin{align*}
0=&\,\Gamma_{+,\omega_0}\ip{{Z^*}^{N-1}\varphi_{0_N}}{P_{\varphi_{0_1}}\rho P_{\varphi_{0_1}}{Z^*}^{N-1}\varphi_{0_N}}\\&+\eta_{+,\omega_0}\ip{{Z^*}^{N-1}\varphi_{0_N}}{P_{\varphi_{0_1}}\rho{Z^*}^{N-1}\varphi_{0_N}}\\&+\cc\eta_{+,\omega_0}\ip{{Z^*}^{N-1}\varphi_{0_N}}{\rho P_{\varphi_{0_1}}{Z^*}^{N-1}\varphi_{0_N}}\\=&\,\Gamma_{+,\omega_0}\abs{\ip{\varphi_{0_1}}{{Z^*}^{N-1}\varphi_{0_N}}}^2\ip{\varphi_{0_1}}{\rho \varphi_{0_1}}\,,
\end{align*}
which implies $\ip{\varphi_{0_1}}{\rho \varphi_{0_1}}=0$, i.e., $\varphi_{0_1}\in\ker \rho$. Moreover, 
\begin{gather*}
0=P_{\varphi_{0_1}}\rho P_{\varphi_{0_1}}=\frac{\Gamma_{-,\omega_0}}{n_1\Gamma_{+,\omega_0}}Z_0\rho Z_0^*=\frac{\Gamma_{-,\omega_0}}{\Gamma_{+,\omega_0}}\ip{\vk{0_0}}{\rho\vk{0_0}}P_{\varphi_{0_1}}
\end{gather*}
fulfills $\ip{\vk{0_0}}{\rho\vk{0_0}}=0$ and $\vk{0_0}\in\ker\rho$. So, $\rho$ has support in $\lrb{\vk{0_0},\varphi_{0_1},\varphi_{0_N}}^\perp$. Note that $\varphi(\rho)=0$, which from \eqref{eq:detail-balance-state0} one obtains \eqref{eq:detail-balance-state}. 

It remains to prove that $\rho$ commutes with $\abs Z_j$, for $j=2,\dots,N-1$. One readily checks that $0=\abs Z_j \mathcal L(\rho)\abs Z^\perp_j$ and \eqref{eq:detail-balance-state} gives $\abs Z_j \rho \abs Z^\perp_j=0$, whence analogously to \eqref{eq:comuutes-Z1},  the assertion follows. 

Conversely, note that \eqref{eq:detail-balance-state} is equivalent to
\begin{gather}\label{eq:detail-balance-state2}
Z_k^*\rho Z_k=e^{\beta_k} \rho \abs Z_k\,,\quad k=1,\dots,N-1\,.
\end{gather}
Hence, using the commutation conditions, the support of $\rho$, and replacing \eqref{eq:detail-balance-state} and \eqref{eq:detail-balance-state2} in \eqref{eq:L-generator-rewrite}, one gets that $\mathcal L(\rho)=0$.
\end{proof}

We have mentioned in the proof of Theorem~\ref{th:properties-invariant-state} that \eqref{eq:detail-balance-state} and \eqref{eq:detail-balance-state2} are equivalent. The following generalizes these conditions.

\begin{rem}\label{eq:detail-balance} Condition \eqref{eq:detail-balance-state} in Theorem~\ref{th:properties-invariant-state} can be replaced by 
\begin{align}\label{eq:invariant-property-DB}
\rho Z_k=e^{\beta_k} Z_k\rho\,,\quad k=1,\dots,N-1\,.
\end{align}
Indeed, by \eqref{eq:detail-balance-state2},
$\rho Z_k=\rho P_{k+1}Z_k=Z_kZ_k^*\rho Z_k=e^{\beta_k}Z_k\abs Z_k\rho=e^{\beta_k}Z_k\rho$.
\end{rem}
By virtue of \eqref{eq:Gamma-Krauss} and  \eqref{eq:invariant-property-DB}, it follows that 
\begin{gather*}
\rho L_{-,\omega_k}=e^{\beta_k} L_{-,\omega_k}\rho\quad\mbox{and}\quad 
L_{+,\omega_k}\rho=e^{\beta_k} \rho L_{+,\omega_k}\,,\quad k=1,\dots,N-1
\end{gather*}
which is known as \emph{detailed balance} \cite{MR3666729} (c.f. \cite[Sect.\,3.2]{MR4107240}).

It is convenient to consider the \emph{interaction-free} subspace
 \begin{gather*}
 W\colonequals \bigcap_{k=0}^N\lrp{\ker L_{\pm, \omega_k} \cap \ker L^*_{\pm, \omega_k}}\,,
 \end{gather*}
which satisfies (cf. \cite{MR4192523}) 
\begin{align}\label{eq:interaction-freeW}
W=\bigcap_{k=0}^N\ker Z_k\cap \ker Z_k^*=P_1\ker Z_1=\Span\lrb{\varphi_{a_1}}_{a=n_{2}}^{n_{1}-1}\,.
\end{align}

\begin{rem}\label{rm:trivial-is}
Taking into account \eqref{eq:interaction-freeW} and \eqref{eq:L-generator-rewrite} one has that any state supported in $W$ is invariant. Besides, one readily checks that $P_{N+1}$ is  an invariant state as well.
\end{rem} 

The following shows a characterization of the support of invariant states.

\begin{cor}\label{cor:convex-decomposition} 
A state $\rho$ is invariant if and only if there exist invariant states $\eta,\tau$ supported in $W, W^\perp\ominus\{\vk{0_{N+1}}\}$, respectively, and scalars $\alpha, \beta, \lambda\geq0$, with $\alpha+\beta+\lambda=1$,
 such that 
\begin{gather}\label{eq:convex-decomposition}
\rho=\alpha \tau+\beta\eta+\lambda P_{N+1}\,.
\end{gather}
\end{cor}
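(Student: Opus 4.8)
The plan is to prove Corollary~\ref{cor:convex-decomposition} as a relatively routine consequence of Theorem~\ref{th:properties-invariant-state}, using the orthogonal decomposition of the carrier of an invariant state induced by the interaction-free subspace $W$ and the one-dimensional space $\{\vk{0_{N+1}}\}$.

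First I would establish the easy direction: assuming we have invariant states $\tau,\eta$ supported in $W^\perp\ominus\{\vk{0_{N+1}}\}$ and $W$, together with the invariant state $P_{N+1}$ (which is invariant by Remark~\ref{rm:trivial-is}), any convex combination $\rho=\alpha\tau+\beta\eta+\lambda P_{N+1}$ lies in $\ker\mathcal L$ by linearity of $\mathcal L$, and is a state since $\alpha+\beta+\lambda=1$ and each summand is a state. For the nontrivial direction, let $\rho$ be an invariant state. By Theorem~\ref{th:properties-invariant-state}, $\rho$ is supported in $\{\vk{0_0},\varphi_{0_1},\varphi_{0_N}\}^\perp$ and commutes with all $P_k$. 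I would first note that $\rho$ commutes with $P_{N+1}$, so we may split $\rho = \mu\, P_{N+1} + \rho'$ where $\mu = \ip{0_{N+1}}{\rho\,0_{N+1}}\ge 0$ and $\rho' := \rho - \mu P_{N+1}$ is a positive operator supported in $\{\vk{0_{N+1}}\}^\perp$; one must check $\rho'\ge 0$, which follows because $\rho$ commutes with the rank-one projection $P_{N+1}$ and hence is block-diagonal with respect to $E_{N+1}\oplus E_{N+1}^\perp$. Moreover $\rho'$ is invariant: since $P_{N+1}$ is invariant, $\mathcal L(\rho')=\mathcal L(\rho)-\mu\mathcal L(P_{N+1})=0$.

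Next I would decompose $\rho'$ according to $\mathcal H = W \oplus (W^\perp)$. The key point, using \eqref{eq:interaction-freeW}, is that $W = \Span\{\varphi_{a_1}\}_{a=n_2}^{n_1-1} = \abs Z_1^\perp E_1 \subset E_1$, so $W \subset E_1$ and $P_W = \abs Z_1^\perp$ is a subprojection of $P_1$. Since $\rho'$ commutes with both $P_1$ and $\abs Z_1$ (hence with $\abs Z_1^\perp = P_1 - \abs Z_1$), it commutes with $P_W$, and therefore decomposes as $\rho' = \nu\, \rho_W + \kappa\, \rho_{W^\perp}$ where $\rho_W := P_W\rho' P_W / \nu$ (if $\nu := \tr{P_W \rho'} > 0$) is a state supported in $W$, $\rho_{W^\perp} := (1-P_W)\rho'(1-P_W)/\kappa$ is a state supported in $W^\perp\ominus\{\vk{0_{N+1}}\}$, and $\nu+\kappa = \tr{\rho'} = 1-\mu$. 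Again both pieces are invariant: $\rho_W$ is invariant by Remark~\ref{rm:trivial-is} (any state supported in $W$ is invariant), and hence $\mathcal L(\rho_{W^\perp}) = (\mathcal L(\rho') - \nu\mathcal L(\rho_W))/\kappa = 0$. Setting $\eta := \rho_W$, $\tau := \rho_{W^\perp}$, $\beta := \nu$, $\alpha := \kappa$, $\lambda := \mu$ gives \eqref{eq:convex-decomposition}, with the degenerate cases (some coefficient zero) handled by simply omitting that term and choosing an arbitrary invariant state of the right support for the label.

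The main obstacle — though it is more bookkeeping than depth — is verifying that the off-diagonal blocks genuinely vanish, i.e. that $\rho'$ really is the orthogonal direct sum of its $W$ and $W^\perp$ parts rather than merely having those diagonal blocks; this is exactly where the commutation of $\rho$ with $\abs Z_1$ (equivalently with $P_W$) from Theorem~\ref{th:properties-invariant-state} is essential, since without it $P_W \rho' (1-P_W)$ need not be zero and the convex decomposition would fail. A secondary point to be careful about is confirming that $\rho_{W^\perp}$ is supported in $W^\perp \ominus \{\vk{0_{N+1}}\}$ and not all of $W^\perp$: this uses the already-established facts that $\vk{0_{N+1}}\in\ker\rho$ (built into the splitting off of $P_{N+1}$) and that $\rho$ kills $\vk{0_0}$, $\varphi_{0_1}$, $\varphi_{0_N}$; in particular $\rho_{W^\perp}$ annihilates $\vk{0_{N+1}}$ by construction. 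Everything else is a direct application of the structural conditions in Theorem~\ref{th:properties-invariant-state} together with linearity of $\mathcal L$.
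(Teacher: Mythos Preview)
Your proposal is correct and follows essentially the same route as the paper: both arguments invoke Theorem~\ref{th:properties-invariant-state} to obtain commutation of $\rho$ with $P_{N+1}$, $P_1$ and $\abs Z_1$ (hence with $P_W=\abs Z_1^\perp$), use this to block-decompose $\rho$ along $W\oplus\bigl(W^\perp\ominus\{\vk{0_{N+1}}\}\bigr)\oplus\C\vk{0_{N+1}}$, and then deduce invariance of the $W^\perp\ominus\{\vk{0_{N+1}}\}$ component by subtraction. One small slip: near the end you write $\vk{0_{N+1}}\in\ker\rho$, but what you actually have (and need) is $\vk{0_{N+1}}\in\ker\rho'$; this does not affect the argument.
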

\begin{proof}
If $\rho$ is invariant then by Theorem~\ref{th:properties-invariant-state}, it commutes with $P_{N+1},P_1$ and $\abs Z_1$, which implies the commutation with $\abs Z_1^\perp$. Thereby, $\ran \abs Z_1^\perp=W$ and $\ran P_{N+1}=\C\vk{0_{N+1}}$ reduce $\rho$ and they are orthogonal. Hence, 
\begin{gather*}
\rho=\rho\rE{W}\oplus\rho\rE{W^\perp\ominus\{\vk{0_{N+1}}\}}\oplus\rho\rE{\C\vk{0_{N+1}}}\,,
\end{gather*}
which by a suitable normalization, one yields \eqref{eq:convex-decomposition}. Note that $\tau$ is invariant since $\rho,\eta$ and $P_{N+1}$ are. The converse assertion is straightforward. 
\end{proof}
Corollary~\ref{cor:convex-decomposition} means that any invariant state is decomposed into a convex combination of invariant states supported in $W, W^\perp\ominus\{\vk{0_{N+1}}\}$ and $\C\vk{0_{N+1}}$\,. To continue describing the support of invariant states, it is useful to consider the following subspace:
\begin{align}\label{eq:calV-harmonic}
\begin{split}
 V\colonequals\{ &Z^n\vk{0_0}, {Z^*}^n\vk{0_{N+1}},{Z^*}^{s_m}\varphi_{0_{2m+1}}\,: \\ & 0\leq n\leq N, \, 1\leq m  \leq (N-1)/2,\, 1\leq s_m\leq 2m{\}}^\perp\,.
\end{split}
\end{align}
\begin{cor}\label{cor:support-invariant-state} Any invariant state is supported in $V\oplus\C\vk{0_{N+1}}$\,.
\end{cor}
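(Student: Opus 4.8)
The plan is to prove the equivalent assertion that $(V\oplus\C\vk{0_{N+1}})^\perp\subseteq\ker\rho$ for every invariant state $\rho$; this is equivalent because $\rho\geq0$, so ${\rm supp}\,\rho=(\ker\rho)^\perp$. Note first that $\vk{0_{N+1}}={Z^*}^0\vk{0_{N+1}}$ is one of the vectors defining $V$ in \eqref{eq:calV-harmonic}, so $\vk{0_{N+1}}\in V^\perp$ and the sum $V\oplus\C\vk{0_{N+1}}$ is orthogonal. Every other vector in that defining list --- namely $Z^n\vk{0_0}$ with $0\le n\le N$, ${Z^*}^n\vk{0_{N+1}}$ with $1\le n\le N$, and ${Z^*}^{s_m}\varphi_{0_{2m+1}}$ with $1\le m\le(N-1)/2$ and $1\le s_m\le 2m$ --- lies in $\bigoplus_{k=0}^{N}E_k$ and is therefore orthogonal to $\vk{0_{N+1}}\in E_{N+1}$. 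Hence, writing $M$ for the span of this sublist, one has $V^\perp=\C\vk{0_{N+1}}\oplus M$ and consequently $(V\oplus\C\vk{0_{N+1}})^\perp=V^\perp\cap(\C\vk{0_{N+1}})^\perp=M$. So it is enough to show that each generator of $M$ lies in $\ker\rho$.

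The working tool is the following. By Theorem~\ref{th:properties-invariant-state} and Remark~\ref{eq:detail-balance} an invariant state obeys $\rho Z_k=e^{\beta_k}Z_k\rho$ for $k=1,\dots,N-1$, and taking adjoints (using $\rho=\rho^*$ and $\beta_k\in\R$) also $\rho Z_k^*=e^{-\beta_k}Z_k^*\rho$. Thus, if $u\in\ker\rho$ then $\rho Z_k u=e^{\beta_k}Z_k\rho u=0$ and $\rho Z_k^* u=e^{-\beta_k}Z_k^*\rho u=0$; in other words, $\ker\rho$ is invariant under $Z_k$ and $Z_k^*$ for each interior index $k=1,\dots,N-1$. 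The seeds come straight from Theorem~\ref{th:properties-invariant-state}: $\vk{0_0}$, $\varphi_{0_1}$ and $\varphi_{0_N}$ all lie in $\ker\rho$.

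I would then propagate. Since $Z_0\vk{0_0}=\sqrt{n_1}\,\varphi_{0_1}$, the vectors $Z^0\vk{0_0}=\vk{0_0}$ and $Z^1\vk{0_0}=\sqrt{n_1}\,\varphi_{0_1}$ are seeds, while for $2\le n\le N$ one has $Z^n\vk{0_0}=Z_{n-1}\cdots Z_1\bigl(\sqrt{n_1}\,\varphi_{0_1}\bigr)$, whose indices all lie in $\{1,\dots,N-1\}$, so $Z^n\vk{0_0}\in\ker\rho$. Dually, $Z^*\vk{0_{N+1}}=Z_N^*\vk{0_{N+1}}=\varphi_{0_N}\in\ker\rho$, and for $2\le n\le N$ the vector ${Z^*}^n\vk{0_{N+1}}=Z_{N-n+1}^*\cdots Z_{N-1}^*\varphi_{0_N}$ has all indices in $\{N-n+1,\dots,N-1\}\subseteq\{1,\dots,N-1\}$, hence lies in $\ker\rho$. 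For the third family I would first observe that $\varphi_{0_{2m+1}}\in\ker\rho$ whenever $1\le m\le(N-1)/2$: if $2m+1=N$ this is the seed $\varphi_{0_N}$, while if $2m+1\le N-1$ then by Definition~\ref{def:transition-operator}, \eqref{eq:transition-on-bases} and \eqref{eq:0k-m-even} the vector $Z^{2m+1}\vk{0_0}=\sqrt{n_1}\,Z^{2m-1}\vk{0_2}$ is a nonzero scalar multiple of $\varphi_{0_{2m+1}}$ and lies in $\ker\rho$ by the previous step. Then ${Z^*}^{s_m}\varphi_{0_{2m+1}}=Z_{2m+1-s_m}^*\cdots Z_{2m}^*\varphi_{0_{2m+1}}$ has all indices in $\{2m+1-s_m,\dots,2m\}\subseteq\{1,\dots,N-1\}$ (here $2m\le N-1$), so it lies in $\ker\rho$ for $1\le s_m\le 2m$. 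This exhausts the generators of $M$, and the corollary follows.

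The one genuinely delicate point is the index bookkeeping: the detailed-balance relation --- and hence the invariance of $\ker\rho$ under transition operators --- holds only for the interior indices $1\le k\le N-1$, so one must check that every transport step needed to reach a generator of $M$ remains among levels $1,\dots,N$ and never calls on $Z_0^*$ or $Z_N$. This is exactly why the bounds $0\le n\le N$, $1\le m\le(N-1)/2$ and $1\le s_m\le 2m$ occur in \eqref{eq:calV-harmonic}; everything else is the explicit action of $Z$ and $Z^*$ on the canonical and entangled bases from Section~\ref{s2}.
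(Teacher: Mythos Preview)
Your proof is correct and follows essentially the same approach as the paper's: both use the seeds $\vk{0_0},\varphi_{0_1},\varphi_{0_N}\in\ker\rho$ from Theorem~\ref{th:properties-invariant-state} and then propagate through $\ker\rho$ via the detailed-balance relation of Remark~\ref{eq:detail-balance}. Your version bypasses the preliminary reduction through Corollary~\ref{cor:convex-decomposition} by directly identifying $(V\oplus\C\vk{0_{N+1}})^\perp$ and is more explicit about the index bookkeeping, but the core argument is identical.
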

\begin{proof}
By virtue of Corollary~\ref{cor:convex-decomposition}, it suffices to show that if an invariant state $\rho$ is supported in $\{\vk{0_{N+1}}\}^\perp$ then so is in $V$. In this fashion, one has from Theorem~\ref{th:properties-invariant-state} that $\rho$ has support in $\lrb{\vk{0_0},\vk{0_{N+1}},\varphi_{0_1},\varphi_{0_N}}^\perp$ and due to \eqref{eq:invariant-property-DB}, there exists $\alpha_n>0$ such that 
\begin{gather*}
\rho Z^n\vk{0_0}=\rho Z^{n-1}\varphi_{{0_1}}=\alpha_nZ^{n-1}\rho\varphi_{{0_1}}=0\,,\qquad n=1,\dots,N\,.\end{gather*} Analogously, $\rho Z^{*n}\vk{0_{N+1}}=0$, since $\varphi_{0_N}=Z^*\vk{0_{N+1}}$. Now, from \eqref{eq:0k-m-even}, there exists $\alpha_m>0$ such that $Z^{2m}\varphi_{0_1}=Z^{2m-1}\vk{0_2}=\alpha_m\varphi_{0_{2m+1}}$. Thereby, again by \eqref{eq:invariant-property-DB}, it follows that $\rho{Z^*}^{s_m}\varphi_{0_{2m+1}}=\alpha_{s_m}{Z^*}^{s_m}Z^{2m}\rho\varphi_{0_1}=0$, with $\alpha_{sm}>0$, as required.
\end{proof}

Let us denote 
\begin{gather*}
V_1\colonequals P_1V=P_1\mathcal H\ominus\{\varphi_{0_1},Z^{*N-1}\varphi_{0_N}\}\,.
\end{gather*}
So, Corollary~\ref{cor:support-invariant-state} implies that $W\subset V_1$, since any state supported in $W\subset P_1\mathcal H$ is invariant. The following result is adapted from \cite[Ths.\,4 and 5]{MR4192523}.
\begin{lem}\label{rm:IS-inVmW}
Any state $\rho$ supported in $V\ominus W$ is invariant if and only if there exists a unique state $\tau$ supported in $V_1\ominus W$ such that
\begin{gather}\label{eq:characterisation-invariant-state-level1}
\rho=c_\rho \sum_{n=0}^{N-1}e^{\sum_{j=0}^{n}\beta_j}Z^n\tau Z^{*n}\,,\qquad (\beta_0=0)
\end{gather}
where $c_\rho=\tr{\rho\abs Z_1}$. In such a case one has that $\ran\rho={V}\ominus{W}$ if and only if $\ran \tau={V_1}\ominus{W}$ 
\end{lem}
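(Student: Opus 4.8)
The plan is to treat the two implications separately, passing between ``$\rho$ is invariant'' and the algebraic conditions of Theorem~\ref{th:properties-invariant-state}, and then to read the range statement off the same computation. Most of the work lies in the sufficiency direction, in checking commutation with the $\abs Z_k$.

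\emph{Necessity.} If $\rho$ (supported in $V\ominus W$) is invariant, then by Theorem~\ref{th:properties-invariant-state} it commutes with $P_1,\dots,P_{N+1}$, so $\rho=\bigoplus_{k=1}^{N}P_k\rho P_k$ since $V\ominus W\subseteq E_1\oplus\cdots\oplus E_N$, and $\rho P_{k+1}=e^{\beta_k}Z_k\rho Z_k^{*}$. Because $Z_k=Z_kP_k$ and $[\rho,P_{k+1}]=0$, this reads $P_{k+1}\rho P_{k+1}=e^{\beta_k}Z_k(P_k\rho P_k)Z_k^{*}$; iterating, with $Z_kZ_{k-1}\cdots Z_1=Z^{k}$ on $E_1$ and $\beta_0=0$, gives $P_{k+1}\rho P_{k+1}=e^{\sum_{j=0}^{k}\beta_j}Z^{k}(P_1\rho P_1)Z^{*k}$, so summing over $k$ produces \eqref{eq:characterisation-invariant-state-level1} with $\tau\colonequals c_\rho^{-1}P_1\rho P_1$, $c_\rho\colonequals\tr{P_1\rho P_1}$. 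Here $c_\rho>0$ (otherwise $\rho=0$), $c_\rho=\tr{\rho\abs Z_1}$ because $\ran\abs Z_1^{\perp}=W\subseteq\ker\rho$ forces $\rho P_1=\rho\abs Z_1$, and $\tau$ is a state supported in $P_1(V\ominus W)=V_1\ominus W$ (using $W\subseteq E_1$). Uniqueness is free: compressing \eqref{eq:characterisation-invariant-state-level1} by $P_1$ kills every $n\ge1$ summand, as $Z$ raises levels by one and hence $Z^{n}\tau Z^{*n}$ is supported in $E_{n+1}$, leaving $P_1\rho P_1=c_\rho\tau$.

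\emph{Sufficiency.} Conversely, take a state $\rho$ supported in $V\ominus W$ of the form \eqref{eq:characterisation-invariant-state-level1} with $\tau$ a state supported in $V_1\ominus W$, and check the hypotheses of the ``if'' part of Theorem~\ref{th:properties-invariant-state}. Support in $\{\vk{0_0},\varphi_{0_1},\varphi_{0_N}\}^{\perp}$ is automatic, since $V\ominus W\subseteq V$ and these vectors are scalar multiples of $Z^{0}\vk{0_0}$, $Z\vk{0_0}$ and $Z^{*}\vk{0_{N+1}}$, all in $V^{\perp}$. Since $Z^{n}\tau Z^{*n}$ is supported in $E_{n+1}$, $\rho$ is block diagonal, hence commutes with every $P_k$, and $\rho P_{k+1}=c_\rho e^{\sum_{j=0}^{k}\beta_j}Z^{k}\tau Z^{*k}$; a short computation ($Z_kZ^{k-1}=Z^{k}$ and $Z^{*k-1}Z_k^{*}=Z^{*k}$ on the relevant levels) then gives $Z_k\rho Z_k^{*}=e^{-\beta_k}\rho P_{k+1}$, which is \eqref{eq:detail-balance-state}. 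The remaining --- and genuinely delicate --- condition is commutation of $\rho$ with $\abs Z_k$, $k=1,\dots,N-1$. For $k=1$ it is $W=\ran\abs Z_1^{\perp}\subseteq\ker\rho$. For $k\ge2$ I would extract it from the hypothesis $\ran\rho\subseteq V$ together with the transport form of $\rho$, following \cite[Ths.\,4 and 5]{MR4192523}: the key ingredient, read off from the transport formulas \eqref{eq:0k-m-even} and \eqref{eq:transition-on-bases}, is that $\abs Z_k$ stabilizes $V^{\perp}\cap E_k$ --- that space is spanned by $Z^{k}\vk{0_0}$, $Z^{*(N+1-k)}\vk{0_{N+1}}$ and the $Z^{*(2m+1-k)}\varphi_{0_{2m+1}}$ with $k\le 2m\le N-1$, and all of these lie in $\abs Z_kE_k$ (hence are fixed by $\abs Z_k$) except $Z^{k}\vk{0_0}\propto\vk{0_k}$ for $k$ even, for which $\abs Z_k\vk{0_k}=(n_{k+1}/n_k)^{1/2}Z^{*}\varphi_{0_{k+1}}$ is again one of the listed generators --- so that $\abs Z_k$ commutes with the projection onto $P_kV$. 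Combined with $\ran(\rho P_k)\subseteq P_k(\ran\rho)\subseteq P_kV$ and the explicit form of $\rho P_k$, this gives $[\rho P_k,\abs Z_k]=0$ and hence $[\rho,\abs Z_k]=0$; Theorem~\ref{th:properties-invariant-state} then yields $\mathcal L(\rho)=0$.

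\emph{Ranges and main obstacle.} From the necessity computation, $\ran(\rho P_{n+1})=Z^{n}({\rm supp}\,\tau)$ for each $n$, so $\ran\rho=\bigoplus_{n=0}^{N-1}Z^{n}({\rm supp}\,\tau)$; taking $n=0$ gives $\ran\rho=V\ominus W\Rightarrow{\rm supp}\,\tau=V_1\ominus W$, and conversely ${\rm supp}\,\tau=V_1\ominus W$ forces $Z^{n}(V_1\ominus W)\subseteq P_{n+1}V=P_{n+1}(V\ominus W)$ (from $(Z^{n})^{*}(V^{\perp}\cap E_{n+1})\subseteq V^{\perp}\cap E_1$, again via \eqref{eq:0k-m-even}), and equality follows by a dimension count. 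The step I expect to be hardest is the interaction of the transport operator with the subspace $V$ --- concretely, that $\abs Z_k$ stabilizes $V^{\perp}\cap E_k$ (used for the commutation with $\abs Z_k$) and the analogous $Z$-transport identities for $V$ (used for the range statement); everything else is a level-by-level bookkeeping of the detailed-balance relation \eqref{eq:detail-balance-state} and of the level grading.
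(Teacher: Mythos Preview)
The paper does not prove this lemma; it is adapted from \cite[Ths.\,4 and 5]{MR4192523}, so there is no in-paper argument to compare against. Your necessity direction is correct: Theorem~\ref{th:properties-invariant-state} supplies commutation with the $P_k$ and the detailed-balance relation, iterating the latter yields \eqref{eq:characterisation-invariant-state-level1} with $\tau=c_\rho^{-1}P_1\rho P_1$, and the identification $c_\rho=\tr{\rho\abs Z_1}$ and the uniqueness of $\tau$ are exactly as you say.

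The gap is in the sufficiency step you yourself flag as delicate: commutation of $\rho$ with $\abs Z_k$ for $2\le k\le N-1$. Your verification that $\abs Z_k$ stabilizes $V^\perp\cap E_k$ (including the exceptional even-$k$ case) is correct, and hence $\abs Z_k$ commutes with $P_{V_k}$. But that is strictly weaker than what is needed: you must show $\abs Z_k$ commutes with $\rho P_k$ itself, not merely with the projection onto its support. Take $k=2$: since $V_1\ominus W\subseteq\abs Z_1E_1$, the map $Z_1\colon V_1\ominus W\to V_2$ is an isometric bijection (Lemma~\ref{lem:properties-trasport-operator}), so as $\tau$ runs over all states on $V_1\ominus W$ the block $\rho P_2=c_\rho e^{\beta_1}Z_1\tau Z_1^{*}$ runs over \emph{all} states on $V_2$; hence $[\rho P_2,\abs Z_2]=0$ for every such $\tau$ would force $\abs Z_2$ to be scalar on $V_2$. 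Yet by your own stabilization result $\abs Z_2$ restricts to a projection on $V_2$ whose rank equals $\dim V_3$ (again Lemma~\ref{lem:properties-trasport-operator}), and for instance when $N=3$, $n_3\ge2$ and $n_2\ge n_3+2$ this rank is strictly between $0$ and $\dim V_2$. So ``$[P_{V_k},\abs Z_k]=0$ together with $\ran(\rho P_k)\subseteq V_k$'' does \emph{not} give $[\rho P_k,\abs Z_k]=0$; closing this requires genuine input from \cite{MR4192523} that you have not reproduced. Your range argument is fine once sufficiency is in hand.
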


Lemma~\ref{rm:IS-inVmW} asserts that there is a one-to-one correspondence between the states supported in $V_1\ominus W$ and the invariant states supported in $V\ominus W$. Besides, the number $c_\rho$ in \eqref{eq:characterisation-invariant-state-level1} acts as a normalization constant.

The following theorem gives a general structure of invariant states.

\begin{thm}\label{th:convex-decomposition-IS}
A state $\rho$ is invariant if and only if there exist states $\eta,\tau$ supported in $W,V_1\ominus W$, respectively, and $\alpha,\beta,\lambda\geq0$, with $\alpha+\beta+\lambda=1$, such that 
\begin{gather}\label{eq:convex-decomposition-IS}
\rho=\alpha c\sum_{n=0}^{N-1}e^{\sum_{j=0}^{n}\beta_j}Z^n\tau Z^{*n}+\beta\eta+\lambda P_{N+1}\,.\qquad (\beta_0=0)
\end{gather}
where $c=\alpha^{-1}\tr{\rho\abs Z_1}$, when $\alpha\neq0$. Besides, $\ran\rho P_{V\ominus W}={V}\ominus{W}$ if and only if $\ran \tau={V_1}\ominus{W}$.
\end{thm}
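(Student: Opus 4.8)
The plan is to combine the convex decomposition of Corollary~\ref{cor:convex-decomposition}, the support refinement of Corollary~\ref{cor:support-invariant-state}, and the explicit transport parametrization of Lemma~\ref{rm:IS-inVmW}, and then track the normalization constant carefully. First I would treat the forward implication. Assume $\rho$ is invariant. By Corollary~\ref{cor:convex-decomposition} there are invariant states $\tau'$ supported in $W^\perp\ominus\{\vk{0_{N+1}}\}$ and $\eta$ supported in $W$, and scalars $\alpha,\beta,\lambda\geq 0$ with $\alpha+\beta+\lambda=1$, such that $\rho=\alpha\tau'+\beta\eta+\lambda P_{N+1}$. By Corollary~\ref{cor:support-invariant-state}, $\tau'$ is actually supported in $V\ominus W$ (it is supported in $V\oplus\C\vk{0_{N+1}}$ and in $W^\perp\ominus\{\vk{0_{N+1}}\}$, and one checks $W\subset V$, so the support lies in $V\ominus W$; if $\alpha=0$ this clause is vacuous). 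Applying Lemma~\ref{rm:IS-inVmW} to $\tau'$ yields a unique state $\tau$ supported in $V_1\ominus W$ with $\tau'=c_{\tau'}\sum_{n=0}^{N-1}e^{\sum_{j=0}^{n}\beta_j}Z^n\tau Z^{*n}$, where $c_{\tau'}=\tr{\tau'\abs Z_1}$. Substituting gives \eqref{eq:convex-decomposition-IS} with $c=\alpha c_{\tau'}$, and since $\rho\abs Z_1=\alpha\tau'\abs Z_1$ (because $\eta\abs Z_1=0$ as $W=\ran\abs Z_1^\perp$ and $P_{N+1}\abs Z_1=0$), we get $\tr{\rho\abs Z_1}=\alpha\,\tr{\tau'\abs Z_1}=\alpha c_{\tau'}=c$, hence $c=\alpha^{-1}\tr{\rho\abs Z_1}$ when $\alpha\neq0$, as claimed.

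For the converse, suppose $\rho$ has the form \eqref{eq:convex-decomposition-IS}. The state $P_{N+1}$ is invariant by Remark~\ref{rm:trivial-is}, $\eta$ is invariant since it is supported in $W$ (again Remark~\ref{rm:trivial-is}), and the sum $c\sum_{n=0}^{N-1}e^{\sum_{j=0}^{n}\beta_j}Z^n\tau Z^{*n}$ is a state supported in $V\ominus W$ which is invariant by the ``if'' direction of Lemma~\ref{rm:IS-inVmW}. A convex combination of invariant states is invariant, so $\rho$ is invariant.

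Finally, for the rank statement I would observe that the three summands in \eqref{eq:convex-decomposition-IS} have mutually orthogonal ranges contained in $V\ominus W$, $W$, and $\C\vk{0_{N+1}}$ respectively: the middle term lives in $W=\ran\abs Z_1^\perp$, the last in $\ran P_{N+1}$, and the first in $V\ominus W\subset W^\perp\ominus\{\vk{0_{N+1}}\}$ by the support part of Lemma~\ref{rm:IS-inVmW}. Hence $\rho P_{V\ominus W}=\alpha c\sum_{n=0}^{N-1}e^{\sum_{j=0}^{n}\beta_j}Z^n\tau Z^{*n}$ (a positive multiple of that sum), so $\ran\bigl(\rho P_{V\ominus W}\bigr)$ equals the range of $\sum_{n=0}^{N-1}e^{\sum_{j=0}^{n}\beta_j}Z^n\tau Z^{*n}$, which by the last assertion of Lemma~\ref{rm:IS-inVmW} is all of $V\ominus W$ precisely when $\ran\tau=V_1\ominus W$.

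The main obstacle I anticipate is bookkeeping with the normalization constant $c$ and the degenerate cases $\alpha=0$ (so that the first summand vanishes and $c$ is undefined) and the interplay of which projections annihilate $\eta$ and $P_{N+1}$; once one confirms $\eta\abs Z_1=0$ and $P_{N+1}\abs Z_1=0$, the identification $c=\alpha^{-1}\tr{\rho\abs Z_1}$ and the orthogonality of ranges follow cleanly, and everything else is a direct assembly of the three cited corollaries and lemma.
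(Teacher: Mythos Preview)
Your approach is exactly the paper's: combine Corollary~\ref{cor:convex-decomposition}, Corollary~\ref{cor:support-invariant-state}, and Lemma~\ref{rm:IS-inVmW}, then read off the normalization and the range statement. The only issue is the arithmetic slip you anticipated: substituting $\tau'=c_{\tau'}\sum_n e^{\sum_j\beta_j}Z^n\tau Z^{*n}$ into $\rho=\alpha\tau'+\cdots$ gives $\rho=\alpha c_{\tau'}\sum_n(\cdots)+\cdots$, so matching with \eqref{eq:convex-decomposition-IS} yields $c=c_{\tau'}$, not $c=\alpha c_{\tau'}$. With the correct identification, $\tr{\rho\abs Z_1}=\alpha\,\tr{\tau'\abs Z_1}=\alpha c_{\tau'}=\alpha c$, and hence $c=\alpha^{-1}\tr{\rho\abs Z_1}$ as claimed; your chain ``$\alpha c_{\tau'}=c$, hence $c=\alpha^{-1}\tr{\rho\abs Z_1}$'' is internally inconsistent even though it lands on the right formula. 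Once this is fixed, the forward direction, the converse, and the range argument all go through exactly as in the paper.
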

\begin{proof}
If $\rho$ is invariant then by Corollaries~\ref{cor:convex-decomposition} and \ref{cor:support-invariant-state},  there exist invariant states $\hat\rho,\eta$ supported in $V\ominus W,W$, respectively, and scalars $\alpha, \beta, \lambda\geq0$, with $\alpha+\beta+\lambda=1$, such that $\rho=\alpha\hat\rho+\beta\eta+\lambda P_{N+1}$. Thus, since $\hat\rho$ satisfies Lemma~\ref{rm:IS-inVmW}, one arrives at \eqref{eq:convex-decomposition-IS}. If $\alpha\neq0$, then $\alpha^{-1}\tr{\rho\abs Z_1}=\alpha^{-1}\tr{\alpha c\tau}=c$. The converse readily follows by Corollary~\ref{cor:convex-decomposition} and Lemma~\ref{rm:IS-inVmW}. Also,  Lemma~\ref{rm:IS-inVmW} implies that $\ran\rho P_{V\ominus W}=\ran\hat\rho=V\ominus W$ if and only if $\ran\tau=V_1\ominus W$. 
\end{proof}

\begin{rem} There is no invariant state supported in $\lrp{V_1\oplus\C\vk{0_{N+1}}}^\perp$, since otherwise, $\beta=\lambda=0$ and $\tau=0$ in \eqref{eq:convex-decomposition-IS}, i.e., $\rho=0$, a contradiction.
\end{rem}

Recall that a state is said to be \emph{extremal} if it cannot be decomposed as a non-trivial convex combination of two different states. On the other hand, a state is called \emph{invariant-extremal} if it is invariant and cannot be represented as a non-trivial convex combination of two different invariant states.

\begin{rem}\label{rm:invariant-extremal-states} Clearly, $P_{0_{N+1}}$ is an invariant-extremal state. Besides, a state $\rho$ supported in $W$ is invariant-extremal if an only if there exists a unit vector $w\in W$, such that $\rho=\vk w\vb w$, i.e., $\rho$ is a pure state. Furthermore, an invariant state $\rho$ supported in $V\ominus W$ is invariant-extremal if and only if $\tau$ in \eqref{eq:characterisation-invariant-state-level1} is a pure state supported in $V_1\ominus W$ (see for instance \cite[Lem.\,4]{MR4192523}).
\end{rem}

The following result is an immediate consequence of Theorem~\ref{th:convex-decomposition-IS} and Remark~\ref{rm:invariant-extremal-states} (c.f. \cite[Th.\,6]{MR4192523}).
\begin{cor}\label{cor:invariant-extremal}
A state $\rho$ is invariant-extremal if and only if one of the following conditions is true:
\begin{enumerate} 
\item $\rho=P_{0_{N+1}}$.
\item $\rho=\vk w\vb w$, where $w\in W$ is a unit vector.
\item There exists a vector $u\in{V_1}\ominus {W}$, with $\no u^2=\tr{\rho\abs Z_1}$, such that 
\begin{gather*}
\rho=\sum_{n=0}^{N-1}e^{\sum_{j=0}^{n}\beta_j}Z^n\vk u\vb u Z^{*n}\,.
\end{gather*}
\end{enumerate}
\end{cor}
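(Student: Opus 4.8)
The plan is to derive Corollary~\ref{cor:invariant-extremal} as a direct consequence of the structural characterization in Theorem~\ref{th:convex-decomposition-IS} together with the description of invariant-extremal states recorded in Remark~\ref{rm:invariant-extremal-states}. The strategy is to first establish the three families in the statement are genuinely invariant-extremal (the ``if'' direction), and then show any invariant-extremal state must be of one of these forms (the ``only if'' direction) by analyzing the convex decomposition~\eqref{eq:convex-decomposition-IS}.

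For the ``if'' direction: case (1) is exactly Remark~\ref{rm:invariant-extremal-states}. For case (2), a pure state $\vk w\vb w$ with $w\in W$ a unit vector is invariant by Remark~\ref{rm:trivial-is}, and it is invariant-extremal by the second assertion of Remark~\ref{rm:invariant-extremal-states}. For case (3), given $u\in V_1\ominus W$ with $\no u^2=\tr{\rho\abs Z_1}$, the state $\tau\colonequals\no u^{-2}\vk u\vb u$ is a pure state supported in $V_1\ominus W$, so by Lemma~\ref{rm:IS-inVmW} (with $c_\rho=\no u^2$) the state $\rho=\sum_{n=0}^{N-1}e^{\sum_{j=0}^{n}\beta_j}Z^n\vk u\vb u Z^{*n}$ is invariant and supported in $V\ominus W$; by the last assertion of Remark~\ref{rm:invariant-extremal-states} it is invariant-extremal since $\tau$ is pure. (One should check the normalization is consistent with $\tr\rho=1$: expanding $\tr\rho=\sum_{n}e^{\sum_j\beta_j}\tr{Z^n\vk u\vb u Z^{*n}}$ and using $\tr{Z^n\vk u\vb u Z^{*n}}=\no{Z^{*n}Z^n}$-type identities coming from the properties of $Z$ in Section~\ref{s2}, so the stated condition $\no u^2=\tr{\rho\abs Z_1}$ is precisely what makes $\tau$ a state.)

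For the ``only if'' direction: let $\rho$ be invariant-extremal. By Theorem~\ref{th:convex-decomposition-IS} write $\rho=\alpha\hat\rho+\beta\eta+\lambda P_{N+1}$ with $\hat\rho$ invariant supported in $V\ominus W$, $\eta$ invariant supported in $W$, and $\alpha+\beta+\lambda=1$, all nonnegative. Since $\hat\rho$, $\eta$, $P_{N+1}$ are all invariant, extremality forces exactly one of $\alpha,\beta,\lambda$ to be nonzero (equal to $1$), unless two of the three summands happen to coincide — but they are supported on mutually orthogonal subspaces $V\ominus W$, $W$, $\C\vk{0_{N+1}}$, so distinct nonzero summands are genuinely different states. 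Hence $\rho=\hat\rho$, or $\rho=\eta$, or $\rho=P_{N+1}$. If $\rho=P_{N+1}$ we are in case (1). If $\rho=\eta$ is supported in $W$, extremality among invariant states together with the fact that \emph{every} state supported in $W$ is invariant (Remark~\ref{rm:trivial-is}) means $\rho$ is extremal among \emph{all} states supported in $W$, hence pure, giving case (2). If $\rho=\hat\rho$ is supported in $V\ominus W$, apply Lemma~\ref{rm:IS-inVmW} to get the unique $\tau$ supported in $V_1\ominus W$ with $\rho=c_\rho\sum_{n=0}^{N-1}e^{\sum_{j=0}^n\beta_j}Z^n\tau Z^{*n}$; by Remark~\ref{rm:invariant-extremal-states}, $\rho$ invariant-extremal forces $\tau$ pure, say $\tau=\vk v\vb v$ with $\no v=1$, and setting $u\colonequals c_\rho^{1/2}v$ yields case (3) with $\no u^2=c_\rho=\tr{\rho\abs Z_1}$.

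I expect the main obstacle to be the careful bookkeeping in the ``only if'' direction when arguing that extremality propagates correctly through each reduction — in particular, verifying that invariant-extremality of $\rho$ (extremality within the convex set of invariant states) really does pass to extremality of $\tau$ within the set of states supported in $V_1\ominus W$, which requires the bijective correspondence of Lemma~\ref{rm:IS-inVmW} to be affine (it is, since $\rho\mapsto\tau$ and $\tau\mapsto\rho$ are both given by linear formulas up to the normalization $c_\rho$). The orthogonality of the three support subspaces does most of the work for separating the convex decomposition, and Remark~\ref{rm:invariant-extremal-states} packages the pure-state characterizations, so beyond that bookkeeping the argument is essentially a citation chain.
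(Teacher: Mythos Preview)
Your proposal is correct and follows essentially the same route as the paper: the paper states that the corollary is an immediate consequence of Theorem~\ref{th:convex-decomposition-IS} and Remark~\ref{rm:invariant-extremal-states}, and you have simply spelled out that citation chain in detail, including the orthogonality-of-supports argument and the affine nature of the correspondence in Lemma~\ref{rm:IS-inVmW}. The only cosmetic point is that your normalization check in case~(3) is a bit hand-wavy, but since $c_\rho=\tr{\rho\abs Z_1}$ is precisely the content of Lemma~\ref{rm:IS-inVmW}, nothing further is actually needed there.
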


Now, we are ready to prove the conjecture of \cite{MR4192523}.

\begin{thm}\label{th:fast-recurrent-subspace}
The fast recurrent subspace $\mathcal R_{\mathcal L}=V\oplus\C\vk{0_{N+1}}$\,.
\end{thm}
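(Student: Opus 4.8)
\emph{Proof plan.} The statement amounts to two inclusions. For $\mathcal R_\mathcal L\subseteq V\oplus\C\vk{0_{N+1}}$: since any convex combination of invariant states is again invariant, the supremum in \eqref{eq:frs-RL} is simply the linear span of the ranges of all invariant states of $\mathcal L$, and by Corollary~\ref{cor:support-invariant-state} each such range is contained in $V\oplus\C\vk{0_{N+1}}$; hence so is $\mathcal R_\mathcal L$.

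For the reverse inclusion it is enough to exhibit a \emph{single} invariant state whose support is all of $V\oplus\C\vk{0_{N+1}}$. I would build it directly from the structure theorem: choose a faithful state $\tau$ on $V_1\ominus W$ (for instance a normalized orthogonal projection), a faithful state $\eta$ on $W$ (invariant by Remark~\ref{rm:trivial-is}), and scalars $\alpha,\beta,\lambda\in(0,1)$ with $\alpha+\beta+\lambda=1$, and set
\[
\rho=\alpha c\sum_{n=0}^{N-1}e^{\sum_{j=0}^{n}\beta_j}Z^n\tau Z^{*n}+\beta\eta+\lambda P_{N+1},\qquad(\beta_0=0)
\]
with $c>0$ the normalization making $\rho$ a state. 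By Theorem~\ref{th:convex-decomposition-IS} this $\rho$ is invariant, and its three summands are supported in mutually orthogonal subspaces with ranges equal, respectively, to $V\ominus W$ (full precisely because $\ran\tau=V_1\ominus W$, by Lemma~\ref{rm:IS-inVmW} and the last assertion of Theorem~\ref{th:convex-decomposition-IS}), to $W$, and to $\C\vk{0_{N+1}}$. Since $W\subset V$ (cf.\ the discussion after Corollary~\ref{cor:support-invariant-state}) while $\vk{0_{N+1}}={Z^*}^0\vk{0_{N+1}}$ is orthogonal to $V$ by the defining formula \eqref{eq:calV-harmonic}, these three subspaces are mutually orthogonal; and because the kernel of a convex combination of positive operators with strictly positive weights is the intersection of the individual kernels, $\mathrm{supp}\,\rho$ is their orthogonal sum $(V\ominus W)\oplus W\oplus\C\vk{0_{N+1}}=V\oplus\C\vk{0_{N+1}}$. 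Hence this subspace is contained in $\mathcal R_\mathcal L$, and together with the first inclusion we obtain equality; in particular the supremum in \eqref{eq:frs-RL} is attained.

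The one point needing care is the degenerate dimension cases. If $W=\{0\}$, simply drop the term $\eta$. If $V_1\ominus W=\{0\}$, drop the transported term; here one must also record that then $V\ominus W=\{0\}$, which is forced because by Lemma~\ref{rm:IS-inVmW} the assignment $\tau\mapsto c\sum_n e^{\sum_j\beta_j}Z^n\tau Z^{*n}$ is a bijection between states on $V_1\ominus W$ and invariant states supported in $V\ominus W$, so an empty source forces an empty target — and then $\rho=\beta\eta+\lambda P_{N+1}$ already has support $W\oplus\C\vk{0_{N+1}}=V\oplus\C\vk{0_{N+1}}$. The crux of the argument — and essentially its only obstacle — is precisely this identification of the $V\ominus W$ sector with the transport of a faithful state $\tau$ on $V_1\ominus W$, which lets a single invariant state realize the whole fast recurrent subspace rather than merely its arising as a supremum; the remainder is a routine assembly of Corollary~\ref{cor:support-invariant-state}, Lemma~\ref{rm:IS-inVmW}, Theorem~\ref{th:convex-decomposition-IS} and Remark~\ref{rm:trivial-is}.
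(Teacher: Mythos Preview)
Your argument is correct and follows exactly the paper's two-line proof: Corollary~\ref{cor:support-invariant-state} for the inclusion $\mathcal R_\mathcal L\subseteq V\oplus\C\vk{0_{N+1}}$, and Theorem~\ref{th:convex-decomposition-IS} (with $\ran\tau=V_1\ominus W$, $\ran\eta=W$, $\alpha,\beta,\lambda>0$) to exhibit an invariant state whose range is all of $V\oplus\C\vk{0_{N+1}}$; you have simply unpacked the latter step explicitly.

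One small slip in your degenerate-case discussion: from the bijection in Lemma~\ref{rm:IS-inVmW} you only obtain that there is no \emph{invariant} state supported in $V\ominus W$, which does not by itself force $V\ominus W=\{0\}$. The conclusion is nonetheless correct, but the clean justification is structural: by Lemma~\ref{lem:properties-trasport-operator} one has $V=\bigoplus_{j=0}^{N-1}Z^jV_1$, and since $W=P_1\ker Z_1\subset V_1$ satisfies $ZW=\{0\}$, the assumption $V_1\ominus W=\{0\}$ (i.e.\ $V_1=W$) gives $Z^jV_1=\{0\}$ for $j\geq1$, whence $V=W$ and $V\ominus W=\{0\}$.
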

\begin{proof}
It is clear from Corollary~\ref{cor:support-invariant-state} that $\mathcal R_{\mathcal L}\subset V\oplus\C\vk{0_{N+1}}$. On the other hand, by virtue of Theorem~\ref{th:convex-decomposition-IS}, one obtains an invariant state with range equal to $V\oplus\C\vk{0_{N+1}}$\,, which concludes the assertion.
\end{proof}

\begin{rem}[Dark states] On quantum energy transport models, it is useful to consider the \emph{bright photonic} vector $\varphi_{0_1}$ and the \emph{photonic} vector \cite{Kozyrev2018} (see also \cite[Sect.\,3]{MR3860251} and \cite[Ex.\,3.2]{MR4107240})
$\psi=e^{i\theta}\varphi_{0_1}$, with $\theta\in(0,2\pi)\backslash\{\pi\}$. The corresponding pure states of these vectors coincide with the so-call \emph{bright pure state} $P_{\varphi_{0_1}}$. Besides, a \emph{dark state} is a state $\rho$ which is orthogonal to the bright pure state, with respect to the Hilbert-Schmidt inner product, i.e.,
\begin{gather*}
0=\tr{\rho P_{\varphi_{0_1}}}=\ip{\varphi_{0_1}}{\rho \varphi_{0_1}}\,.
\end{gather*}
In this fashion, Remark~\ref{rm:not-supported} asserts that a state is dark if and only if it has support in $\{\varphi_{0_1}\}^\perp$. Therefore, by virtue of Theorem~\ref{th:properties-invariant-state} one has that any invariant state is dark.
\end{rem}

\section{The spectrum of invariant states}\label{s6}\setcounter{equation}{0}
We will describe the spectrum of any invariant state in terms of the spectra of states supported in $V_1\ominus W$. We start by mentioning that a state $\rho$ has spectrum $\sigma(\rho)\subset[0,1]$, with the sum of its elements equal one. Besides, if $\rho$ is an invariant state then $(V\oplus\C\vk{0_{N+1}})^\perp\subset\ker \rho$, due to Corollary~\ref{cor:support-invariant-state}. Hence, the following holds.
\begin{prop}
If $\rho$ is and invariant state, then $0\in\sigma(\rho)$, with multiplicity at least $\dim V^{\perp}-1$.
\end{prop}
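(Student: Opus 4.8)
The plan is to read off the claim directly from the structural results just established, using the elementary spectral fact from Remark~\ref{rm:not-supported}. First I would invoke Corollary~\ref{cor:support-invariant-state}, which tells us that any invariant state $\rho$ is supported in $V\oplus\C\vk{0_{N+1}}$; equivalently, $\lrp{V\oplus\C\vk{0_{N+1}}}^\perp\subseteq\ker\rho$. Hence every nonzero vector in this orthogonal complement is an eigenvector of $\rho$ with eigenvalue $0$, so $0\in\sigma(\rho)$ and the geometric multiplicity of the eigenvalue $0$ is at least $\dim\lrp{V\oplus\C\vk{0_{N+1}}}^\perp = \dim\mathcal H - \dim V - 1$.

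Next I would reconcile this with the stated bound $\dim V^\perp - 1$. Since $\mathcal H = V \oplus V^\perp$, we have $\dim V^\perp = \dim\mathcal H - \dim V$, and therefore $\dim\lrp{V\oplus\C\vk{0_{N+1}}}^\perp = \dim V^\perp - 1$ provided $\vk{0_{N+1}}\notin V$, which is immediate from the definition \eqref{eq:calV-harmonic} of $V$: indeed $\vk{0_{N+1}} = {Z^*}^0\vk{0_{N+1}}$ appears among the vectors spanning $V^\perp$ (the case $n=0$), so $\vk{0_{N+1}}\in V^\perp$ and in particular $\vk{0_{N+1}}\perp V$. Thus $\C\vk{0_{N+1}}$ is a genuine one-dimensional summand disjoint from $V$, and $\dim\lrp{V\oplus\C\vk{0_{N+1}}}^\perp = \dim V^\perp - 1$ exactly. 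Combining with the previous paragraph gives that $0$ is an eigenvalue of $\rho$ with multiplicity at least $\dim V^\perp - 1$.

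There is essentially no obstacle here: the only subtlety is bookkeeping on dimensions and verifying $\vk{0_{N+1}}\notin V$ so that the ``$-1$'' in the statement is correct rather than off by one; everything else is a one-line consequence of Corollary~\ref{cor:support-invariant-state} together with positivity of $\rho$. One could alternatively phrase the argument without the Hilbert-space decomposition by simply noting that $(V\oplus\C\vk{0_{N+1}})^\perp\subseteq\ker\rho$ has dimension $\dim V^\perp - 1$ and $\ker\rho$ is exactly the eigenspace of $\rho$ for the eigenvalue $0$; I would present it in whichever form is most compact. If desired, I would also remark that when $\rho$ is faithful on $V\oplus\C\vk{0_{N+1}}$ (which Theorem~\ref{th:convex-decomposition-IS} shows is attainable) the multiplicity is \emph{exactly} $\dim V^\perp - 1$, but that is not needed for the stated lower bound.
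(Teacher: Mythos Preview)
Your proposal is correct and follows exactly the same approach as the paper: the paper states the proposition immediately after observing that Corollary~\ref{cor:support-invariant-state} gives $(V\oplus\C\vk{0_{N+1}})^\perp\subset\ker\rho$, and offers no further argument. Your write-up simply makes explicit the dimension bookkeeping (in particular the verification that $\vk{0_{N+1}}\in V^\perp$, so the summand $\C\vk{0_{N+1}}$ genuinely contributes one dimension), which the paper leaves implicit.
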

Clearly, the spectrum of the invariant state $P_{N+1}$ is $\sigma(P_{N+1})=\{0,1\}$. 
\begin{thm}\label{th:spectrum-decomposition}
For an invariant state $\rho$, there exists an invariant state $\tau$  supported in $V\ominus W$, a state $\eta$  supported in $W$ and $\alpha,\beta,\lambda\in[0,1]$, such that
\begin{gather}\label{eq:cv-Spectrum}
\sigma(\rho)=\alpha\sigma(\tau)\cup\beta\sigma(\eta)\cup\{0,\lambda\}\,,\quad\mbox{with}\quad \alpha+\beta+\lambda=1\,.
\end{gather}
\end{thm}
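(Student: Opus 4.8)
The plan is to reduce the spectral statement to the structural decomposition of invariant states already obtained in Theorem~\ref{th:convex-decomposition-IS}, and then to show that the three pieces in that decomposition have mutually orthogonal supports, so that the spectrum of $\rho$ is literally the (scaled) union of the spectra of the pieces. First I would invoke Theorem~\ref{th:convex-decomposition-IS} to write $\rho=\alpha\hat\rho+\beta\eta+\lambda P_{N+1}$ with $\alpha+\beta+\lambda=1$, where $\hat\rho\colonequals c\sum_{n=0}^{N-1}e^{\sum_{j=0}^n\beta_j}Z^n\tau Z^{*n}$ is an invariant state supported in $V\ominus W$ (equivalently, by Lemma~\ref{rm:IS-inVmW}, associated with a unique state $\tau$ in $V_1\ominus W$), $\eta$ is a state supported in $W$, and $P_{N+1}$ is supported in $\C\vk{0_{N+1}}$.

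The key observation is that the three subspaces $V\ominus W$, $W$, and $\C\vk{0_{N+1}}$ are pairwise orthogonal: $W\perp\C\vk{0_{N+1}}$ because $W\subset P_1\mathcal H$ by \eqref{eq:interaction-freeW}, and $(V\ominus W)\perp W$ by construction, while $\C\vk{0_{N+1}}\subset V^\perp$ so it is orthogonal to $V\ominus W$ as well. Hence $\rho$ acts block-diagonally with respect to the orthogonal decomposition $(V\ominus W)\oplus W\oplus\C\vk{0_{N+1}}\oplus(\text{rest})$, with blocks $\alpha\hat\rho$, $\beta\eta$, $\lambda P_{N+1}$, and $0$ on the orthogonal complement of $V\oplus\C\vk{0_{N+1}}$ (the latter lying in $\ker\rho$ by Corollary~\ref{cor:support-invariant-state}). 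The spectrum of a block-diagonal positive operator is the union of the spectra of its blocks, so $\sigma(\rho)=\sigma(\alpha\hat\rho)\cup\sigma(\beta\eta)\cup\sigma(\lambda P_{N+1})\cup\{0\}$. Using $\sigma(\alpha\hat\rho)=\alpha\sigma(\hat\rho)$, $\sigma(\beta\eta)=\beta\sigma(\eta)$, $\sigma(\lambda P_{N+1})=\{0,\lambda\}$, and recalling from Lemma~\ref{rm:IS-inVmW} that $\hat\rho$ is the invariant state in $V\ominus W$ determined by $\tau$ — which the statement allows us to call $\tau$ (reusing the name for the level-one state, or, to match the statement literally, renaming $\hat\rho$ as ``the invariant state $\tau$ supported in $V\ominus W$'') — we obtain \eqref{eq:cv-Spectrum}. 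One should also note the edge cases: when $\alpha=0$ the $\sigma(\tau)$ term contributes nothing beyond $\{0\}$, and similarly for $\beta=0$ or $\lambda=0$, so the formula is consistent in all cases.

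The only subtlety, and the step I would be most careful about, is the bookkeeping of which operator plays the role of $\tau$ in the statement versus in Lemma~\ref{rm:IS-inVmW}: the theorem statement asks for an invariant state $\tau$ supported in $V\ominus W$, whereas \eqref{eq:characterisation-invariant-state-level1} parametrizes such states by a state in $V_1\ominus W$. I would resolve this simply by taking $\tau$ in \eqref{eq:cv-Spectrum} to be the invariant state $\hat\rho$ produced by Theorem~\ref{th:convex-decomposition-IS}, so no further normalization is needed and $\alpha$ is exactly the coefficient from that theorem. Beyond this, everything is a one-line consequence of orthogonality of supports plus the elementary fact that scaling a positive operator scales its spectrum and that a rank-one projection has spectrum $\{0,1\}$; there is no real analytic obstacle.
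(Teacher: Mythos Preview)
Your proposal is correct and follows essentially the same route as the paper: the paper invokes Corollaries~\ref{cor:convex-decomposition} and~\ref{cor:support-invariant-state} (rather than Theorem~\ref{th:convex-decomposition-IS}, which is built from those corollaries) to write $\rho$ as an orthogonal sum $\alpha\tau\rE{V\ominus W}\oplus\beta\eta\rE{W}\oplus\lambda P_{N+1}\rE{\C\vk{0_{N+1}}}$, and then reads off \eqref{eq:cv-Spectrum} immediately. Your extra care about the naming clash for $\tau$ and about the block-diagonal spectral argument is exactly right and simply spells out what the paper leaves implicit.
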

\begin{proof}
By virtue of Corollaries~\ref{cor:convex-decomposition}  and~\ref{cor:support-invariant-state}, any invariant state $\rho$ is decomposed into an orthogonal sum 
\begin{gather}\label{eq:convex-convination-S}
\rho=\alpha \tau\rE{V\ominus W}\oplus\beta\eta\rE{W}\oplus\lambda P_{N+1}\rE{\C\vk{0_{N+1}}}\,,
\end{gather}
where $\tau$ is an invariant state supported in $V\ominus W$, $\eta$ is a state supported in $W$ and $\alpha,\beta,\lambda\in[0,1]$, with $\alpha+\beta+\lambda=1$. Hence, \eqref{eq:convex-convination-S} implies \eqref{eq:cv-Spectrum}.
\end{proof}

Recall that $W\subset V_1\ominus W$. Besides, Lemma~\ref{rm:IS-inVmW} asserts that every invariant state supported in $V\ominus W$ is completely determined by a unique state supported in $V_1\ominus W$. The following result uses the structure \eqref{eq:characterisation-invariant-state-level1} of an invariant state.

\begin{lem}\label{th:spectrum-is-vw}
Let  $\tau$ be a state supported in $V_1\ominus W$. If $\{\lambda_k\}_{k=1}^m$ are the non-zero eigenvalues of $\tau$, with respective eigenvectors $\{u_k\}_{k=1}^m$. Then the non-zero eigenvalues of the invariant state
\begin{gather}\label{eq:esp-is-vw}
c \sum_{n=0}^{N-1}e^{\sum_{j=0}^{n}\beta_j}Z^n\tau Z^{*n}\,,\qquad (\beta_0=0\mbox{ and $c$ a normalization constant} )
\end{gather} 
are
\begin{gather}\label{eq:eigenvalues-eigenvectors-is}
\lrb{c\lambda_k\no{Z^nu_k}^2e^{\sum_{j=0}^{n}\beta_j}}_{k=1,n=0}^{m,N-1}\,,
\end{gather}
with respective eigenvectors (up to normalization) $\lrb{Z^nu_k}_{k=1,n=0}^{m,N-1}$.
\end{lem}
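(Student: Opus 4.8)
The plan is to show that the operators $Z^n$, restricted to the support of $\tau$ (which lies in $V_1\ominus W$), act as scalar multiples of isometries onto mutually orthogonal subspaces lying in distinct levels, and then diagonalize \eqref{eq:esp-is-vw} level by level. First I would recall from \eqref{eq:interaction-freeW} that $V_1\ominus W\subset\abs Z_1E_1$, so every vector $u$ in the support of $\tau$ satisfies $\abs Z_1 u=u$; hence by the isometry property (item 3 in Section~\ref{s2} together with the description of $Z$ on $\bigoplus_{k=1}^N\abs Z_kE_k$) the vector $Z^nu$ has norm $\no{Z^nu}>0$ for each $0\le n\le N-1$ and lies in $E_{n+1}$. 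Thus $\{Z^nu_k\}_{k=1,n=0}^{m,N-1}$ is an orthogonal family: for fixed $n$ the $Z^nu_k$ are orthogonal because $Z^n$ acts isometrically (up to a scalar independent of $k$) on $\Span\{u_k\}$, and for $n\ne n'$ the vectors live in the orthogonal subspaces $E_{n+1}$ and $E_{n'+1}$.

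Next I would compute the action of the operator \eqref{eq:esp-is-vw} on a basis vector $Z^{n_0}u_{k_0}$. Writing $\Theta\colonequals c\sum_{n=0}^{N-1}e^{\sum_{j=0}^{n}\beta_j}Z^n\tau Z^{*n}$, I note that $Z^{*n}(Z^{n_0}u_{k_0})$ equals $0$ unless $n=n_0$ (since $Z^{*n}Z^{n_0}u_{k_0}$ would otherwise have to move from level $n_0+1$ to level $n_0+1-n\ne$ the level of the support of $\tau$; more precisely $Z^*$ lowers the level by one on the relevant subspaces and $\tau Z^{*n}$ annihilates anything outside $V_1\ominus W\subset E_1$). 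When $n=n_0$, using $Z^{*n_0}Z^{n_0}u_{k_0}=u_{k_0}$ (again the isometry, as $u_{k_0}$ is in the range of $\abs Z_1$ and in the appropriate iterated support) and $\tau u_{k_0}=\lambda_{k_0}u_{k_0}$, one gets
\begin{gather*}
\Theta\,(Z^{n_0}u_{k_0})=c\,e^{\sum_{j=0}^{n_0}\beta_j}\lambda_{k_0}\,Z^{n_0}u_{k_0}\,.
\end{gather*}
This already exhibits each $Z^{n_0}u_{k_0}$ as an eigenvector, but with eigenvalue $c\lambda_{k_0}e^{\sum_{j=0}^{n_0}\beta_j}$ rather than the claimed $c\lambda_{k_0}\no{Z^{n_0}u_{k_0}}^2e^{\sum_{j=0}^{n_0}\beta_j}$; the discrepancy is resolved by normalizing the eigenvectors. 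Indeed, if one replaces $Z^{n_0}u_{k_0}$ by the unit vector $v_{n_0,k_0}\colonequals Z^{n_0}u_{k_0}/\no{Z^{n_0}u_{k_0}}$, the eigenvalue is unchanged, so in fact the non-zero eigenvalues are exactly $\{c\lambda_k e^{\sum_{j=0}^{n}\beta_j}\}$. To reconcile with \eqref{eq:eigenvalues-eigenvectors-is} I would instead keep the (non-normalized) eigenvectors $Z^nu_k$ but recompute the action more carefully, observing that $Z^{*n}$ applied to $Z^nu_k$ actually yields $\no{Z^nu_k}^{-2}$-times... — on reflection, the cleanest route is simply to carry the $\no{Z^nu_k}^2$ factor through by writing $\tau=\sum_k\lambda_k\vk{u_k}\vb{u_k}$ and expanding $Z^n\tau Z^{*n}=\sum_k\lambda_k\vk{Z^nu_k}\vb{Z^nu_k}$, then noting that $\vk{Z^nu_k}\vb{Z^nu_k}$ has the single non-zero eigenvalue $\no{Z^nu_k}^2$ with eigenvector $Z^nu_k$; since the ranges for distinct $(k,n)$ are orthogonal, the spectrum of the sum is the union of these, giving precisely \eqref{eq:eigenvalues-eigenvectors-is}.

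Finally I would check completeness: the family $\{Z^nu_k\}$ spans $\bigoplus_{n=0}^{N-1}Z^n(V_1\ominus W)$, which by \eqref{eq:characterisation-invariant-state-level1} is exactly $\ran$ of the invariant state in \eqref{eq:esp-is-vw} (or at least contains it), so there are no further non-zero eigenvalues. The main obstacle I anticipate is purely bookkeeping: verifying that $Z^{*n}Z^{n}$ acts as the identity on the support of $\tau$ (equivalently, that $V_1\ominus W$ sits inside the domain where $Z^*Z$ is a projection and that iterating stays within such domains, cf. \eqref{eq:0k-m-even} and the fact that $V_1\ominus W\subset\abs Z_1E_1$), and that the cross terms $Z^{*n}Z^{n_0}$ with $n\ne n_0$ genuinely vanish after composing with $\tau$ — both follow from the level-grading and the isometry statements in Section~\ref{s2}, but they need to be spelled out to justify the orthogonality of eigenvectors and the absence of spurious eigenvalues.
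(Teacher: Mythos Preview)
Your final route—writing $\tau=\sum_k\lambda_k\vk{u_k}\vb{u_k}$, expanding \eqref{eq:esp-is-vw} as $\sum_{n,k}c\lambda_k e^{\sum_j\beta_j}\vk{Z^nu_k}\vb{Z^nu_k}$, and reading off the eigenvalue $c\lambda_k\no{Z^nu_k}^2e^{\sum_j\beta_j}$ from each rank-one piece once the family $\{Z^nu_k\}_{n,k}$ is known to be orthogonal—is exactly the paper's proof.

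The detour before that, however, is internally inconsistent and should be dropped. You assert $Z^{*n_0}Z^{n_0}u_{k_0}=u_{k_0}$ and then see a ``discrepancy'' with the factor $\no{Z^{n_0}u_{k_0}}^2$; but the first identity \emph{forces} $\no{Z^{n_0}u_{k_0}}^2=\ip{u_{k_0}}{Z^{*n_0}Z^{n_0}u_{k_0}}=1$, so no discrepancy can arise from a correct computation. In fact $Z^{*n}Z^n$ need not act as the identity on $V_1\ominus W$ for $n\ge2$: that would require $Z^{n-1}(V_1\ominus W)\subset\abs{Z_n}E_n$, which can fail when the level dimensions are strictly decreasing—this is precisely why the statement carries the factor $\no{Z^nu_k}^2$, and why Remark~\ref{rm:normalization-constant} later singles out the dimension hypothesis under which it equals $1$. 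The direct-action computation therefore does not yield an eigenvector equation without further structural input, whereas the rank-one expansion bypasses the issue. Relatedly, your justification of same-level orthogonality (``$Z^n$ acts isometrically up to a scalar independent of $k$'') presupposes exactly the unproved identity $Z^{*n}Z^n=\text{const}\cdot I$ on $\Span\{u_k\}$; the paper simply records $\ip{Z^nu_k}{Z^ru_s}=\no{Z^nu_k}^2\delta_{nr}\delta_{ks}$ as a fact, so you should either invoke the structural results for $V$ from \cite{MR4192523} or flag this as the one point genuinely requiring care.
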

\begin{proof}
Since $\tau=\sum_{k=1}^m\lambda_k\vk{u_k}\vb{u_k}$, which substituting in \eqref{eq:esp-is-vw}, one has that
\begin{gather*}
c \sum_{n=0}^{N-1}e^{\sum_{j=0}^{n}\beta_j}Z^n\tau Z^{*n}=\sum_{n=0}^{N-1}\sum_{k=1}^m c\lambda_k e^{\sum_{j=0}^{n}\beta_j}\vk{Z^nu_k}\vb{Z^{n}u_k}\,.
\end{gather*}
Thus, $\{Z^nu_k\}_{k=1,n=0}^{m,N-1}$ are the distinct eigenvectors of the selfadjoint operator \eqref{eq:esp-is-vw}. Therefore, one gets  \eqref{eq:eigenvalues-eigenvectors-is}, since $\ip{Z^nu_k}{Z^ru_s}=\no{Z^nu_k}^2\delta_{nr}\delta_{ks}$.
\end{proof}
\begin{rem}\label{rm:NC-by-spectrum} Since \eqref{eq:esp-is-vw} is a state, the constant $c$ in Lemma~\ref{th:spectrum-is-vw} satisfies 
\begin{gather*}
c=\lrp{\sum_{k=1}^{m}\sum_{n=0}^{N-1}\lambda_k\no{Z^nu_k}^2e^{\sum_{j=0}^{n}\beta_j}}^{-1}\,.
\end{gather*}
\end{rem}
\begin{cor}\label{cor:ixe-eigenvalues}
For a unit vector $u\in V_1\ominus W$, it follows that
\begin{gather}
c \sum_{n=0}^{N-1}e^{\sum_{j=0}^{n}\beta_j}Z^n\vk u\vb u Z^{*n}
\end{gather}
is an invariant-extremal state with non-zero eigenvalues 
\begin{gather*}
\lrb{c \no{Z^nu}^2e^{\sum_{j=0}^{n}\beta_j}}_{n=0}^{N-1}\,,\quad \mbox{with respective eigenvectors }\, \lrb{Z^nu}_{n=0}^{N-1}\,,
\end{gather*}
where $c=\lrp{\sum_{n=0}^{N-1}\no{Z^nu}^2e^{\sum_{j=0}^{n}\beta_j}}^{-1}$.
\end{cor}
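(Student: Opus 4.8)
The statement to prove is Corollary~\ref{cor:ixe-eigenvalues}, which specializes Lemma~\ref{th:spectrum-is-vw} (together with Remark~\ref{rm:invariant-extremal-states}) to the case of a pure state $\tau=\vk u\vb u$ at the first level.

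\textbf{Plan of proof.} The plan is to apply Lemma~\ref{th:spectrum-is-vw} directly with $m=1$. First I would take the unit vector $u\in V_1\ominus W$ and regard the state $\tau=\vk u\vb u$ supported in $V_1\ominus W$; its single non-zero eigenvalue is $\lambda_1=1$ with eigenvector $u_1=u$. Then Lemma~\ref{th:spectrum-is-vw} immediately gives that the operator $c\sum_{n=0}^{N-1}e^{\sum_{j=0}^{n}\beta_j}Z^n\vk u\vb u Z^{*n}$ is an invariant state whose non-zero eigenvalues are $\lrb{c\no{Z^nu}^2e^{\sum_{j=0}^{n}\beta_j}}_{n=0}^{N-1}$ with eigenvectors $\lrb{Z^nu}_{n=0}^{N-1}$, and Remark~\ref{rm:NC-by-spectrum} pins down the normalization constant as $c=\lrp{\sum_{n=0}^{N-1}\no{Z^nu}^2e^{\sum_{j=0}^{n}\beta_j}}^{-1}$. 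The only additional claim is that this invariant state is invariant-extremal, which follows from the last sentence of Remark~\ref{rm:invariant-extremal-states}: an invariant state supported in $V\ominus W$ is invariant-extremal precisely when the associated first-level state $\tau$ from \eqref{eq:characterisation-invariant-state-level1} is pure and supported in $V_1\ominus W$, and here $\tau=\vk u\vb u$ is exactly such a pure state.

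\textbf{Main obstacle.} There is essentially no obstacle; this corollary is a bookkeeping consequence of results already in hand. The one point that deserves a line of care is confirming that the vectors $Z^nu$, $n=0,\dots,N-1$, are genuinely orthogonal and nonzero so that the displayed list is the full non-zero spectrum with the stated multiplicities — but this is precisely the orthogonality relation $\ip{Z^nu_k}{Z^ru_s}=\no{Z^nu_k}^2\delta_{nr}\delta_{ks}$ invoked in the proof of Lemma~\ref{th:spectrum-is-vw}, which for $m=1$ reduces to $\ip{Z^nu}{Z^ru}=\no{Z^nu}^2\delta_{nr}$; that the norms are positive is guaranteed since $u\in V_1\ominus W$ lies outside $\ker Z^n$ for each relevant $n$ (cf. the argument in Corollary~\ref{cor:support-invariant-state}). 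Hence I would simply write: apply Lemma~\ref{th:spectrum-is-vw} with $\tau=\vk u\vb u$, invoke Remark~\ref{rm:NC-by-spectrum} for the value of $c$, and invoke Remark~\ref{rm:invariant-extremal-states} for extremality.
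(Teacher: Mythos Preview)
Your proposal is correct and follows essentially the same route as the paper: the paper's proof simply reads ``It is simple from Corollary~\ref{cor:invariant-extremal}, Lemma~\ref{th:spectrum-is-vw} and Remark~\ref{rm:NC-by-spectrum},'' which is exactly your plan of applying Lemma~\ref{th:spectrum-is-vw} with $\tau=\vk u\vb u$, reading off $c$ from Remark~\ref{rm:NC-by-spectrum}, and invoking the extremality characterization (you cite Remark~\ref{rm:invariant-extremal-states}, the paper cites its immediate consequence Corollary~\ref{cor:invariant-extremal}, which amounts to the same thing).
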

\begin{proof}
It is simple from Corollary~\ref{cor:invariant-extremal}, Lemma~\ref{th:spectrum-is-vw} and Remark~\ref{rm:NC-by-spectrum}.
\end{proof}

The following result is straightforward from Theorem~\ref{th:spectrum-decomposition} and Lemma~\ref{th:spectrum-is-vw}.
\begin{thm}\label{cor:spectrum-decomposition} If $\rho$ is an invariant state, then there exist states $\tau,\eta$,  supported in $V_1\ominus W,W$, respectively, $\alpha,\beta,\lambda\in[0,1]$ and $c>0$, such that
\begin{gather*}
\sigma(\rho)=\alpha\sigma(\eta)\cup\{0,\beta\}\bigcup_{n=0}^{N-1}\lambda c e^{\sum_{j=0}^{n}\beta_j}\sigma(\tau) \,,
\end{gather*}
with $\alpha+\beta+\lambda=1$, where the constant $c$ satisfies Remark~\ref{rm:NC-by-spectrum}.
\end{thm}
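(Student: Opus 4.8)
The plan is to combine the orthogonal decomposition of an invariant state from Theorem~\ref{th:spectrum-decomposition} with the explicit eigenvalue list from Lemma~\ref{th:spectrum-is-vw}. By Theorem~\ref{th:spectrum-decomposition}, any invariant state $\rho$ decomposes as an orthogonal sum $\rho=\alpha\,\tau'\rE{V\ominus W}\oplus\beta\,\eta\rE{W}\oplus\lambda\,P_{N+1}\rE{\C\vk{0_{N+1}}}$, where $\tau'$ is an invariant state supported in $V\ominus W$, $\eta$ is a state supported in $W$, and $\alpha+\beta+\lambda=1$ with $\alpha,\beta,\lambda\in[0,1]$. Because these three summands live on mutually orthogonal reducing subspaces, the spectrum of $\rho$ is the union $\alpha\sigma(\tau')\cup\beta\sigma(\eta)\cup\{0,\lambda\}$ (the extra $0$ accounting for $(V\oplus\C\vk{0_{N+1}})^\perp\subset\ker\rho$, which is nonempty). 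This is exactly \eqref{eq:cv-Spectrum}.

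Next I would feed the invariant state $\tau'$ supported in $V\ominus W$ into Lemma~\ref{rm:IS-inVmW}: there is a unique state $\tau$ supported in $V_1\ominus W$ with $\tau'=c\sum_{n=0}^{N-1}e^{\sum_{j=0}^{n}\beta_j}Z^n\tau Z^{*n}$, where $c=\tr{\tau'\abs Z_1}$ is the normalization constant governed by Remark~\ref{rm:NC-by-spectrum}. Applying Lemma~\ref{th:spectrum-is-vw} to this $\tau$, the non-zero eigenvalues of $\tau'$ are $\{c\lambda_k\no{Z^nu_k}^2 e^{\sum_{j=0}^{n}\beta_j}\}$ as $k$ ranges over the non-zero eigenvalues $\lambda_k$ of $\tau$ and $n=0,\dots,N-1$. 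Collecting these by the index $n$, one sees that $\sigma(\tau')\setminus\{0\}=\bigcup_{n=0}^{N-1} c\,e^{\sum_{j=0}^{n}\beta_j}\bigl(\sigma(\tau)\setminus\{0\}\bigr)\cdot\no{Z^nu_k}^2$; since the statement only tracks the scaling factors and not the vector norms $\no{Z^nu_k}$, I would absorb those norms into the (state-dependent) description and write $\sigma(\tau')=\bigcup_{n=0}^{N-1}\lambda c e^{\sum_{j=0}^{n}\beta_j}\sigma(\tau)$ after the relabeling $\alpha\tau'\mapsto$ the $\lambda$-weighted piece — i.e., rename $\alpha$ as $\lambda$ to match the stated formula's notation, since the theorem's $\tau,\eta,\alpha,\beta,\lambda$ are a re-indexing of Theorem~\ref{th:spectrum-decomposition}'s data. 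Substituting this into the decomposition from the first paragraph and consolidating the two zeros into one yields the claimed identity $\sigma(\rho)=\alpha\sigma(\eta)\cup\{0,\beta\}\bigcup_{n=0}^{N-1}\lambda c e^{\sum_{j=0}^{n}\beta_j}\sigma(\tau)$.

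The main obstacle, and the only point requiring genuine care, is the bookkeeping of multiplicities and the norm factors $\no{Z^nu_k}^2$. Lemma~\ref{th:spectrum-is-vw} produces eigenvalues that are not simply $\lambda_k$ scaled by a constant but $\lambda_k$ scaled by $c\,e^{\sum_{j=0}^{n}\beta_j}\no{Z^nu_k}^2$, and the norm factor depends on the eigenvector, not just on $n$. The statement of Theorem~\ref{cor:spectrum-decomposition} suppresses these factors, so I would either (i) interpret $\sigma(\tau)$ loosely as a multiset and note that, for each fixed $n$, the map $\lambda_k\mapsto c e^{\sum_{j=0}^{n}\beta_j}\no{Z^nu_k}^2\lambda_k$ is a positive rescaling that, vector by vector, carries $\sigma(\tau)$ onto the $n$-th block, or (ii) acknowledge that the $\no{Z^nu_k}^2$ are genuinely present and that the cleanest honest statement is \eqref{eq:eigenvalues-eigenvectors-is} with the $c e^{\sum_{j=0}^{n}\beta_j}\sigma(\tau)$ notation understood up to these eigenvector-dependent weights. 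Given that the preceding lemma already makes the precise statement, the theorem is best read as the qualitative ``convex combination of rescaled spectra'' summary, and the proof is then just the assembly described above: invoke Theorem~\ref{th:spectrum-decomposition} for the three-way split, then Lemma~\ref{rm:IS-inVmW} and Lemma~\ref{th:spectrum-is-vw} to expand the $V\ominus W$ piece, then merge the zeros.
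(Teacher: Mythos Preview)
Your proposal is correct and matches the paper's approach exactly: the paper's proof is the single line ``straightforward from Theorem~\ref{th:spectrum-decomposition} and Lemma~\ref{th:spectrum-is-vw},'' which is precisely the assembly you describe (three-way orthogonal split, then expand the $V\ominus W$ piece via Lemma~\ref{rm:IS-inVmW} and Lemma~\ref{th:spectrum-is-vw}). Your observation about the eigenvector-dependent factors $\no{Z^nu_k}^2$ is well-taken---the paper's statement suppresses them too, so your reading of the theorem as a qualitative summary of Lemma~\ref{th:spectrum-is-vw} is the intended one.
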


According to Theorem~\ref{cor:spectrum-decomposition}, the spectrum of the invariant states depends only on the spectra of their states in the first level.

\section{Approach to equilibrium and attraction domains on hereditary subalgebras}\label{s5}\setcounter{equation}{0}

It is convenient in this section to consider a stratification of the subspace \eqref{eq:calV-harmonic} given by $V=\bigoplus_{k=1}^N V_k$, where $V_k\colonequals P_kV$. According to \cite[Lem.\,2]{MR4192523}, the following holds.
\begin{lem}\label{lem:properties-trasport-operator}
For $k=1,\dots,N-1$ and $j=0,\dots,N-k$, it follows that $Z^jV_k=V_{k+j}$. Besides, 
\begin{gather*}
Z^kV=\bigoplus_{j=k+1}^{N}V_{j}\qquad\mbox{and}\qquad V=\bigoplus_{j=0}^{N-1}Z^jV_1\,.
\end{gather*}
Moreover,  the transitions $Z_k\colon\abs Z_k V\to V_{k+1}$ and $Z^*_k\colon V_{k+1}  \to \abs Z_k V$ are isometric isomorphisms.
\end{lem}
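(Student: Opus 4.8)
The plan is to reduce the whole statement to the single assertion $Z_kV_k=V_{k+1}$ for $k=1,\dots,N-1$ (plus the side fact $ZV_N=0$), and then obtain the rest by iteration and by passing to orthogonal complements. As a preliminary I would note that every vector listed in \eqref{eq:calV-harmonic} lives in a single level: by $Z_0=\sqrt{n_1}\vk{\varphi_{0_1}}\vb{0_0}$, by \eqref{eq:Zentangled-canonical} and by \eqref{eq:0k-m-even}, the vector $Z^n\vk{0_0}$ is a nonzero multiple of $\vk{0_n}$ ($n$ even) or of $\varphi_{0_n}$ ($n$ odd) and lies in $E_n$, while $Z^*$ lowers the level index by one, so $Z^{*n}\vk{0_{N+1}}\in E_{N+1-n}$ and $Z^{*s}\varphi_{0_{2m+1}}\in E_{2m+1-s}$. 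Hence $V^\perp=\bigoplus_j(V^\perp\cap E_j)$, which is exactly the stratification $V=\bigoplus_jV_j$ with $V_j=E_j\ominus S_j$, where $S_j\colonequals V^\perp\cap E_j$ is the span of the listed generators lying in $E_j$.

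The heart of the proof is the pair of inclusions, for $k=1,\dots,N-1$,
\[
Z_kS_k\subseteq S_{k+1},\qquad Z_k^*S_{k+1}\subseteq S_k,
\]
which I would verify by inspecting the three chains of generators. Using from Section~\ref{s2} that $ZZ^*=n_1P_{\varphi_{0_1}}\oplus\bigoplus_{j\ge2}P_j$ and $Z^*Z=n_1P_0\oplus\bigoplus_{j=1}^N\abs Z_j$ (so $ZZ^*$ is the identity on $E_j$ for $j\ge2$, and $Z^*Z$ acts as $\abs Z_j$ on $E_j$ for $1\le j\le N$), applying $Z$ or $Z^*$ carries each generator to a scalar multiple of the adjacent generator in its chain. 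The only delicate step is $Z^*(Z^{k+1}\vk{0_0})=\abs Z_k(Z^k\vk{0_0})$ with $k$ even, where $Z^k\vk{0_0}$ is a multiple of $\vk{0_k}$ and one computes from \eqref{eq:Gen-discreteFF} that $\abs Z_k\vk{0_k}$ is a multiple of $Z_k^*\varphi_{0_{k+1}}$, that is, of the listed generator $Z^{*1}\varphi_{0_{2m+1}}$ with $2m=k$ (which lies in $E_k$, hence in $S_k$); for $k$ odd one simply gets $\abs Z_k\varphi_{0_k}=\varphi_{0_k}$, a multiple of $Z^k\vk{0_0}$. The chains $\{Z^{*n}\vk{0_{N+1}}\}$ and $\{Z^{*s}\varphi_{0_{2m+1}}\}$ are treated identically; the only boundary effects, namely a generator being sent to $\vk{0_0}$ or to $\vk{0_{N+1}}$, occur outside the range $1\le k\le N$ and are therefore irrelevant. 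Granting these inclusions, if $v\in V_k$ and $s\in S_{k+1}$ then $\ip{Z_kv}{s}=\ip{v}{Z_k^*s}=0$ and $Z_kv\in E_{k+1}$, so $Z_kV_k\subseteq V_{k+1}$, and symmetrically $Z_k^*V_{k+1}\subseteq V_k$; since $Z_kZ_k^*=P_{k+1}$ this yields
\[
V_{k+1}=Z_kZ_k^*V_{k+1}\subseteq Z_kV_k\subseteq V_{k+1},
\]
whence $Z_kV_k=V_{k+1}$, and iteration gives $Z^jV_k=V_{k+j}$ for $0\le j\le N-k$.

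It remains to collect the rest. The isometry statement follows from $Z_k^*V_{k+1}=Z_k^*Z_kV_k=\abs Z_kV_k=\abs Z_kV$ (the last equality since $\abs Z_k$ annihilates $E_j$ for $j\ne k$) and $Z_k\abs Z_kV=Z_k\abs Z_kV_k=Z_kV_k=V_{k+1}$ (using $Z_k\abs Z_k=Z_k$), together with the fact, recalled in Section~\ref{s2}, that $Z_k$ restricted to $\abs Z_kE_k$ and $Z_k^*$ restricted to $E_{k+1}$ are isometric and mutually inverse. For the two displayed identities, note $\varphi_{0_N}=Z^*\vk{0_{N+1}}\in S_N$, so $V_N\perp\varphi_{0_N}$; since $\abs Z_N=P_{\varphi_{0_N}}$ and $\ker Z_N=\ker\abs Z_N$, this forces $V_N\subseteq\ker Z_N$, i.e.\ $ZV_N=0$. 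Hence $ZV=\bigoplus_{i=1}^NZ_iV_i=\bigoplus_{j=2}^NV_j$, an easy induction gives $Z^kV=\bigoplus_{j=k+1}^NV_j$, and finally $\bigoplus_{j=0}^{N-1}Z^jV_1=\bigoplus_{j=0}^{N-1}V_{j+1}=\bigoplus_{i=1}^NV_i=V$, all the sums being direct because the summands sit in pairwise distinct levels.

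I expect the main obstacle to be the bookkeeping of the second paragraph, i.e.\ checking that $Z_k$ and $Z_k^*$ permute the generators of $V^\perp$ level by level. The crux is the identity that $\abs Z_k\vk{0_k}$ is proportional to $Z_k^*\varphi_{0_{k+1}}$: this is exactly why the vectors $Z^{*s_m}\varphi_{0_{2m+1}}$ had to be removed from $V$ in \eqref{eq:calV-harmonic}. Once it is in place, the remaining verifications are routine and rest only on the isometric-isomorphism relations for the $Z_k$ already established in Section~\ref{s2}.
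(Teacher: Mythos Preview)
Your argument is correct. The paper does not supply a proof of its own for this lemma---it simply cites \cite[Lem.\,2]{MR4192523}---so there is no in-paper argument to compare against. Your reduction to the pair of inclusions $Z_kS_k\subseteq S_{k+1}$ and $Z_k^*S_{k+1}\subseteq S_k$ on the level slices $S_k=V^\perp\cap E_k$, followed by the sandwich $V_{k+1}=Z_kZ_k^*V_{k+1}\subseteq Z_kV_k\subseteq V_{k+1}$, is a clean self-contained route using only the identities of Section~\ref{s2}; the observation that $\abs{Z}_k\vk{0_k}$ is proportional to $Z_k^*\varphi_{0_{k+1}}$ is exactly what makes the inclusion of the third family of generators in \eqref{eq:calV-harmonic} necessary and sufficient. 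The only slip is cosmetic: ``outside the range $1\le k\le N$'' should read $1\le k\le N-1$, but the boundary checks at $k=1$ and $k=N-1$ go through as you describe.
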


We consider the \emph{decoherence-free} subalgebra (df-algebra for short) for $\mathcal T$,
\begin{gather*}
\mathcal N(\mathcal T)\colonequals \lrb{x\in \mathcal{B(H)}\,:\, \mathcal T_t(x^*x)=\mathcal T_t(x)^*\mathcal T_t(x)\,,\mathcal T_t(xx^*)=\mathcal T_t(x)\mathcal T_t(x)^*\,, \forall t\geq 0}\,,
\end{gather*}
which is characterized in terms of the commutant $(\bigcup_{n\geq 0} \mathcal C_n)'$ of the following iterated commutators (cf. \cite{MR2446520})
\begin{align}\label{eq:iterated-commutators}
\mathcal C_n\colonequals\lrb{\delta_{H}^n(L_{\pm,\omega_k}),\delta_{H}^n(L_{\pm,\omega_k}^*)}_{k=0}^N=\lrb{\delta_{H}^n(Z_k),\delta_{H}^n(Z_k^*)}_{k=0}^N\,,
\end{align}
with $n\geq0$, where 
\begin{gather*}
\delta^0_H(X)=X\,,\quad \delta^1_H(X)=[H_{\mathrm{eff}},X]\,,\quad \delta_H^{n+1}(X)=[H_{\mathrm{eff}},\delta_H^n(X)]\,.
\end{gather*}

Denote by $\mathcal F(\mathcal T)$ the set of fixed points of the linear maps $\mathcal T_t$, given by
\begin{gather*}
\mathcal F(\mathcal T)\colonequals \lrb{x\in \mathcal {B(H)}\,:\, \mathcal T_t(x)=x\,,\quad \text{for all } t\geq 0}\,.
\end{gather*}

We omit the proof of the below theorem since it follows the same lines as the proof of \cite[Th.\,5.2]{AGQ2019D}.

\begin{thm}\label{th:ic-dbA}The commutators \eqref{eq:iterated-commutators} and the df-algebra of $\mathcal T$ satisfy
\begin{enumerate}
\item $\mathcal C_0'=\lrp{\mathcal C_0\cup \lrb{Z_kZ_k^*,Z_k^*Z_k}_{k=0}^N}'  \subset \mathcal F(\mathcal T)$.
\item $\mathcal C_0'\subset  \bigcap_{n\geq 1} \mathcal C'_n$.
\item\label{it3-ic-dbA} $\mathcal N(\mathcal T)=\mathcal C_0'\subset\mathcal F(\mathcal T)$. 
\end{enumerate}
\end{thm}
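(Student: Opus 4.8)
The plan is to prove the three assertions of Theorem~\ref{th:ic-dbA} in the order stated, since each relies on the previous one. For item~(1), I would begin by observing that if $x\in\mathcal C_0'$ commutes with all Kraus operators $L_{\pm,\omega_k}=\sqrt{\Gamma_{\pm,\omega_k}}Z_k$ and their adjoints, and additionally (in the enlarged commutant) with all the positive operators $Z_kZ_k^*$ and $Z_k^*Z_k$, then from \eqref{eq:eff-Hamiltonian} it commutes with each $\Delta_{\omega_k}$ and hence with $H_{\mathrm{eff}}$. Plugging such an $x$ into the GKSL generator $\mathcal L^*$ written out at the start of Section~\ref{s3}, every commutator term $i[\Delta_{\omega_k},x]$ vanishes and every dissipative pair $L^*xL-\tfrac12\{L^*L,x\}$ collapses to $L^*L\,x-\tfrac12\{L^*L,x\}\cdot$ which is zero once $x$ commutes with $L^*L$; so $\mathcal L^*(x)=0$, and by differentiating $\mathcal T_t(x)=e^{t\mathcal L^*}(x)$ we get $\mathcal T_t(x)=x$ for all $t$, i.e.\ $x\in\mathcal F(\mathcal T)$. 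The first equality $\mathcal C_0'=\lrp{\mathcal C_0\cup\lrb{Z_kZ_k^*,Z_k^*Z_k}_{k=0}^N}'$ needs a small argument: the inclusion $\supset$ is trivial, and for $\subset$ one uses the structure of the $Z_k$ recorded in Section~\ref{s2} — e.g.\ $Z_kZ_k^*=P_{k+1}$, $Z_k^*Z_k=\abs Z_k$ for $k\geq1$ and $Z_0Z_0^*=n_1P_{\varphi_{0_1}}$, $Z_0^*Z_0=n_1P_0$ — so that these projections are polynomials (indeed $X^*X$, $XX^*$) in elements of $\mathcal C_0$, hence commuting with $\mathcal C_0$ already forces commuting with them.

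For item~(2) the key point is that $\delta_H^1(X)=[H_{\mathrm{eff}},X]$ and $H_{\mathrm{eff}}=\sum_k\Delta_{\omega_k}$ with $\Delta_{\omega_k}=\gamma_{-,\omega_k}Z_k^*Z_k-\gamma_{+,\omega_k}Z_kZ_k^*$ is itself a polynomial in the operators $\{Z_kZ_k^*,Z_k^*Z_k\}$. Therefore if $x\in\mathcal C_0'=\lrp{\mathcal C_0\cup\{Z_kZ_k^*,Z_k^*Z_k\}}'$, then $x$ commutes with $H_{\mathrm{eff}}$, so $\delta_H^n(Z_k)=\mathrm{ad}_{H_{\mathrm{eff}}}^n(Z_k)$ is built from $H_{\mathrm{eff}}$ and $Z_k$, both of which commute with $x$; hence $x$ commutes with every $\delta_H^n(Z_k)$ and every $\delta_H^n(Z_k^*)$, i.e.\ $x\in\mathcal C_n'$ for all $n\geq1$. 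This gives $\mathcal C_0'\subset\bigcap_{n\geq1}\mathcal C_n'$.

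For item~(3), the characterization of the df-algebra (cited from \cite{MR2446520}) states $\mathcal N(\mathcal T)=\lrp{\bigcup_{n\geq0}\mathcal C_n}'=\bigcap_{n\geq0}\mathcal C_n'$. Combining this with item~(2): $\bigcap_{n\geq0}\mathcal C_n'=\mathcal C_0'\cap\bigcap_{n\geq1}\mathcal C_n'=\mathcal C_0'$ by item~(2), so $\mathcal N(\mathcal T)=\mathcal C_0'$; and the inclusion $\mathcal C_0'\subset\mathcal F(\mathcal T)$ is exactly item~(1). The main obstacle — and the reason the paper defers to the parallel argument in \cite[Th.\,5.2]{AGQ2019D} — is establishing the first equality of item~(1), namely that commuting with $\mathcal C_0$ alone already forces commuting with the products $Z_kZ_k^*$ and $Z_k^*Z_k$; one must be slightly careful at $k=0$ and at $k=N$ (where $\Gamma_{+,\omega_N}=0$) to check that the relevant projections $P_{\varphi_{0_1}}$, $P_0$, $\abs Z_k$, $P_{k+1}$ are genuinely in the $*$-algebra generated by $\mathcal C_0$, which is where the explicit formulas $|Z_k|=Z_k^*Z_k$, $Z_kZ_k^*=P_{k+1}$ from Section~\ref{s2} are used. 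Everything else is a routine substitution into the GKSL form of $\mathcal L^*$ and an invocation of the known df-algebra characterization.
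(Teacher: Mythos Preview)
Your proposal is correct and follows precisely the standard argument the paper defers to \cite[Th.\,5.2]{AGQ2019D} for (the paper gives no proof of its own). The only quibble is that what you flag as the ``main obstacle'' --- the equality in item~(1) --- is immediate from the derivation identity $[x,AB]=[x,A]B+A[x,B]$, so no special care at $k=0$ or $k=N$ is required: once $x$ commutes with each $Z_k$ and $Z_k^*$ it automatically commutes with $Z_kZ_k^*$ and $Z_k^*Z_k$.
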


By virtue of Theorem~\ref{th:ic-dbA}.\eqref{it3-ic-dbA}, it follows that $\mathcal{N(T)}\subset \mathcal{F(T)}$ and equal if there exists a faithful invariant state in $\mathcal {B(H)}$ \cite[Sect.\,4]{AGQ2019D}. Additionally, Frigerio and Verri in \cite{MR661704,MR479136} assert that $\lim_{t\to\infty}\mathcal T_t(\eta)$ exists for any normal state $\eta\in \mathcal {B(H)}$. However, one has in view of Corollary~\ref{cor:support-invariant-state} that $(V\oplus\C\vk{0_{N+1}})^\perp$ is contained in the kernel of any invariant state. Hence, there is no faithful invariant state in $\mathcal {B(H)}$. 

The above reasoning requires restricting our discussion of evolution to hereditary subalgebras, where we can ensure the existence of a faithful invariant state. For instance, there exists by Theorem~\ref{th:fast-recurrent-subspace} an invariant state $\rho$ with $\ran\rho=\mathcal{R_L}$, i.e, it is faithful in the subalgebra $P_{\mathcal{R_L}}\mathcal{B(H)}P_{\mathcal{R_L}}$. Actually, any invariant state $\rho$ is faithful in $P_{\ran\rho}\mathcal{B(H)}P_{\ran\rho}$.

\begin{lem}\label{lem:range-condition-is}
If $\tau$ is a state supported in $V_1\ominus W$ then 
\begin{gather*}
\rho=c \sum_{n=0}^{N-1}e^{\sum_{j=0}^{n}\beta_j}Z^n\tau Z^{*n}\,,\quad\mbox{with}\quad c=\tr{\rho\abs Z_1}\,,
\end{gather*}
is an invariant state which satisfies
\begin{gather}\label{eq:range-condition-is}
\ran \rho=\bigoplus_{n=0}^{N-1}\ran Z^n\tau=\bigoplus_{n=0}^{N-1}Z^n\ran \tau\subset V\ominus W
\end{gather}
\end{lem}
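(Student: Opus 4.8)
The plan is to establish the three asserted claims in Lemma~\ref{lem:range-condition-is} in sequence: first that $\rho$ is an invariant state, then the range identity, and finally the inclusion $\ran\rho\subset V\ominus W$. Invariance is essentially immediate: since $\tau$ is supported in $V_1\ominus W$ and $c=\tr{\rho\abs Z_1}$ is the correct normalization constant, this is precisely the ``if'' direction of Lemma~\ref{rm:IS-inVmW} (equivalently the converse in Theorem~\ref{th:convex-decomposition-IS} with $\alpha=1$, $\beta=\lambda=0$). One only needs to double check that $c$ so defined makes $\tr\rho=1$, which follows from $\tr{Z^n\tau Z^{*n}}=\tr{\tau Z^{*n}Z^n}$ together with the telescoping of the projections $\abs Z_j$ and the relations $Z_kZ_k^*=P_{k+1}$, $Z_k^*Z_k=\abs Z_k$ from Section~\ref{s2}; alternatively invoke Remark~\ref{rm:NC-by-spectrum}.

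For the range identity, I would argue as follows. Writing $\rho = \sum_{n=0}^{N-1} c_n\, Z^n\tau Z^{*n}$ with $c_n = c\,e^{\sum_{j=0}^n\beta_j} > 0$, the operator $\rho$ is a sum of positive operators, so $\ran\rho$ is the closed linear span of the ranges $\ran(Z^n\tau Z^{*n})$. Now $\ran(Z^n\tau Z^{*n}) = \overline{Z^n\tau(\ran Z^{*n})} = Z^n(\ran\tau)$ because $Z^{*n}$ maps onto a subspace containing $\ran\tau$: indeed, by Lemma~\ref{lem:properties-trasport-operator} the map $Z^{*n}$ restricted to $V$ is injective with image $\bigoplus_{j}V_j$ covering $V_1$, and $\ran\tau\subset V_1\ominus W\subset V_1\subset V$. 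So each summand has range $Z^n\ran\tau$, and these are mutually orthogonal: since $\ran\tau\subset V_1\ominus W$ and the $Z^n$ act as the transport operator, the subspaces $Z^n(V_1\ominus W)\subset V_{n+1}$ for $n=0,\dots,N-1$ lie in distinct levels $E_1,\dots,E_N$ and are therefore pairwise orthogonal (this is exactly the content of $V=\bigoplus_{j=0}^{N-1}Z^jV_1$ in Lemma~\ref{lem:properties-trasport-operator}, restricted to $V_1\ominus W$). Hence the span of the summand ranges is the direct sum $\bigoplus_{n=0}^{N-1}Z^n\ran\tau$, giving both equalities in \eqref{eq:range-condition-is}.

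For the final inclusion, each $Z^n\ran\tau \subset Z^n(V_1\ominus W)$. By Lemma~\ref{lem:properties-trasport-operator}, $Z^nV_1 = V_{n+1}\subset V$, and one checks $Z^n W$ lands outside $\ran\rho$ or is killed — more precisely $W=\Span\lrb{\varphi_{a_1}}_{a=n_2}^{n_1-1}$ and these $\varphi_{a_1}$ with $a\geq n_2$ lie in $\ker Z$ by \eqref{eq:interaction-freeW}, so $Z^n(V_1\ominus W)=Z^nV_1 = V_{n+1}$ for $n\geq 1$ while for $n=0$ it is $V_1\ominus W$ itself; in all cases $Z^n\ran\tau\subset V$, and for $n=0$ it lies in $V_1\ominus W\subset V\ominus W$ while for $n\geq 1$ it lies in $V_{n+1}\subset V$ which is orthogonal to $W\subset E_1$. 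Summing over $n$ then yields $\ran\rho\subset V\ominus W$.

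The main obstacle, if any, is bookkeeping rather than conceptual: one must be careful that $Z^{*n}$ is injective on the relevant subspace so that $\ran(Z^n\tau Z^{*n})$ is genuinely all of $Z^n\ran\tau$ and not something smaller — this is where Lemma~\ref{lem:properties-trasport-operator}'s assertion that $Z_k,Z_k^*$ are isometric isomorphisms between $\abs Z_k V$ and $V_{k+1}$ does the real work — and that the orthogonality of the pieces $\{Z^n\ran\tau\}_{n=0}^{N-1}$ genuinely holds, which reduces to the level decomposition $V=\bigoplus_k V_k$ combined with $Z^jV_k=V_{k+j}$. Once these two facts are invoked correctly the proof is short.
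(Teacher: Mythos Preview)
Your overall strategy—show $\ran(Z^n\tau Z^{*n})=Z^n\ran\tau$ for each $n$, then invoke that the pieces lie in distinct levels $E_{n+1}$—is sound and is a genuinely different route from the paper, which proves the same identity (its equation~\eqref{eq:induction-RC-is}) by induction on $n$, using at the inductive step the detailed-balance relation $\rho Z_k=e^{\beta_k}Z_k\rho$ of Remark~\ref{eq:detail-balance} to rule out $w\in\ker Z_n$.

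However, your justification of $\ran(Z^n\tau Z^{*n})=Z^n\ran\tau$ has a real gap. You assert that $\ran\tau\subset\ran Z^{*n}$, saying that by Lemma~\ref{lem:properties-trasport-operator} ``$Z^{*n}$ restricted to $V$ is injective with image $\bigoplus_jV_j$ covering $V_1$''. Neither clause follows from that lemma. First, $Z^*$ annihilates $V_1$ (since $V_1\subset\{\varphi_{0_1}\}^\perp=\ker Z_0^*$), so $Z^{*n}|_V$ is certainly not injective. Second, Lemma~\ref{lem:properties-trasport-operator} gives only that each $Z_k^*\colon V_{k+1}\to\abs{Z}_kV$ is an isometric isomorphism; to concatenate these into a surjection $Z^{*n}\colon V_{n+1}\to V_1\ominus W$ you would need $\abs{Z}_kV=V_k$, i.e.\ $\ker Z_k\cap V_k=\{0\}$, for every $2\le k\le n$. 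The lemma does not assert this, and nothing in your argument establishes it. So what you flag as ``bookkeeping'' is in fact the crux, and the citation does not discharge it.

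The good news is that the identity you need is a one-liner once you use positivity of $\tau$ instead of properties of $Z^{*n}$: writing $Z^n\tau Z^{*n}=(Z^n\tau^{1/2})(Z^n\tau^{1/2})^*$ and using $\ran(AA^*)=\ran A$ gives
\[
\ran\bigl(Z^n\tau Z^{*n}\bigr)=\ran\bigl(Z^n\tau^{1/2}\bigr)=Z^n\ran\tau^{1/2}=Z^n\ran\tau.
\]
With this replacement your proof goes through and is in fact shorter than the paper's inductive argument; the paper instead extracts the same equality from the invariance of $\rho$ (via Remark~\ref{eq:detail-balance}) rather than from the positivity of $\tau$.
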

\begin{proof}
It is clear from Lemma~\ref{rm:IS-inVmW} that $\rho$ is an invariant state supported in $V\ominus W$. Now, to show the first equality of \eqref{eq:range-condition-is} it is sufficient to prove that 
\begin{gather}\label{eq:induction-RC-is}
\ran Z^n\tau Z^{*n}=\ran Z^n\tau\,,\quad \mbox{for}\quad n=0,\dots,N-1\,,
\end{gather}
which is clear for $n=0$. Thereby, we may suppose that  \eqref{eq:induction-RC-is} is true for $n-1$. If $g\in \ran Z^n\tau$, with $g\neq0$, then $g=Z_nZ^{n-1}\tau v$, for some $v\in\dom \tau$ non-zero, and by hypothesis induction $g=Z_nZ^{n-1}\tau Z^{*n-1}w$, with $w\in V_n$ non-zero, since ${\rm supp}\, \rho\in V\ominus W$. We claim that $w\notin \ker Z_n$, otherwise Remark~\ref{eq:detail-balance} asserts that 
\begin{align*}
g=Z_nZ^{n-1}\tau Z^{*n-1}w=\frac{1}{ce^{\sum_{j=0}^{n-1}\beta_j}}Z_n\rho w=\frac{1}{ce^{\sum_{j=0}^{n}\beta_j}}\rho Z_nw=0\,,
\end{align*}
which is no posible. Thus, one has by Lemma~\ref{lem:properties-trasport-operator} that $w=Z_n^*u$, with $u\neq 0$ in $V_{n+1}$, and $g=Z^{n}\tau Z^{*n}u$. Hence, $\ran Z^n\tau\subset \ran Z^n\tau Z^{*n}$ which implies \eqref{eq:induction-RC-is}, since the other inclusion is straightforward. It is a simple matter to verify by containment that $\ran Z^n\tau=Z^n\ran \tau$, for $n=0,\dots,N-1$, which yields the second equality of \eqref{eq:range-condition-is}.
\end{proof}

We recall by Corollary~\ref{cor:convex-decomposition} that any invariant state is decomposable in three invariant states supported in $V\ominus W, W$ and $\C\vk{0_{N+1}}$, respectively, and Remark~\ref{rm:trivial-is} establish that every state supported in $W$ and $P_{N+1}$ are invariants. So, it is plausible to work only on hereditary subalgebras $P_{R}\mathcal{B(H)}P_{R}$, where $R$ is a subspace of $V\ominus W$. 

In what follows, $U$ represents a non-zero subspace in $V_1\ominus W$ and 
\begin{gather*}
U_Z\colonequals \bigoplus_{n=0}^{N-1}Z^n U\subset V\ominus W\,;\quad
\mathcal A_{U_Z}\colonequals P_{U_Z}\mathcal{B(H)}P_{U_Z}\,;\quad
\mathcal T_{U_Z,t}\colonequals P_{U_Z}\mathcal{T}_tP_{U_Z}\,,
\end{gather*}
where the hereditary semigroup $\mathcal T_{U_Z,t}$ acts on the hereditary subalgebra $\mathcal A_{U_Z}$.

\begin{rem}\label{rm:equality-hsg-states}
If a state $\rho$ belongs to $\mathcal A_{U_Z}$ then $\mathcal T_{U_Z,t}(\rho)=\mathcal T_t(\rho)$. Indeed, one simply checks that $\mathcal L(\rho)P_{U_Z}=P_{U_Z} \mathcal L(\rho)= \mathcal L(\rho)$. Thereby, $\rho$ is ivariant for $\mathcal T_{U_Z,t}$ if and only if it is for $\mathcal T_t$.
\end{rem}

\begin{cor}\label{cor:faithful-is}
There exists a faithful invariant state in $\mathcal A_{U_Z}$.
\end{cor}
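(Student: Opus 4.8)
The plan is to produce an explicit invariant state whose range is exactly $U_Z$, so that it is faithful on $P_{U_Z}\mathcal{B(H)}P_{U_Z}$. First I would take any state $\tau$ supported in $U\subset V_1\ominus W$ with $\ran\tau=U$ — for instance the normalized projection $P_U/\dim U$, which is a bona fide state since $U$ is a non-zero subspace of $V_1\ominus W$. Then I would form
\begin{gather*}
\rho=c\sum_{n=0}^{N-1}e^{\sum_{j=0}^{n}\beta_j}Z^n\tau Z^{*n}\,,\qquad(\beta_0=0)\,,
\end{gather*}
with $c=\tr{\rho\abs Z_1}$ the normalization constant from Lemma~\ref{rm:IS-inVmW}. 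By Lemma~\ref{rm:IS-inVmW} (equivalently Lemma~\ref{lem:range-condition-is}) this $\rho$ is an invariant state supported in $V\ominus W$.

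The key step is the range computation: by Lemma~\ref{lem:range-condition-is},
\begin{gather*}
\ran\rho=\bigoplus_{n=0}^{N-1}Z^n\ran\tau=\bigoplus_{n=0}^{N-1}Z^n U=U_Z\,,
\end{gather*}
using $\ran\tau=U$. Hence $\rho$ belongs to $\mathcal A_{U_Z}=P_{U_Z}\mathcal{B(H)}P_{U_Z}$ (its support and range both equal $U_Z$), and by Remark~\ref{rm:equality-hsg-states} it is invariant for the hereditary semigroup $\mathcal T_{U_Z,t}$. Since $\ran\rho=U_Z$ and $P_{U_Z}$ is the unit of $\mathcal A_{U_Z}$, the state $\rho$ is faithful on $\mathcal A_{U_Z}$: for $x\in\mathcal A_{U_Z}$ with $x\geq0$, $\tr{\rho x}=0$ forces $x^{1/2}$ to annihilate $\ran\rho=U_Z$, hence $x=P_{U_Z}xP_{U_Z}=0$.

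The only point requiring any care is checking that the $\tau$ chosen genuinely has full range $U$ and lies in $V_1\ominus W$; both are immediate for $\tau=P_U/\dim U$ by definition of $U$. Everything else is a direct invocation of Lemma~\ref{rm:IS-inVmW}, Lemma~\ref{lem:range-condition-is}, and Remark~\ref{rm:equality-hsg-states}, so there is no substantive obstacle — the corollary is essentially a packaging of the range formula \eqref{eq:range-condition-is} together with the observation that a state with range equal to the unit of a hereditary subalgebra is faithful there.
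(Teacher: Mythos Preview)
Your proof is correct and follows essentially the same route as the paper: take $\tau=P_U/\dim U$, apply Lemma~\ref{lem:range-condition-is} to obtain an invariant state $\rho$ with $\ran\rho=U_Z$, and conclude faithfulness on $\mathcal A_{U_Z}$. The paper's proof is just a two-line version of what you wrote, omitting the extra justifications you spelled out.
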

\begin{proof}
Clearly, $\tau=\tr{P_U}^{-1}P_U$ is a state with $\ran \tau=U$ and by Lemma~\ref{lem:range-condition-is}, there exists an invariant state $\rho$ with $\ran \rho=U_Z$, which is faithful in $\mathcal A_{U_Z}$.
\end{proof}

\begin{rem}\label{rm:df-algebras-contaiment} The df-algebra $\mathcal{N}(\mathcal T_{U_Z})\subset \mathcal{N}(\mathcal T)$. Indeed, if $x\in \mathcal{N}(\mathcal T_{U_Z})$ then one has $x,x^*\in \mathcal A_{U_Z}\subset \mathcal B(\mathcal H)$. Taking into account Lemma~\ref{lem:properties-trasport-operator}, one simply computes that $Z_kU_Z,Z_k^*U_Z\subset U_Z$, for $k=1,\dots, N-1$, which implies that 
\begin{gather*}
\mathcal T_{t}(x^*x)=\mathcal T_{U_Z,t}(x^*x)=\mathcal T_{U_Z,t}(x)^*\mathcal T_{U_Z,t}(x)=\mathcal T_t(x)^*\mathcal T_t(x)\,,
\end{gather*}
as well as $\mathcal T_t(xx^*)=\mathcal T_t(x)\mathcal T_t(x)^*$, i.e., $x\in \mathcal{N}(\mathcal T)$.
\end{rem}

\begin{lem}\label{lem:her-dbalgebra} The df-algebra $\mathcal{N}(\mathcal T_{U_Z})$ is contained in $ \mathcal{F}(\mathcal T_{U_Z})$.
\end{lem}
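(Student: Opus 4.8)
The statement $\mathcal N(\mathcal T_{U_Z})\subset\mathcal F(\mathcal T_{U_Z})$ is the hereditary-subalgebra analogue of Theorem~\ref{th:ic-dbA}.\eqref{it3-ic-dbA}, and the cleanest route is to invoke the general principle that a decoherence-free subalgebra coincides with the fixed-point algebra whenever a faithful invariant state exists. By Corollary~\ref{cor:faithful-is} there is a faithful invariant state $\rho$ in $\mathcal A_{U_Z}$, and by Remark~\ref{rm:equality-hsg-states} this $\rho$ is invariant for the hereditary semigroup $\mathcal T_{U_Z,t}$; moreover $\mathcal A_{U_Z}=P_{U_Z}\mathcal{B(H)}P_{U_Z}$ is a finite-dimensional von Neumann algebra and $\mathcal T_{U_Z}$ is a QMS on it. So the plan is: first recall that on a finite-dimensional algebra carrying a faithful normal invariant state, $\mathcal N(\mathcal T_{U_Z})=\mathcal F(\mathcal T_{U_Z})$ (the standard Frigerio–Verri / Fagnola–Rebolledo argument, already used in the excerpt right after Theorem~\ref{th:ic-dbA}), which in particular gives the desired inclusion.

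Second, since the paper states the df-algebra characterization via iterated commutators only for $\mathcal T$ on $\mathcal{B(H)}$ (Theorem~\ref{th:ic-dbA}), I would make the argument self-contained at the level of the hereditary semigroup: take $x\in\mathcal N(\mathcal T_{U_Z})$, and use the faithful invariant state $\rho\in\mathcal A_{U_Z}$ to run the usual computation showing that the ``carré du champ'' defect vanishes. Concretely, for $x\in\mathcal N(\mathcal T_{U_Z})$ the map $t\mapsto \mathcal T_{U_Z,t}(x^*x)-\mathcal T_{U_Z,t}(x)^*\mathcal T_{U_Z,t}(x)$ is identically zero; differentiating at $t=0$ and pairing against the faithful state $\rho$ forces each Kraus operator of the hereditary generator to commute with $x$ (this is where finite-dimensionality and faithfulness of $\rho$ are used, via $\tr{\rho\,y^*y}=0\Rightarrow y=0$). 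Once $x$ commutes with every compressed Kraus operator $P_{U_Z}L_{\pm,\omega_k}P_{U_Z}$ and with $P_{U_Z}H_{\mathrm{eff}}P_{U_Z}$, one reads off from the GKSL form of the hereditary generator that $\mathcal L_{U_Z}(x)=0$, hence $\mathcal T_{U_Z,t}(x)=x$ for all $t$, i.e. $x\in\mathcal F(\mathcal T_{U_Z})$.

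Third, a small bookkeeping point needs checking: that the hereditary generator really does have GKSL form with Kraus operators $P_{U_Z}L_{\pm,\omega_k}P_{U_Z}$, equivalently that $\mathcal T_{U_Z}$ is genuinely a QMS and not merely a family of compressions. This is exactly the content of Remark~\ref{rm:equality-hsg-states} together with the observation (used already in Remark~\ref{rm:df-algebras-contaiment}) that $Z_kU_Z,Z_k^*U_Z\subset U_Z$ for $k=1,\dots,N-1$, so that $U_Z$ is invariant under the relevant Kraus operators and $P_{U_Z}$ commutes with the generator on $\mathcal A_{U_Z}$; I would cite those two remarks rather than redo the verification.

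\textbf{Main obstacle.} The only genuinely delicate point is the implication ``$x\in\mathcal N(\mathcal T_{U_Z})\Rightarrow x$ commutes with all Kraus operators of the hereditary generator,'' i.e. that the decoherence-free algebra is contained in the multiplicative-domain-type algebra generated by requiring commutation. In $\mathcal{B(H)}$ this is Theorem~\ref{th:ic-dbA}, proved via the iterated commutators $\mathcal C_n$; for the hereditary semigroup one must either (a) mirror that iterated-commutator analysis with $H_{\mathrm{eff}}$ and the Kraus operators replaced by their compressions to $U_Z$, or (b) bypass it by using the faithful invariant state $\rho\in\mathcal A_{U_Z}$ and the finite-dimensional Frigerio–Verri theorem directly. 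Route (b) is shorter and is the one I would take, since the existence of the faithful invariant state (Corollary~\ref{cor:faithful-is}) is precisely what was engineered for this purpose; the proof then reduces to a one-line invocation of that classical fact plus Remark~\ref{rm:equality-hsg-states}. The remaining steps are routine.
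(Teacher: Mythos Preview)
Your ``one-line'' route~(b) rests on a misstatement of the classical result. The existence of a faithful normal invariant state only yields $\mathcal F(\mathcal T_{U_Z})\subset\mathcal N(\mathcal T_{U_Z})$ (Frigerio: $\mathcal F$ is then a $*$-subalgebra, so every fixed point lies in the multiplicative domain). The reverse inclusion $\mathcal N\subset\mathcal F$, which is precisely the content of the lemma, does \emph{not} follow from faithfulness alone: a purely Hamiltonian semigroup $\mathcal T_t(x)=e^{itH}xe^{-itH}$ on a matrix algebra has the trace as faithful invariant state, $\mathcal N$ equal to the whole algebra, but $\mathcal F=\{H\}'$. So the ``general principle'' you invoke is false, and the Frigerio--Verri convergence theorem takes $\mathcal N=\mathcal F$ as a \emph{hypothesis}, not a conclusion.

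Your detailed carr\'e du champ argument in the second paragraph is the right idea and essentially works, but it has a gap where you say ``once $x$ commutes \dots\ and with $P_{U_Z}H_{\mathrm{eff}}P_{U_Z}$''. Differentiating $\mathcal T_{U_Z,t}(x^*x)=\mathcal T_{U_Z,t}(x)^*\mathcal T_{U_Z,t}(x)$ at $t=0$ yields $\sum_k[L_k',x]^*[L_k',x]=0$, hence $[L_k',x]=0$, but says nothing about $[H_{\mathrm{eff}}',x]$; the Hamiltonian part is invisible to the dissipator. To close the argument you must use the model-specific fact that $H_{\mathrm{eff}}$ is a polynomial in the $Z_k,Z_k^*$ (equation~\eqref{eq:eff-Hamiltonian}), together with the observation that $P_{U_Z}$ commutes with every $Z_k$ (implicit in Remark~\ref{rm:df-algebras-contaiment}), so that commutation with the compressed Kraus operators forces commutation with $P_{U_Z}H_{\mathrm{eff}}P_{U_Z}$. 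With this addition your route is valid.

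The paper's proof is different and shorter: it does not redo the dissipator computation on $\mathcal A_{U_Z}$ but instead lifts to the ambient algebra. It uses Remark~\ref{rm:df-algebras-contaiment} to get $\mathcal N(\mathcal T_{U_Z})\subset\mathcal N(\mathcal T)$, then invokes Theorem~\ref{th:ic-dbA}\eqref{it3-ic-dbA}, which already gives $\mathcal N(\mathcal T)=\mathcal C_0'=\{Z_k,Z_k^*:0\le k\le N\}'$. Thus any $\eta\in\mathcal N(\mathcal T_{U_Z})$ commutes with every $Z_k,Z_k^*$ (hence with all $P_j$, $|Z_j|$, $P_{\varphi_{0_1}}$), and since $\eta\in\mathcal A_{U_Z}$, inspection of \eqref{eq:L-generator-rewrite} gives $\mathcal L(\eta)=0$. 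This route exploits the global characterization already in hand and sidesteps both the false ``principle'' and the Hamiltonian-commutation gap.
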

\begin{proof}
If $\eta\in \mathcal{N}(\mathcal T_{U_Z})$ then by Remark~\ref{rm:df-algebras-contaiment}, it belongs to $\mathcal{N}(\mathcal T)$. It follows by Theorem~\ref{th:ic-dbA}.\eqref{it3-ic-dbA} that $\eta\in \mathcal C_0'=(\{Z_k,Z_k^*\}_{k=0}^N)'$, i.e., it commutes with $Z_k$, $Z_k^*$, for $k=0,\dots,N$, as well as $P_{\varphi_{0_1}},\abs Z_1,\dots,\abs Z_N,P_0,\dots,P_{N+1}$ (see properties of the transition operators in Section~\ref{s2}). Hence, from \eqref{eq:L-generator-rewrite} and since $\eta $ is supported in $U_Z$, it follows that $\eta$ is a fixed point of $\mathcal T_{U_Z,t}$, i.e,  $\eta\in \mathcal{F}(\mathcal T_{U_Z})$.
\end{proof}

Due to Corollary~\ref{cor:faithful-is} there exists a faithful invariant state in $\mathcal A_{U_Z}$ and as a consequence of Lemma~\ref{lem:her-dbalgebra}, one has that $\mathcal{N}(\mathcal T_{U_Z})=\mathcal{F}(\mathcal T_{U_Z})$ on $\mathcal A_{U_Z}$ (cf. \cite[Sect.\,4]{AGQ2019D}). Thereby, as a result of Frigerio and Verri \cite{MR661704,MR479136}, the following holds.

\begin{cor}\label{coro:existence-is-Avm}
If $\rho$ is an initial state in $\mathcal A_{U_Z}$, then $\lim_{t\to\infty}\mathcal T_{U_Z,t}(\rho)$ exists and is an invariant state in $\mathcal A_{U_Z}$.
\end{cor}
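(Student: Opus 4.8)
The plan is to invoke the Frigerio--Verri convergence theorem in the form already cited in the paper, after checking that its hypotheses hold for the hereditary semigroup $\mathcal T_{U_Z,t}$ on the subalgebra $\mathcal A_{U_Z}$. Concretely, Frigerio and Verri \cite{MR661704,MR479136} guarantee that if a QMS admits a faithful invariant state and if its decoherence-free algebra coincides with its set of fixed points, then for every initial normal state the evolution converges (in trace norm) to an invariant state. So the whole argument is a matter of assembling the pieces that have just been established.

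First I would record that $\mathcal T_{U_Z,t}$ is a bona fide QMS on the finite-dimensional algebra $\mathcal A_{U_Z}=P_{U_Z}\mathcal{B(H)}P_{U_Z}$: this is clear since $U_Z\subset V\ominus W$ is a reducing subspace for $\mathcal L$ (Remark~\ref{rm:equality-hsg-states} gives $\mathcal L(\rho)P_{U_Z}=P_{U_Z}\mathcal L(\rho)=\mathcal L(\rho)$ for states supported in $U_Z$, and more generally $Z_k U_Z, Z_k^*U_Z\subset U_Z$ by Lemma~\ref{lem:properties-trasport-operator} as noted in Remark~\ref{rm:df-algebras-contaiment}). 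Next I would cite Corollary~\ref{cor:faithful-is}, which produces a faithful invariant state $\rho$ in $\mathcal A_{U_Z}$ (take $\tau=\tr{P_U}^{-1}P_U$ and apply Lemma~\ref{lem:range-condition-is} to get $\ran\rho=U_Z$). Then I would cite Lemma~\ref{lem:her-dbalgebra}, giving $\mathcal{N}(\mathcal T_{U_Z})\subset\mathcal{F}(\mathcal T_{U_Z})$, and combine it with the existence of the faithful invariant state to upgrade this inclusion to equality $\mathcal{N}(\mathcal T_{U_Z})=\mathcal{F}(\mathcal T_{U_Z})$ — this is precisely the standard fact recalled after Theorem~\ref{th:ic-dbA} and in \cite[Sect.\,4]{AGQ2019D}.

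With a faithful invariant state and $\mathcal{N}(\mathcal T_{U_Z})=\mathcal{F}(\mathcal T_{U_Z})$ in hand, the Frigerio--Verri theorem applies verbatim: for any initial state $\rho\in\mathcal A_{U_Z}$ the limit $\lim_{t\to\infty}\mathcal T_{U_Z,t}(\rho)$ exists, and being a limit of states that are eventually arbitrarily close to invariant (or, more simply, since the limit is a fixed point of each $\mathcal T_{U_Z,s}$ by semigroup continuity), it is itself an invariant state in $\mathcal A_{U_Z}$. I would state the limit is invariant by noting $\mathcal T_{U_Z,s}(\lim_t \mathcal T_{U_Z,t}(\rho))=\lim_t \mathcal T_{U_Z,s+t}(\rho)=\lim_t\mathcal T_{U_Z,t}(\rho)$ using strong continuity of the semigroup on the finite-dimensional space.

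I do not expect a genuine obstacle here, since all the substantive work has been done in the preceding lemmas and corollaries; the only point requiring a little care is making sure the Frigerio--Verri hypotheses are stated in a form matching the present setting (their result is usually phrased for a QMS on $\mathcal{B(H)}$, so one should be explicit that $\mathcal A_{U_Z}$ plays the role of the ambient algebra, which is legitimate because $P_{U_Z}$ reduces everything). If one wanted to be self-contained one could instead appeal to the mean-ergodic decomposition plus the absence of a nontrivial peripheral (rotating) part — the latter following from $\mathcal{N}(\mathcal T_{U_Z})=\mathcal{F}(\mathcal T_{U_Z})$ — but citing \cite{MR661704,MR479136} as the paper already does is the cleanest route.
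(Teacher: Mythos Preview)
Your proposal is correct and follows essentially the same approach as the paper: the corollary is stated without a formal proof, but the text immediately preceding it assembles exactly the ingredients you list (Corollary~\ref{cor:faithful-is} for the faithful invariant state, Lemma~\ref{lem:her-dbalgebra} to get $\mathcal{N}(\mathcal T_{U_Z})\subset\mathcal{F}(\mathcal T_{U_Z})$, the upgrade to equality via the faithful state, and then the Frigerio--Verri theorem). Your write-up is slightly more explicit about why $\mathcal A_{U_Z}$ is invariant under the dynamics and why the limit is itself invariant, but these are elaborations rather than a different route.
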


For an initial state $\rho\in\mathcal A_{U_Z}$, we write 
\begin{gather*}
\rho_\infty\colonequals \lim_{t\to\infty}\mathcal T_{U_Z,t}(\rho)\,.
\end{gather*}
which is invariant, by Corollary~\ref{coro:existence-is-Avm}. Remark~\ref{rm:equality-hsg-states} implies that $\rho_\infty=\lim_{t\to\infty}\mathcal T_t(\rho)$.
\begin{thm}\label{cor:evolution-rho}
For any initial state $\rho\in \mathcal A_{U_Z}$, there exists a unique state $\tau$ supported in $U$, such that 
\begin{gather}\label{eq:infty-state}
\rho_\infty=c_\rho \sum_{n=0}^{N-1}e^{\sum_{j=0}^{n}\beta_j}Z^n\tau Z^{*n}\,,\qquad (\beta_0=0)
\end{gather}
where $c_\rho=\tr{\rho_\infty\abs Z_1}$. Besides, 
\begin{gather}\label{eq:range-infty-state}
\ran \rho_\infty=\bigoplus_{n=0}^{N-1}Z^n\ran \tau\subset U_Z\,.
\end{gather}

\end{thm}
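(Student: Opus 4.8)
The plan is to combine Corollary~\ref{coro:existence-is-Avm}, which tells us that $\rho_\infty$ exists and is an invariant state, with the structural characterization of invariant states from Theorem~\ref{th:convex-decomposition-IS} and Lemma~\ref{rm:IS-inVmW}, and then to pin down the location of the constituent states using the fact that $\rho\in\mathcal A_{U_Z}$. First I would observe, via Remark~\ref{rm:equality-hsg-states}, that $\mathcal T_{U_Z,t}(\rho)=\mathcal T_t(\rho)$ stays in $\mathcal A_{U_Z}$ for all $t$, and since $U_Z$ is a closed subspace, the limit $\rho_\infty$ is again supported in $U_Z\subset V\ominus W$. In particular $\rho_\infty$ is an invariant state whose range avoids $W$ and $\C\vk{0_{N+1}}$, so in the decomposition \eqref{eq:convex-decomposition-IS} furnished by Theorem~\ref{th:convex-decomposition-IS} we must have $\beta=\lambda=0$ and $\alpha=1$ (the summands $\beta\eta$ and $\lambda P_{N+1}$ are supported on orthogonal pieces, so they vanish identically). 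This already yields a state $\tau$ supported in $V_1\ominus W$ with $\rho_\infty=c_\rho\sum_{n=0}^{N-1}e^{\sum_{j=0}^{n}\beta_j}Z^n\tau Z^{*n}$ and $c_\rho=\tr{\rho_\infty\abs Z_1}$, together with uniqueness of $\tau$.

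The remaining point is to upgrade ``$\tau$ supported in $V_1\ominus W$'' to ``$\tau$ supported in $U$''. For this I would use Lemma~\ref{lem:range-condition-is}, which gives $\ran\rho_\infty=\bigoplus_{n=0}^{N-1}Z^n\ran\tau$, an internal orthogonal sum of the subspaces $Z^n\ran\tau$ (orthogonality coming from Lemma~\ref{lem:properties-trasport-operator}, since $Z^n\ran\tau\subset V_{n+1}$ lies in level $n+1$). Since $\rho_\infty\in\mathcal A_{U_Z}$, we have $\ran\rho_\infty\subset U_Z=\bigoplus_{n=0}^{N-1}Z^nU$, and both sides are orthogonal direct sums indexed by $n$ with the $n$-th summand living in level $n+1$. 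Comparing the level-$(n+1)$ components forces $Z^n\ran\tau\subset Z^nU$ for each $n$; taking $n=0$ gives $\ran\tau\subset U$, i.e.\ $\tau$ is supported in $U$. Equation \eqref{eq:range-infty-state} is then immediate from \eqref{eq:range-condition-is} in Lemma~\ref{lem:range-condition-is}.

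The main obstacle is the matching of levels in the previous paragraph: one has to be careful that the equality $\ran\rho_\infty=\bigoplus_n Z^n\ran\tau$ is a genuine \emph{orthogonal} decomposition and that containment in $U_Z$ can be read off componentwise. This is exactly what Lemma~\ref{lem:properties-trasport-operator} ($Z^jV_k=V_{k+j}$, with the $V_k$ mutually orthogonal as subspaces of distinct $E_k$) and Lemma~\ref{lem:range-condition-is} are designed to supply, so once those are invoked the argument is routine. Everything else — existence of the limit, its invariance, and the form \eqref{eq:infty-state} — is a direct citation of Corollary~\ref{coro:existence-is-Avm}, Theorem~\ref{th:convex-decomposition-IS} and Lemma~\ref{rm:IS-inVmW}, with the observation that $\rho_\infty$ being supported off $W\oplus\C\vk{0_{N+1}}$ kills the other two terms in the convex decomposition.
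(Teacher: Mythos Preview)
Your proposal is correct and follows essentially the same route as the paper: invoke Corollary~\ref{coro:existence-is-Avm} to get that $\rho_\infty$ is an invariant state supported in $U_Z\subset V\ominus W$, apply Lemma~\ref{rm:IS-inVmW} (the paper cites it directly rather than going through Theorem~\ref{th:convex-decomposition-IS}) for the form \eqref{eq:infty-state} with $\tau$ supported in $V_1\ominus W$, and then use Lemma~\ref{lem:range-condition-is} for \eqref{eq:range-infty-state}. The only cosmetic difference is that the paper obtains $\ran\tau\subset U$ by the single observation $\ran\tau=P_1\ran\rho_\infty\subset P_1U_Z=U$, which is a shade quicker than your level-by-level matching via the full range decomposition, but the argument is the same in substance.
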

\begin{proof}
It follows from Corollary~\ref{coro:existence-is-Avm} that $\rho_\infty$ is an invariant state supported in $U_Z\subset V\ominus W$. Hence, by Lemma~\ref{rm:IS-inVmW} there exists a unique state $\tau$ supported in $V_1\ominus W$ such that satisfies
\eqref{eq:infty-state}. Note that $\ran \tau=P_1\ran \rho_\infty\subset P_1 U_Z=U$. Condition \eqref{eq:range-infty-state} readily follows from 
Lemma~\ref{lem:range-condition-is}.
\end{proof}

Equation \eqref{eq:infty-state} characterizes the long-time asymptotic behavior of states in $\mathcal A_{U_Z}$. In what follows, we will show a more explicit form of the evolution of states in this hereditary subalgebra.

From now on, we will assume that the subspaces $E_2,\dots, E_{N}\subset \mathcal H$ (see Section~\ref{s2}) satisfy the following \emph{dimension hypothesis} ($\mathrm{DH}$ for short):
\begin{align}\label{eq:condition-dimension}
\dim E_2&=\dots= \dim E_N\,,\quad \mbox{if $N\geq2$}
\end{align}
(the case $N=1$ is trivial and will be tackled in Subsection~\ref{sb51}). In such a case on $A_{U_Z}$, the equation \eqref{eq:L-generator-rewrite} turns into 
\begin{align}\label{eq:L-generator-rewritev2}
\begin{split}
\mathcal L(\rho)={}&\rho\lrp{\sum_{j=1}^{N-1}\eta_{-,\omega_j} P_j+\overline{\eta}_{+,\omega_j} P_{j+1}}+
\lrp{\sum_{j=1}^{N-1}\cc\eta_{-,\omega_j} P_j+{\eta}_{+,\omega_j} P_{j+1}}\rho  \\ & +
\sum_{k=1}^{N-1}\Gamma_{-,\omega_k} Z_k \rho Z_k^* + \Gamma_{+,\omega_k} Z_k^*\rho Z_k \,.
\end{split}
\end{align}

\begin{rem}\label{rm:normalization-constant} Under $\rm DH$, any invariant state $\rho\in A_{U_Z}$ satisfies
\begin{gather}\label{eq:normalization-constant}
c_\beta\colonequals\tr{\rho\abs Z_1}=\lrp{\sum_{n=0}^{N-1}e^{\sum_{j=0}^{n}\beta_j}}^{-1}\,. \qquad (\beta_0=0)
\end{gather}
Indeed, since $\rho$ has support in $U_Z\subset V\ominus W$, one has from Lemma~\ref{rm:IS-inVmW} that $\tr{Z^n\tau_\rho Z^{*n}}=1$, for $n=0,\dots,N-1$, and  
\begin{align*}
1=\tr{\rho}=\tr{\rho\abs Z_1}\sum_{n=0}^{N-1}e^{\sum_{j=0}^{n}\beta_j}\tr{Z^n\tau_\rho Z^{*n}}=\tr{\rho\abs Z_1}\sum_{n=0}^{N-1}e^{\sum_{j=0}^{n}\beta_j}\,,
\end{align*}
as required. Notably, the constant of Remark~\ref{rm:NC-by-spectrum} turns into $c=c_\beta$, since $\no{Z^nu_k}=1$ and $\sum_{k=1}^m\lambda_k=1$.
\end{rem}

\begin{lem}\label{lem:evolution-state-vk}
Under $\rm{DH}$, if $\rho_1,\rho_2,\dots,\rho_{N}$ are states supported in $U,ZU,\dots, Z^{N-1}U$, respectively,  then for $k=1,\dots, N$,
\begin{gather}\label{eq:evolution-states-a}
(\rho_k)_\infty=c_\beta \sum_{n=0}^{N-1}e^{\sum_{j=0}^{n}\beta_j}Z^nZ^{*k-1}\rho_k Z^{k-1} Z^{*n}\,,\qquad (\beta_0=0)
\end{gather}
with $\displaystyle \ran (\rho_k)_\infty=\bigoplus_{n=0}^{N-1}Z^n\ran Z^{*k-1}\rho_k Z^{k-1}\subset U_Z$.
\end{lem}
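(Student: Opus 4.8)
The plan is to reduce the statement to an already-proved structural result. Each $\rho_k$ is a state supported in $Z^{k-1}U$; applying the isometric isomorphism $Z^{*k-1}$ (which by Lemma~\ref{lem:properties-trasport-operator} maps $Z^{k-1}U=Z^{k-1}P_1V\subset V_k$ back onto $U\subset V_1$), the operator $\tau_k\colonequals Z^{*k-1}\rho_k Z^{k-1}$ is a positive operator supported in $U\subset V_1\ominus W$. First I would verify that $\tau_k$ is in fact a \emph{state}, i.e.\ that $\tr{\tau_k}=1$: since $\rho_k$ is supported in $Z^{k-1}U$ and $Z^{*k-1}$ restricted to that subspace is an isometry onto $U$ (again by Lemma~\ref{lem:properties-trasport-operator}), we have $Z^{k-1}Z^{*k-1}\rho_k=\rho_k$ on $\ran\rho_k$, hence $\tr{Z^{*k-1}\rho_k Z^{k-1}}=\tr{Z^{k-1}Z^{*k-1}\rho_k}=\tr{\rho_k}=1$.

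Next, by Lemma~\ref{lem:range-condition-is} applied to the state $\tau_k$ supported in $V_1\ominus W$, the operator
\begin{gather*}
\sigma_k\colonequals c_\beta\sum_{n=0}^{N-1}e^{\sum_{j=0}^{n}\beta_j}Z^n\tau_k Z^{*n}
\end{gather*}
is an invariant state supported in $V\ominus W$, with normalization constant $c_\beta$ as in Remark~\ref{rm:normalization-constant} (here we use that, under $\mathrm{DH}$, the constant in Lemma~\ref{rm:IS-inVmW} collapses to $c_\beta$ because $\no{Z^n u}=1$ for unit vectors $u\in V_1\ominus W$), and moreover
\begin{gather*}
\ran\sigma_k=\bigoplus_{n=0}^{N-1}Z^n\ran\tau_k=\bigoplus_{n=0}^{N-1}Z^n\ran\big(Z^{*k-1}\rho_k Z^{k-1}\big)\subset U_Z\,.
\end{gather*}
Since $\ran\tau_k\subset U$, each summand $Z^n\ran\tau_k\subset Z^nU$, so $\sigma_k\in\mathcal A_{U_Z}$; therefore $\rho_k\in\mathcal A_{U_Z}$ as well (its support lies in $Z^{k-1}U\subset U_Z$).

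Finally I would identify $(\rho_k)_\infty$ with $\sigma_k$. By Theorem~\ref{cor:evolution-rho} there is a \emph{unique} state $\tau$ supported in $U$ with $(\rho_k)_\infty=c_\rho\sum_{n=0}^{N-1}e^{\sum_{j=0}^{n}\beta_j}Z^n\tau Z^{*n}$ and $c_\rho=\tr{(\rho_k)_\infty\abs Z_1}=c_\beta$ by Remark~\ref{rm:normalization-constant}. It remains to show $\tau=\tau_k$, i.e.\ that the limit $(\rho_k)_\infty$ is exactly $\sigma_k$. The cleanest way is to observe that the first-level component of $(\rho_k)_\infty$ is $P_1(\rho_k)_\infty=c_\beta\tau$ (only the $n=0$ term contributes to $P_1$, the others living in $V_2,\dots,V_N$), and to relate it to $\rho_k$ via the evolution; equivalently, I would use that $\sigma_k$ is an invariant state in $\mathcal A_{U_Z}$ whose restriction records the same data as $\rho_k$ after the transport. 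The main obstacle is precisely this identification $(\rho_k)_\infty=\sigma_k$: a priori Corollary~\ref{coro:existence-is-Avm} only gives \emph{some} invariant limit, so one must track which state of the first level $\rho_k$ relaxes to. I expect this to follow by noting that $Z^{k-1}$ intertwines the reduced dynamics on $Z^{k-1}U$-supported states with the $n=k-1$ slice of the invariant manifold — made precise through Remark~\ref{eq:detail-balance} ($\rho Z_k=e^{\beta_k}Z_k\rho$) applied to $\sigma_k$, which shows $\rho_k$ and $\sigma_k$ share the same $P_1$-component up to the constant $c_\beta$, and then uniqueness in Theorem~\ref{cor:evolution-rho} finishes the proof. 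The range formula is then immediate from the displayed expression for $\ran\sigma_k$ together with $\ran\tau_k=\ran Z^{*k-1}\rho_k Z^{k-1}$.
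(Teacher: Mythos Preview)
Your reduction to $\tau_k\colonequals Z^{*k-1}\rho_k Z^{k-1}$ and the candidate invariant state $\sigma_k$ is correct, and you rightly flag the identification $(\rho_k)_\infty=\sigma_k$ (equivalently $\tau=\tau_k$ in Theorem~\ref{cor:evolution-rho}) as the crux. But the hint you offer does not close it. For $k\geq2$ the initial state $\rho_k$ lives in $V_k$, so $P_1\rho_k=0$ while $P_1\sigma_k=c_\beta\tau_k\neq0$; thus $\rho_k$ and $\sigma_k$ certainly do \emph{not} share the same first-level component. The detailed-balance relation of Remark~\ref{eq:detail-balance} is a property of \emph{invariant} states: it constrains $\sigma_k$, not the initial $\rho_k$, and so cannot tell you which $\tau$ the evolution selects. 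Uniqueness in Theorem~\ref{cor:evolution-rho} only asserts that the limit is determined by its first-level part; it does not compute that part from $\rho_k$. Nothing you have written rules out $\mathcal T_t(\rho_k)$ redistributing mass within $U$ so as to converge to some $\tau\neq\tau_k$.

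The missing ingredient is a dynamical computation, and this is precisely what the paper supplies. With $\eta_n\colonequals Z^n\tau_k Z^{*n}$ (so that $\eta_{k-1}=\rho_k$ under $\mathrm{DH}$, since $Z^{k-1}$ is an isometric isomorphism from $U$ onto $Z^{k-1}U$), one computes directly from \eqref{eq:L-generator-rewritev2} the closed birth--death system
\begin{align*}
\mathcal L(\eta_0)&=-\Gamma_{-,\omega_1}(\eta_0-\eta_1)\,,\\
\mathcal L(\eta_j)&=\Gamma_{+,\omega_j}(\eta_{j-1}-\eta_j)-\Gamma_{-,\omega_{j+1}}(\eta_j-\eta_{j+1})\,,\qquad 1\leq j\leq N-2\,,\\
\mathcal L(\eta_{N-1})&=\Gamma_{+,\omega_{N-1}}(\eta_{N-2}-\eta_{N-1})\,.
\end{align*}
Hence every power $\mathcal L^m$ applied to any $\eta_j$ lies in $\Span\{\eta_{r-1}-\eta_r:1\leq r\leq N-1\}$, so $\mathcal T_t(\rho_k)$ remains in $\Span\{\eta_0,\dots,\eta_{N-1}\}$ for all $t$. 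In particular $\abs{Z}_1\mathcal T_t(\rho_k)=\alpha_t\,\tau_k$ for some scalar $\alpha_t$; letting $t\to\infty$, comparing with $\abs{Z}_1(\rho_k)_\infty=c_\beta\tau$ and taking traces gives $\lim_t\alpha_t=c_\beta$ and hence $\tau=\tau_k$. This explicit invariance of the transport orbit $\{\eta_n\}$ under $\mathcal L$ is the step your outline omits; without it, the uniqueness clause of Theorem~\ref{cor:evolution-rho} has nothing to bite on.
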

\begin{proof}
By abuse of notation, we let $\eta$ stand for $Z^{*k-1}\rho_k Z^{k-1}$, which from Lemma~\ref{lem:properties-trasport-operator} is a state supported in $U$. For $n\geq0$, consider $\eta_n=Z^n\eta Z^{*n}$, being $\eta_0=\eta$. Thus, it follows by \eqref{eq:L-generator-rewritev2} that 
\begin{align}\label{eq:gen-Tmtau}
\begin{split}
\mathcal L(\eta_0)&=-\Gamma_{-,\omega_{1}}\lrp{\eta_0 -\eta_1}\,,\\
\mathcal L(\eta_k)&=\Gamma_{+,\omega_{k}}\lrp{\eta_{k-1}-\eta_{k}}-\Gamma_{-,\omega_{k+1}}\lrp{\eta_{k}-\eta_{k+1}}\,,\quad k=1,\dots,N-2\,,\\
 \mathcal L(\eta_{N-1})&=\Gamma_{+,\omega_{N-1}}\lrp{\eta_{N-2} -\eta_{N-1}}\,.
\end{split}
\end{align}
For $k\in\N$, we claim that 
\begin{gather}\label{eq:powers-L}
\mathcal L^k(\eta)=\sum_{j=1}^{N-1}\alpha_{k,j}\lrp{\eta_{j-1}-\eta_{j}}\,,\quad \alpha_{k,j}\in\R\,.
\end{gather}
Indeed, \eqref{eq:powers-L} holds for $k=1$, due to \eqref{eq:gen-Tmtau}. So, we may suppose that \eqref{eq:powers-L} is true for $k$ and by virtue of \eqref{eq:gen-Tmtau}, one computes that
\begin{align*}
\mathcal L^{k+1}(\eta)=\sum_{j=1}^{N-1}\alpha_{k,j}\mathcal L\lrp{\eta_{j-1}-\eta_{j}}=\sum_{j=1}^{N-1}\alpha_{k+1,j}\lrp{\eta_{j-1}-\eta_{j}}\,.
\end{align*}
Thereby, since $\abs Z_1$ is a projection then it is bounded, and by \eqref{eq:powers-L}, it follows that $\abs Z_1\mathcal T_t(\eta)=\alpha_t\eta$, where $\alpha_t=\sum_{k\geq0} \frac{\alpha_{k,1}}{k!}t^k$, with $\alpha_{0,1}=1$. Note that $\eta\in\mathcal A_{U_Z}$ and by Theorem~\ref{cor:evolution-rho} there exists a unique state $\tau$ supported in $U$ such that $\eta_\infty,\tau$ satisfy \eqref{eq:infty-state} and by Remark~\ref{rm:normalization-constant},
\begin{gather}\label{eq:c-lim-at}
\lim_{t\to \infty}\alpha_t\eta=\abs Z_1\lim_{t\to \infty}\mathcal T_t(\eta)=\abs Z_1\eta_\infty=c_\beta\tau\,.
\end{gather}
So, $c_\beta=\tr{c_\beta\tau}=\tr{\lim_{t\to \infty}\alpha_t\eta}=\lim_{t\to \infty}\alpha_t$. Hence, $\eta=\tau$, which replacing in \eqref{eq:infty-state}, one gets \eqref{eq:evolution-states-a}. The above reasoning and \eqref{eq:range-infty-state} imply $\ran \rho_\infty=\bigoplus_{n=0}^{N-1}Z^n\ran \eta\subset U_Z$. 
\end{proof}

For $k=1,\dots,N$, if a state $\rho\in \mathcal A_{U_Z}$ satisfies $\rho P_k\neq0$, then $\frac{1}{\tr{\rho P_k}}P_k\rho P_k$ is a state supported in $Z^kU$. We say that $\frac{1}{\tr{\rho P_k}}P_k\rho P_k=0$ when $\rho P_k=0$.

\begin{thm}\label{th:evolution-states-vw}
Under $\rm DH$, if $\rho$ is an initial state in $\mathcal A_{U_Z}$, then 
\begin{gather}\label{eq:evolution-states-VW}
\rho_\infty=c_\beta \sum_{n=0}^{N-1}e^{\sum_{j=0}^{n}\beta_j}Z^n\eta Z^{*n}\,,\qquad (\beta_0=0)
\end{gather}
where $\displaystyle \eta=\sum_{k=0}^{N-1}Z^{*k}P_{k+1}\rho P_{k+1} Z^k$ is a state supported in $U$. Besides, 
\begin{gather}\label{eq:general-range-condition}
\ran \rho_\infty=\bigcup_{k=0}^{N-1}\bigoplus_{n=0}^{N-1}Z^{n}\ran Z^{*k}P_{k+1}\rho P_{k+1} Z^{k}\subset U_Z\,.
\end{gather}
\end{thm}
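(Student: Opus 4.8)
The plan is to reduce the statement to Lemma~\ref{lem:evolution-state-vk} by splitting the initial state into its blocks $P_k\rho P_l$ along the levels $E_1,\dots,E_N$ and showing that only the diagonal blocks survive the limit. Since $U_Z=\bigoplus_{n=0}^{N-1}Z^nU$ with $Z^nU\subset V_{n+1}\subset E_{n+1}$ (Lemma~\ref{lem:properties-trasport-operator}), one has $P_{U_Z}\le\sum_{m=1}^{N}P_m$ and $P_kP_{U_Z}=P_{Z^{k-1}U}$ for $k=1,\dots,N$, so any state $\rho\in\mathcal A_{U_Z}$ decomposes as $\rho=\sum_{k,l=1}^{N}P_k\rho P_l$ with each block $P_k\rho P_l=P_{Z^{k-1}U}\rho P_{Z^{l-1}U}$ again in $\mathcal A_{U_Z}$. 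The algebraic identity $\mathcal L(x)P_{U_Z}=P_{U_Z}\mathcal L(x)=\mathcal L(x)$ from Remark~\ref{rm:equality-hsg-states} holds for every $x\in\mathcal A_{U_Z}$, so $\mathcal A_{U_Z}$ is $\mathcal L$-invariant and $\mathcal T_{U_Z,t}$ restricted to it is the exponential of the restriction of $\mathcal L$; hence $\rho_\infty=\sum_{k,l=1}^{N}\lim_{t\to\infty}\mathcal T_{U_Z,t}(P_k\rho P_l)$ as soon as each limit exists, and it suffices to treat the blocks one at a time.

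For the off-diagonal blocks $k\neq l$ I would use the $\mathrm{DH}$ form \eqref{eq:L-generator-rewritev2}. Put $A=\sum_{j=1}^{N-1}\eta_{-,\omega_j}P_j+\overline{\eta}_{+,\omega_j}P_{j+1}$, whose coefficient of $P_m$ is $a_1=\eta_{-,\omega_1}$, $a_m=\eta_{-,\omega_m}+\overline{\eta}_{+,\omega_{m-1}}$ $(1<m<N)$ and $a_N=\overline{\eta}_{+,\omega_{N-1}}$. For $y=P_ky P_l$ supported in $U_Z$ with $k\neq l$, every Kraus term of \eqref{eq:L-generator-rewritev2} annihilates $y$: $Z_{k'}yZ_{k'}^{*}\neq0$ would force $k'=k$ from $Z_{k'}P_k$ and $k'=l$ from $P_lZ_{k'}^{*}$, while $Z_{k'}^{*}yZ_{k'}\neq0$ would force $k'+1=k$ and $k'+1=l$; both are impossible. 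Hence $\mathcal L(y)=yA+A^{*}y=(a_l+\overline{a_k})\,y$, so $\mathcal T_{U_Z,t}(P_k\rho P_l)=e^{t(a_l+\overline{a_k})}P_k\rho P_l$. Since $\eta_{\pm,\omega_j}=-\tfrac12\Gamma_{\pm,\omega_j}+i\gamma_{\pm,\omega_j}$ and the rates $\Gamma_{-,\omega_j},\Gamma_{+,\omega_j}$ are strictly positive for $1\le j\le N-1$, one gets $\Re a_m<0$ for every $m\in\{1,\dots,N\}$, whence $\Re(a_l+\overline{a_k})<0$ and $\mathcal T_{U_Z,t}(P_k\rho P_l)\to0$ as $t\to\infty$.

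It remains to treat the diagonal blocks. For $k\in\{1,\dots,N\}$ with $p_k:=\tr{P_k\rho P_k}>0$ the operator $p_k^{-1}P_k\rho P_k$ is a state supported in $Z^{k-1}U$, so Lemma~\ref{lem:evolution-state-vk} yields $\lim_{t\to\infty}\mathcal T_{U_Z,t}(P_k\rho P_k)=c_\beta\sum_{n=0}^{N-1}e^{\sum_{j=0}^{n}\beta_j}Z^nZ^{*k-1}(P_k\rho P_k)Z^{k-1}Z^{*n}$ (trivially so when $p_k=0$). Summing over $k=1,\dots,N$ and putting $\eta=\sum_{k=0}^{N-1}Z^{*k}P_{k+1}\rho P_{k+1}Z^{k}$ gives \eqref{eq:evolution-states-VW}; by Lemma~\ref{lem:properties-trasport-operator}, $Z^{*k}$ restricts to an isometry of $Z^{k}U$ onto $U$, so each summand of $\eta$ is a positive operator supported in $U$ of trace $\tr{P_{k+1}\rho P_{k+1}}$, whence $\eta$ is a state supported in $U$ with $\tr\eta=\tr\rho=1$. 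Finally Lemma~\ref{lem:range-condition-is} applied to $\eta$ gives $\ran\rho_\infty=\bigoplus_{n=0}^{N-1}Z^n\ran\eta\subset U_Z$, and since $\ran\eta$ is the span of the ranges of the positive summands $Z^{*k}P_{k+1}\rho P_{k+1}Z^{k}$, this is exactly \eqref{eq:general-range-condition}.

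The main obstacle is the off-diagonal estimate of the second paragraph: one must see that each cross block $P_k\rho P_l$ is an eigenvector of $\mathcal L$ with a strictly dissipative eigenvalue. This uses in an essential way that, after imposing $\mathrm{DH}$ and restricting to $\mathcal A_{U_Z}$, the generator takes the clean form \eqref{eq:L-generator-rewritev2} (coefficient operators equal to the level projections $P_j$, Kraus operators equal to the single-step transitions $Z_j$), together with a careful sign check on $a_m$ at the extreme values $m=1$ and $m=N$, where only one of the two rates is present.
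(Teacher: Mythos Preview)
Your proof is correct, and its overall architecture---decompose $\rho$ into level blocks, handle the diagonal ones via Lemma~\ref{lem:evolution-state-vk}, and show the off-diagonal ones disappear in the limit---matches the paper's. The genuine difference is in the treatment of the off-diagonal blocks $P_h\rho P_k$, $h\neq k$. The paper argues \emph{indirectly}: it writes $\rho_\infty$ as the sum of the diagonal limits plus the off-diagonal remainder, observes that the diagonal part already has the form $c_\beta\sum_n e^{\sum\beta_j}Z^n\eta Z^{*n}$ with $\eta$ a state in $U$, and then invokes Theorem~\ref{cor:evolution-rho} together with Remark~\ref{rm:normalization-constant} (uniqueness of the state $\tau$ in the representation \eqref{eq:infty-state}) to force $\tau=\eta$ and hence the remainder to vanish. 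You argue \emph{directly}: using the $\mathrm{DH}$ form \eqref{eq:L-generator-rewritev2} you verify that every Kraus term annihilates $P_k\rho P_l$, so it is an eigenvector of $\mathcal L$ with eigenvalue $a_l+\overline{a_k}$ of strictly negative real part, whence it decays exponentially. Your route is more explicit (it gives the actual decay rate) and self-contained, not needing the existence/uniqueness apparatus of Corollary~\ref{coro:existence-is-Avm} and Theorem~\ref{cor:evolution-rho} for the off-diagonal part; the paper's route is shorter on computation but leans on that structure theory and, read literally, presupposes that the individual limits $\lim_{t\to\infty}\mathcal T_t(P_h\rho P_k)$ exist rather than just their sum.
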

\begin{proof}
Since $\rho$ is supported in $U_Z$, then it follows that
\begin{gather}\label{eq:decomposition-rho}
\rho=\lrp{\sum_{k=1}^{N}P_k}\rho\lrp{\sum_{k=1}^{N}P_k}=\sum_{k=1}^{N}\alpha_k\rho_k+\sum_{h,k=1\atop h\neq k}^{N}P_h\rho P_k
\end{gather}
with $\alpha_k=\tr{\rho P_k}$ and $\rho_k=\alpha_k^{-1}P_k\rho P_k$, which is a state supported in $Z^{k-1}U$, for $k=1,\dots,N$. In this fashion, one obtains by virtue of Lemma~\ref{lem:evolution-state-vk} that 
\begin{gather}\label{eq:partk-rho}
(\rho_k)_{\infty}=c_\beta \sum_{n=0}^{N-1}e^{\sum_{j=0}^{n}\beta_j}Z^nZ^{*k-1}\rho_kZ^{k-1} Z^{*n}\,,\quad k=1,\dots,N\,,
\end{gather}
with 
\begin{align}\label{eq:rang-condition-rhok}
\ran(\rho_k)_{\infty}=\bigoplus_{n=0}^{N-1}Z^n\ran Z^{*k-1}\rho_k Z^{k-1}=\bigoplus_{n=0}^{N-1}Z^n\ran Z^{*k-1}P_k\rho P_k Z^{k-1}\subset U_Z\,.
\end{align}
 Thus, taking into account \eqref{eq:decomposition-rho} and \eqref{eq:partk-rho}, one computes that
\begin{align}\label{eq:evolution-states-VW-aux}
\begin{split}
\rho_\infty&=\sum_{k=1}^{N}\alpha_k(\rho_k)_{\infty}+\sum_{h,k=1\atop h\neq k}^{N}\lim_{t\to \infty}\mathcal T_t\lrp{P_h\rho P_k}\\
&=c_\beta \sum_{n=0}^{N-1}e^{\sum_{j=0}^{n}\beta_j}Z^n\eta Z^{*n}+\sum_{h,k=1\atop h\neq k}^{N}\lim_{t\to \infty}\mathcal T_t\lrp{P_h\rho P_k}\,.
\end{split}
\end{align}
It is clear that $\eta$ is a positive operator with support in $U$. Besides, one has that $\tr{\eta}=\tr{\rho\sum_{k=1}^NP_k}=\tr{\rho}=1$, i.e., $\eta$ is a state. Moreover, $\rho$ satisfies Theorem~\ref{cor:evolution-rho}, with $c_\rho=c_\beta$ (see Remark~\ref{rm:normalization-constant}), viz. $\rho_\infty$ satisfies \eqref{eq:infty-state} and $\tau=\eta$, which compared with \eqref{eq:evolution-states-VW-aux}, one concludes that $\sum_{h,k=1\atop h\neq k}^{N}\lim_{t\to \infty}\mathcal T_t\lrp{P_h\rho P_k}=0$, viz. \eqref{eq:evolution-states-VW}. Condition \eqref{eq:general-range-condition} follows from \eqref{eq:rang-condition-rhok} and the first equality \eqref{eq:evolution-states-VW-aux}.
\end{proof}
The following is straightforward from Theorem~\ref{cor:evolution-rho} and \eqref{eq:range-infty-state} of Theorem~\ref{th:evolution-states-vw}.
\begin{cor}\label{cor:attraction-domain}Under $\rm DH$, the attraction domain of the invariant state
\begin{gather*}
c_\beta \sum_{n=0}^{N-1}e^{\sum_{j=0}^{n}\beta_j}Z^n\eta Z^{*n}\,,\qquad (\beta_0=0)
\end{gather*}
where  $\eta$ is  state supported in $U$, consists solely of those initial states $\rho\in\mathcal A_{U_Z}$, for which 
\begin{align*}
\displaystyle \eta&=\sum_{k=0}^{N-1}Z^{*k}P_{k+1}\rho P_{k+1} Z^k\qquad\mbox{and}\\\bigoplus_{n=0}^{N-1}Z^n\ran \eta&= \bigcup_{k=0}^{N-1}\bigoplus_{n=0}^{N-1}Z^{n}\ran Z^{*k}P_{k+1}\rho P_{k+1} Z^{k}\subset U_Z\,.
\end{align*}
\end{cor}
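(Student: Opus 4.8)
The plan is to identify the attraction domain of the given invariant state with the fibre of the long-time map $\rho\mapsto\rho_\infty$ over it, and then to read off the two defining conditions directly from Theorems~\ref{cor:evolution-rho} and~\ref{th:evolution-states-vw}. Throughout I would abbreviate $\sigma\colonequals c_\beta\sum_{n=0}^{N-1}e^{\sum_{j=0}^{n}\beta_j}Z^n\eta Z^{*n}$; by Lemma~\ref{lem:range-condition-is} together with Remark~\ref{rm:normalization-constant} this is an invariant state of $\mathcal A_{U_Z}$, supported in $\bigoplus_{n=0}^{N-1}Z^nU=U_Z$, with normalization constant $c_\beta$. Its attraction domain is $\{\rho\in\mathcal A_{U_Z}:\rho_\infty=\sigma\}$, where $\rho_\infty=\lim_{t\to\infty}\mathcal T_{U_Z,t}(\rho)$; by Corollary~\ref{coro:existence-is-Avm} this limit exists for every $\rho\in\mathcal A_{U_Z}$ and is an invariant state supported in $U_Z\subset V\ominus W$.

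The key step I would use is the uniqueness built into Lemma~\ref{rm:IS-inVmW} (equivalently Theorem~\ref{cor:evolution-rho}): an invariant state $\rho$ supported in $V\ominus W$ is determined by, and determines, a unique first-level state $\tau$ supported in $V_1\ominus W$ through the transport formula~\eqref{eq:characterisation-invariant-state-level1}, with normalization $c_\rho=\tr{\rho\abs Z_1}$. Under $\rm DH$ this normalization equals $c_\beta$ for every invariant state of $\mathcal A_{U_Z}$ (Remark~\ref{rm:normalization-constant}), so the map $\tau\mapsto c_\beta\sum_{n=0}^{N-1}e^{\sum_{j=0}^{n}\beta_j}Z^n\tau Z^{*n}$ is injective on states supported in $U$. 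Hence, for $\rho\in\mathcal A_{U_Z}$, the equality $\rho_\infty=\sigma$ holds if and only if the first-level state of $\rho_\infty$ coincides with $\eta$; and by Theorem~\ref{th:evolution-states-vw} that first-level state is exactly $\sum_{k=0}^{N-1}Z^{*k}P_{k+1}\rho P_{k+1}Z^k$. Therefore $\rho$ lies in the attraction domain of $\sigma$ precisely when $\eta=\sum_{k=0}^{N-1}Z^{*k}P_{k+1}\rho P_{k+1}Z^k$, which is the first asserted condition; conversely any $\rho$ satisfying it has $\rho_\infty=\sigma$ by the same chain of equalities, hence belongs to the attraction domain.

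It then remains to record the range identity. When the operator identity for $\eta$ holds we have $\rho_\infty=\sigma$, so comparing $\ran\rho_\infty=\bigoplus_{n=0}^{N-1}Z^n\ran\eta$ from~\eqref{eq:range-infty-state} with $\ran\rho_\infty=\bigcup_{k=0}^{N-1}\bigoplus_{n=0}^{N-1}Z^{n}\ran Z^{*k}P_{k+1}\rho P_{k+1}Z^{k}\subset U_Z$ from~\eqref{eq:general-range-condition} yields the second asserted identity and its containment in $U_Z$; this is in fact automatic, since the range of a sum of positive operators is the span of the union of the ranges and $Z^n$ is linear. The one point that needs genuine care is the appeal to uniqueness of the transport representation: it is precisely here that $\rm DH$ enters, through the common normalization constant $c_\beta$ (without it the two invariant states being compared could a priori carry different normalizations). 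Apart from that the corollary is a direct assembly of Theorems~\ref{cor:evolution-rho} and~\ref{th:evolution-states-vw}, so no real obstacle arises.
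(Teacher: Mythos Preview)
Your proposal is correct and follows essentially the same approach as the paper, which simply declares the corollary ``straightforward from Theorem~\ref{cor:evolution-rho} and~\eqref{eq:range-infty-state} of Theorem~\ref{th:evolution-states-vw}''; you have merely supplied the details the paper omits. Your observation that the range identity is in fact automatic once the operator identity $\eta=\sum_{k=0}^{N-1}Z^{*k}P_{k+1}\rho P_{k+1}Z^{k}$ holds (because the range of a sum of positive operators is the span of the union of their ranges) is a useful clarification not made explicit in the paper.
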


\begin{rem}[Transport of states and energy] As a consequence of Theorem~\ref{th:evolution-states-vw}, the total probability of an initial state $\rho$ in $\mathcal A_{U_Z}$ is distributed in the limit when $t$ tends to infinity. Viz. the probability of $\rho_\infty$ in $Z^{k-1}U$ is 
\begin{align}\label{eq:probability-dist}
\tr{\rho_\infty P_k}=c_\beta e^{\sum_{j=0}^{k-1}\beta_j}\,,\qquad k=1,\dots,N\,,
\end{align}
which does not depend on the initial state $\rho$. Since we work under $\rm DH$ (see  \eqref{eq:condition-dimension}) and with states supported $U_Z\subset V\ominus W$, the effective Hamiltonian \eqref{eq:eff-Hamiltonian} turns into 
\begin{gather*}
H_{\rm{eff}}= \sum_{k=1}^{N-1}  \gamma_{-,\omega_k} P_k-
\gamma_{+,\omega_k}P_{k+1}\,.
\end{gather*}
Thereby, if initial states $\rho_1,\rho_2,\dots,\rho_N$ are supported in $U,ZU,\dots,Z^{N-1}U$, respectively, then for $k=1,\dots,N$, it follows by \eqref{eq:probability-dist} that 
\begin{align*}
\tr{\lrp{\rho_{k\infty}-\rho_k}H_{\rm{eff}}}= \gamma_{+,\omega_{k-1}}- \gamma_{-,\omega_k}+c_\beta\sum_{k=1}^{N-1}e^{\sum_{j=0}^{k-1}\beta_j}\lrp{ \gamma_{-,\omega_k}- \gamma_{+,\omega_k}e^{\beta_k}}\,,
\end{align*}
with $\gamma_{+,\omega_{0}}=\gamma_{-,\omega_N}=0$, which is independent of $\rho_k$. It seems that the degenerate open systems (with a degenerate reference Hamiltonian) are plausible for modeling effective quantum energy transfer in photosynthesis \cite{SGFGOC2011}.
\end{rem}

\section{Quantum photosynthesis models}\label{s7}\setcounter{equation}{0}

\subsection{Kozyrev-Volovich quantum photosynthesis model}\label{sb51}

The open quantum system with one energy level (see Fig. \ref{fig:graph-1e}) corresponds to Kozyrev and Volovich model \cite{Kozyrev2018} in the context of the stochastic limit approach of degenerate quantum open systems (c.f. \cite[Sect.\,3]{MR3860251} and \cite[Ex.\,3.2]{MR4107240}).

\begin{figure}[h]
\centering
\begin{tikzpicture}
  [scale=.75,auto=left,every node/.style={}]
  \node (n0) at (4.5,7.5) {$0_0$};
  \node (n1) at (0,6)  {$0_1$};
  \node (n2) at (3,6)  {$1_1$};
  \node (n3) at (6,6) {$\cdots$};
  \node (n4) at (9,6)  {$(n_1-1)_1$};
    \node (o0) at (4.5,4.5)  {$0_{2}$};
     \node (lM) at (13,7.5)  {$0$-level\,,\quad $E_0$};
      \node (l1) at (13,6)  {$1$-level\,,\quad $E_1$};
          \node (lm) at (13,4.5)  {$2$-level\,,\quad $E_{2}$};
  \foreach \from/\to in {n0/n1,n0/n2,n0/n3,n0/n4,o0/n1,o0/n2,o0/n3,o0/n4}
    \draw (\from) -- (\to);
\end{tikzpicture}
\caption{Graph of states and transitions with one energy level.}\label{fig:graph-1e}
\end{figure}
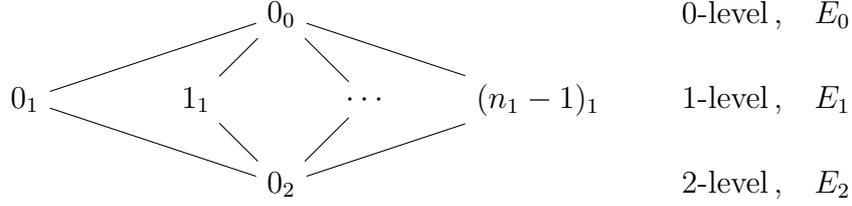

The transitions operators \eqref{eq:Gen-discreteFF} are given by 
\begin{gather*}
Z_0=\sqrt{n_1}\vk{\varphi_{0_1}}\vb{0_0}\quad \mbox{and}\quad Z_1=\vk{0_2}\vb{\varphi_{0_1}}\,.
\end{gather*}
Besides, the WCLT Markov generator ${\mathcal L}$ is 
\begin{align*}
\begin{split}
\mathcal L(\rho)={}&\rho\lrp{n_1\eta_{-,\omega_0}P_0+\lrp{n_1\overline{\eta}_{+,\omega_0}+\eta_{-,\omega_1}}P_{\varphi_{0_1}}+\overline{\eta}_{+,\omega_1} P_{0_2}}\\[2mm]&+
\lrp{n_1\cc\eta_{-,\omega_0}P_0+\lrp{n_1{\eta}_{+,\omega_0}+\cc\eta_{-,\omega_1}}P_{\varphi_{0_1}}+{\eta}_{+,\omega_1} P_{0_2}}\rho  \\[2mm] &+ n_1\Gamma_{+,\omega_0}\ip{\varphi_{0_1}}{\rho \varphi_{0_1}}P_{0_0}+n_1\Gamma_{-,\omega_0}\ip{\vk{0_0}}{\rho \vk{0_0}}P_{\varphi_{0_1}}\\[2mm]& +\Gamma_{-,\omega_1}\ip{\varphi_{0_1}}{\rho \varphi_{0_1}}P_{0_2}\,, 
\end{split}
\end{align*}
where $\eta_{\pm,\omega_k}=-\dfrac{\Gamma_{\pm,\omega_k}}{2}+i\gamma_{\pm,\omega_k}$, for $k=1,2$, with $\Gamma_{+,\omega_1}=0$. 

\textbf{Case $n_1>1$:} it follows from \eqref{eq:interaction-freeW} that $
W=\Span\lrb{\varphi_{a_1}}_{a=1}^{n_{1}-1}$ and by \eqref{eq:calV-harmonic}, one has
\begin{gather*}
V=\{0_0,\varphi_{0_1},0_2\}^\perp=W\,.
\end{gather*}
Therefore, due to Theorem~\ref{th:convex-decomposition-IS}, any invariant state is a convex combination of a state supported in $W$ and $P_{0_2}$. The invariant-extremal states are $P_{0_2}$ and $\vk{w}\vb{w}$, with $w$ a unit vector in $W$ (see Corollary~\ref{cor:invariant-extremal}). Furthermore, the fast recurrent subspace \eqref{eq:frs-RL} is $\mathcal{R_L}=W\oplus \C\vk{0_2}$ (see Theorem~\ref{th:fast-recurrent-subspace}).

\textbf{Case $n_1=1$:} in this case, one simply checks that $V=W=\{0\}$. Hence, $P_{0_2}$ is the only invariant state, which is invariant extremal, and $\mathcal{R_L}=\C\vk{0_2}$.

In both above cases, any state in the hereditary subalgebra $P_{\mathcal{R_L}}\mathcal{B(H)}P_{\mathcal{R_L}}$ is invariant. Hence, the analysis of the approach to equilibrium and attraction domains that we see in Section~\ref{s4} is simple in this subalgebra.

\subsection{Aref’eva-Volovich-Kozyrev quantum photosynthesis model}\label{sb52}
We frame the Aref’eva-Volovich-Kozyrev (briefly AVK) model \cite{MR3399653}, based on stochastic limit approach of degenerate quantum open systems (c.f. \cite{MR4107240}). This model is consistent with an open quantum system with two energy levels Fig. \ref{fig:graph-2e}.

\begin{figure}[h]
\centering
\begin{tikzpicture}
  [scale=.75,auto=left,every node/.style={}]
  \node (n0) at (4.5,7.5) {$0_0$};
  \node (n1) at (0,6)  {$0_1$};
  \node (n2) at (3,6)  {$1_1$};
  \node (n3) at (6,6) {$\cdots$};
  \node (n4) at (9,6)  {$(n_1-1)_1$};
   \node (m1) at (0,4.5)  {$0_2$};
   \node (m2) at (3,4.5)  {$1_2$};
   \node (m3) at (6,4.5)  {$\cdots$};
   \node (m4) at (9,4.5)  {$(n_2-1)_2$};
    \node (o0) at (4.5,3)  {$0_{3}$};
     \node (lM) at (13,7.5)  {$0$-level\,,\quad $E_0$};
      \node (l1) at (13,6)  {$1$-level\,,\quad $E_1$};
          \node (lm) at (13,4.5)  {$2$-level\,,\quad $E_{2}$};
          \node (l0) at (13,3)  {$3$-level\,,\quad $E_{3}$};
  \foreach \from/\to in {n0/n1,n0/n2,n0/n3,n0/n4,n1/m1,n1/m2,n1/m3,n2/m3,n2/m1,n2/m2,m1/o0,m2/o0,m3/o0,m4/o0,n4/m3,n4/m4,n3/m1,n3/m2,n3/m4,n3/m3}
    \draw (\from) -- (\to);
\end{tikzpicture}
\caption{Graph of states and transitions with two energy levels.}\label{fig:graph-2e}
\end{figure}
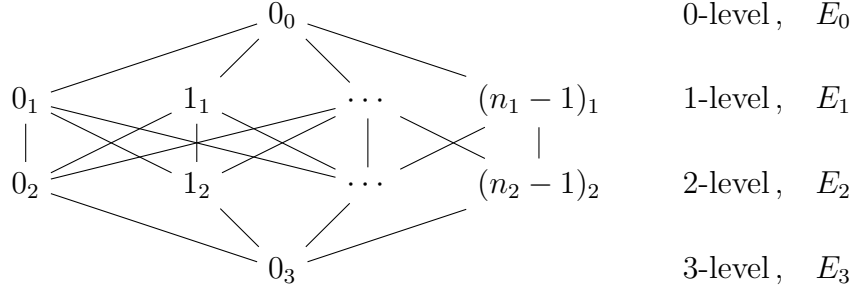

By virtue of \eqref{eq:Gen-discreteFF}, we only have three transitions operators 
\begin{gather*}
Z_0=\sqrt{n_1}\vk{\varphi_{0_1}}\vb{0_0} \,, \quad Z_1=\sum_{a=0}^{n_2-1}\vk{a_2}\vb{\varphi_{a_1}} \quad \mbox{and}\quad Z_2=\vk{0_3}\vb{\varphi_{0_2}}\,.
\end{gather*}

It is a simple matter to verify that the subspace \eqref{eq:calV-harmonic} is 
\begin{gather}\label{eq:V-for-N2}
V=\lrb{\vk{0_0}, \varphi_{0_1},Z_1^*\varphi_{0_2},\vk{0_2},\varphi_{0_2},\vk{0_3}}^\perp\,.
\end{gather}

Recall by \eqref{eq:interaction-freeW} that $W=\Span\lrb{\varphi_{a_1}}_{a=n_2}^{n_{1}-1}$ which is a subset of $V$. In the following, we will explicitly describe the elements of $V$ and the fast recurrence subspace \eqref{eq:frs-RL}. 

\begin{lem}\label{lem:V-description}
The subspace \eqref{eq:V-for-N2} satisfies
\begin{gather}\label{eq:v-explicitly}
V=\Span\lrb{\varphi_{a_1}-\varphi_{(a+1)_1},\varphi_{a_2}-\varphi_{(a+1)_2}}_{a=1}^{n_2-2}\oplus W\,.
\end{gather}
Thereby, \eqref{eq:V-for-N2} is decomposed in its levels by $V=V_1\oplus V_2$, where
\begin{gather*}
V_1=\Span\lrb{\varphi_{a_1}-\varphi_{(a+1)_1}}_{a=1}^{n_2-2}\oplus W\,;\qquad V_2=\Span\lrb{\varphi_{a_2}-\varphi_{(a+1)_2}}_{a=1}^{n_2-2}\,.
\end{gather*}
\end{lem}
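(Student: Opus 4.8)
The plan is to compute $V$ one energy level at a time. Each of the six vectors spanning the orthogonal complement in \eqref{eq:V-for-N2} lies in a single subspace $E_k$, so $V=\bigoplus_{k=0}^{3}V_k$ with $V_k=P_kV$ obtained by taking the orthogonal complement, inside $E_k$, of exactly those listed vectors that belong to $E_k$. First I would sort the list: $\vk{0_0}\in E_0$, $\vk{0_3}\in E_3$, $\varphi_{0_1}\in E_1$, and $\vk{0_2},\varphi_{0_2}\in E_2$; the only vectors needing a short computation to put into entangled-basis form are $Z_1^*\varphi_{0_2}$ and $\vk{0_2}$. Using $Z_1^*=\sum_{a=0}^{n_2-1}\vk{\varphi_{a_1}}\vb{a_2}$ together with $\varphi_{0_2}=n_2^{-1/2}\sum_b\vk{b_2}$ gives $Z_1^*\varphi_{0_2}=n_2^{-1/2}\sum_{a=0}^{n_2-1}\varphi_{a_1}\in E_1$; dually, inverting the entangled basis of $E_2$ gives $\vk{0_2}=n_2^{-1/2}\sum_{a=0}^{n_2-1}\varphi_{a_2}\in E_2$. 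Since $n_0=n_3=1$, the subspaces $E_0$ and $E_3$ are one-dimensional and spanned by $\vk{0_0}$ and $\vk{0_3}$, so $V_0=V_3=\{0\}$ and hence $V=V_1\oplus V_2$.

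Next I would compute $V_1$ and $V_2$ in the respective entangled orthonormal bases. Writing a vector of $E_1$ as $v=\sum_{a=0}^{n_1-1}c_a\varphi_{a_1}$, orthogonality to $\varphi_{0_1}$ forces $c_0=0$, and orthogonality to $\sum_{a=0}^{n_2-1}\varphi_{a_1}$ then forces $\sum_{a=1}^{n_2-1}c_a=0$; hence $V_1=\lrb{\sum_{a=1}^{n_2-1}c_a\varphi_{a_1}:\sum_{a}c_a=0}\oplus\Span\lrb{\varphi_{a_1}}_{a=n_2}^{n_1-1}$. The first summand is $\Span\lrb{\varphi_{a_1}-\varphi_{(a+1)_1}}_{a=1}^{n_2-2}$ and the second is exactly $W$ by \eqref{eq:interaction-freeW}, so $V_1=\Span\lrb{\varphi_{a_1}-\varphi_{(a+1)_1}}_{a=1}^{n_2-2}\oplus W$. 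The same argument in $E_2$ — where the only constraints come from $\varphi_{0_2}$ and $\vk{0_2}=n_2^{-1/2}\sum_a\varphi_{a_2}$, and where there is no ``free'' part because $E_2$ has dimension $n_2$ — yields $V_2=\Span\lrb{\varphi_{a_2}-\varphi_{(a+1)_2}}_{a=1}^{n_2-2}$.

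Assembling the pieces, $V=V_1\oplus V_2=\Span\lrb{\varphi_{a_1}-\varphi_{(a+1)_1},\varphi_{a_2}-\varphi_{(a+1)_2}}_{a=1}^{n_2-2}\oplus W$, which is \eqref{eq:v-explicitly}, and since $W\subset E_1$ the stated splitting into $V_1$ and $V_2$ reads off at once. The only points to keep honest are that the various spans are mutually orthogonal, so the direct sums are legitimate — which is clear, since each $\varphi_{a_1}-\varphi_{(a+1)_1}$ with $1\le a\le n_2-2$ lies in $\Span\lrb{\varphi_{a_1}}_{a=1}^{n_2-1}\perp W$ and the $E_1$- and $E_2$-components are orthogonal — together with the two Fourier-inversion identities for $Z_1^*\varphi_{0_2}$ and $\vk{0_2}$. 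I do not anticipate any genuine obstacle here; the argument is a level-by-level orthogonal-complement computation once those identities are in hand.
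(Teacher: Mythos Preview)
Your proposal is correct and uses essentially the same ingredients as the paper's proof: the identities $Z_1^*\varphi_{0_2}=n_2^{-1/2}\sum_{a=0}^{n_2-1}\varphi_{a_1}$ and $\vk{0_2}=n_2^{-1/2}\sum_{a=0}^{n_2-1}\varphi_{a_2}$, together with the identification of the zero-sum hyperplane with the span of consecutive differences. The only organizational difference is that the paper verifies the claimed right-hand side by checking $\dim M=\dim V$ and $M\subset V$, whereas you derive $V_1$ and $V_2$ directly by solving the orthogonality constraints level by level; both routes are equally short and rest on the same computations.
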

\begin{proof}
If we denote the right-hand side of \eqref{eq:v-explicitly} by $M$, then it is simple to check that $\dim V=\dim M$. Thereby, we only need to show that $M\subset V$. Clearly, $\{\vk{0_0}, \varphi_{0_1},\varphi_{0_2},\vk{0_3}\}$ and $M$ are orthogonal. Besides, since $Z_1^*\varphi_{0_2}=n_2^{-1/2}\sum_{b=0}^{n_2-1}\varphi_{b_1}$, one has that $Z_1^*\varphi_{0_2}$ is orthogonal to $W$ as well as $V_2$, and for $a=1,\dots,n_2-2$,
\begin{gather}\label{eq:Zstarphi02-perpM}
\ip{\varphi_{a_1}-\varphi_{(a+1)_1}}{Z_1^*\varphi_{0_2}}=n_2^{-1/2}\sum_{b=0}^{n_2-1}\ip{\varphi_{a_1}-\varphi_{(a+1)_1}}{\varphi_{b_1}}=0\,,
\end{gather}
which implies $Z_1^*\varphi_{0_2}\perp M$. One obtains analogously to \eqref{eq:Zstarphi02-perpM} that $\vk{0_2}$ is orthogonal to $ M$, bering in mind that $\vk{0_2}=n_2^{-1/2}\sum_{b=0}^{n_2-1}\varphi_{b_2}$. Hence, $\lrb{\vk{0_0}, \varphi_{0_1},Z_1^*\varphi_{0_2},\vk{0_2},\varphi_{0_2},\vk{0_3}}\subset M^\perp$, i.e., $M\subset V$, as required.
\end{proof}
Lema~\ref{lem:V-description} and Theorem~\ref{th:fast-recurrent-subspace} give the following result.
\begin{thm}
The fast recurrent subspace in the AVK model is 
\begin{gather*}
\mathcal{R_L}=\Span\lrb{\varphi_{a_1}-\varphi_{(a+1)_1},\varphi_{a_2}-\varphi_{(a+1)_2}}_{a=1}^{n_2-2}\oplus W\oplus \C\vk{0_{3}}\,.
\end{gather*}
\end{thm}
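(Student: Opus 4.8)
The plan is to invoke Theorem~\ref{th:fast-recurrent-subspace} and reduce the problem to identifying the subspace $V$ in the AVK model, which is exactly the content of Lemma~\ref{lem:V-description}. Since Theorem~\ref{th:fast-recurrent-subspace} asserts $\mathcal{R_L}=V\oplus\C\vk{0_{N+1}}$ in full generality, and here $N=2$ so $E_{N+1}=E_3$, it only remains to substitute the explicit description of $V$ for the two-level model.

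Concretely, I would proceed as follows. First, recall the three transition operators $Z_0,Z_1,Z_2$ computed above for the AVK model, and use \eqref{eq:calV-harmonic} together with \eqref{eq:0k-m-even} to verify that the spanning set in \eqref{eq:V-for-N2} is the correct instance of \eqref{eq:calV-harmonic} for $N=2$; this step is essentially bookkeeping, tracking that the vectors $Z^n\vk{0_0}$ ($0\le n\le 2$), ${Z^*}^n\vk{0_3}$ ($0\le n\le 2$), and ${Z^*}^{s_1}\varphi_{0_3}$ ($1\le s_1\le 2$) collapse — after using $Z\varphi_{0_1}=\vk{0_2}$, $Z\vk{0_2}=\varphi_{0_3}$ type identities from \eqref{eq:transition-on-bases} and \eqref{eq:0k-m-even} — to the list $\{\vk{0_0},\varphi_{0_1},Z_1^*\varphi_{0_2},\vk{0_2},\varphi_{0_2},\vk{0_3}\}$. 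Second, apply Lemma~\ref{lem:V-description} directly to rewrite $V$ as $\Span\lrb{\varphi_{a_1}-\varphi_{(a+1)_1},\varphi_{a_2}-\varphi_{(a+1)_2}}_{a=1}^{n_2-2}\oplus W$. Third, combine with Theorem~\ref{th:fast-recurrent-subspace} to conclude
\begin{gather*}
\mathcal{R_L}=V\oplus\C\vk{0_3}=\Span\lrb{\varphi_{a_1}-\varphi_{(a+1)_1},\varphi_{a_2}-\varphi_{(a+1)_2}}_{a=1}^{n_2-2}\oplus W\oplus\C\vk{0_3}\,.
\end{gather*}

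There is no real obstacle here since both ingredients — the general formula for $\mathcal{R_L}$ and the explicit form of $V$ — are already established; the proof is a one-line composition. If anything, the only point requiring a small amount of care is making sure that the direct-sum decomposition is genuinely orthogonal, i.e. that $W\perp\C\vk{0_3}$ and that both are orthogonal to the span of the difference vectors; but this is immediate because $W\subset E_1$, $\C\vk{0_3}\subset E_3$, and the difference vectors lie in $E_1\oplus E_2$, while within $E_1$ the vectors $\varphi_{a_1}-\varphi_{(a+1)_1}$ ($a\ge 1$) are orthogonal to $\varphi_{0_1}$ and hence the orthogonality of the $W$-summand to the $V_1$-differences was already checked inside Lemma~\ref{lem:V-description}. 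Thus the proof I would write is simply: \emph{By Lemma~\ref{lem:V-description} and Theorem~\ref{th:fast-recurrent-subspace}.}
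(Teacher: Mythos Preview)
Your proposal is correct and matches the paper's approach exactly: the paper states the theorem as an immediate consequence of Lemma~\ref{lem:V-description} and Theorem~\ref{th:fast-recurrent-subspace}, with no additional argument. Your final one-line proof is precisely what the paper does.
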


Clearly, the AVK model is under $\rm DH$ condition \eqref{eq:condition-dimension}. Thereby, according to Theorem~\ref{th:convex-decomposition-IS} and Remark~\ref{rm:normalization-constant}, any state $\rho$ is invariant if and only if it is decomposed into a convex combination
\begin{gather*}
\rho=\frac\alpha{1+e^{\beta_1}}\lrp{\tau+e^{\beta_1}Z_1\tau Z_1^*}+\beta \eta +\lambda P_{3}\,, 
\end{gather*}
where $\alpha,\beta,\lambda\geq0$, with $\alpha+\beta+\lambda=1$,  and $\tau,\eta$ are states supported in the spaces $\Span\{\varphi_{a_1}-\varphi_{(a+1)_1}\}_{a=1}^{n_2-2},W$, respectively. 

Due to Corollary~\ref{cor:invariant-extremal}, any invariant-extremal state is characterized by being $P_{0_3}$, or $\vk w\vb w$ with $w$ a unit vector in $W$, or $\vk u\vb u+e^{\beta_1}Z_1\vk u\vb uZ_1^*$, viz. 
\begin{gather}\label{eq:ies-AKV}
\vk u\vb u+e^{\beta_1}\sum_{a=1}^{n_2-1}\abs{\ip{\varphi_{a_1}}{u}}^2\vk{a_2} \vb {a_2}\,,
\end{gather}
where $u\neq0$ belongs to $\Span\{\varphi_{a_1}-\varphi_{(a+1)_1}\}_{a=1}^{n_2-2}$, with $\no u=(1+e^{\beta_1})^{-1/2}$ (v.s. Remark~\ref{rm:normalization-constant}). Besides, Corollary~\ref{cor:ixe-eigenvalues} and Remark~\ref{rm:normalization-constant} assert that \eqref{eq:ies-AKV} has  
\begin{align*}
\mbox{non-zero eigenvalues }&\lrb{(1+e^{\beta_1})^{-1},e^{\beta_1}(1+e^{\beta_1})^{-1}}\,,\\\mbox{with respective eigenvectors } &\lrb{(1+e^{\beta_1})^{1/2}u,(1+e^{\beta_1})^{1/2}Z_1u}\,. 
\end{align*}

Now,  taking into account Theorem~\ref{th:evolution-states-vw}, for an initial state $\rho\in\mathcal A_{U_Z}$, where $U$ is a subspace in $\Span\{\varphi_{a_1}-\varphi_{(a+1)_1}\}_{a=1}^{n_2-2}$, one computes by \eqref{eq:evolution-states-VW} that
\begin{gather*}
\lim_{t\to\infty}\mathcal T_t(\rho)= \frac{1}{1+e^{\beta_1}}\lrp{P_1\rho P_1+Z_1^*\rho Z_1+e^{\beta_1}\lrp{P_2\rho P_2+Z_1\rho Z_1^*}}\,,
\end{gather*}
which is an invariant state (v.s. Corollary~\ref{coro:existence-is-Avm}) and satisfies
\begin{gather*}
\ran \lim_{t\to\infty}\mathcal T_t(\rho)=\bigcup_{k=0}^{1}\bigoplus_{n=0}^{1}Z_1^{n}\ran Z_1^{*k}P_{k+1}\rho P_{k+1} Z_1^{k}\subset U_Z\,.
\end{gather*}
E.g., for $j=1,2$ and  $u_j\in\Span\{\varphi_{a_j}-\varphi_{(a+1)_j}\}_{a=1}^{n_2-2}$, with $\no{u_j}=(1+e^{\beta_1})^{-1/2}$,
\begin{align*}
\lim_{t\to\infty}\mathcal T_t\lrp{\vk{u_1}\vb{u_1}}&=\vk{u_1}\vb{u_1}+e^{\beta_1}\sum_{a=1}^{n_2-1}\abs{\ip{\varphi_{a_1}}{u_1}}^2\vk{a_2} \vb {a_2}\,,\\
\lim_{t\to\infty}\mathcal T_t\lrp{\vk{u_2}\vb{u_2}}&=e^{\beta_1}\vk{u_2}\vb{u_2}+\sum_{a=1}^{n_2-1}\abs{\ip{a_2}{u_2}}^2\vk{\varphi_{a_1}} \vb {\varphi_{a_1}}\,.
\end{align*}
To conclude, in view of Corollary~\ref{cor:attraction-domain}, for a state $\eta$ with support in $U$, the attraction domain of the invariant state
\begin{gather*}
\frac{1}{1+e^{\beta_1}}\lrp{\eta+e^{\beta_1}Z_1\eta Z_1^*}\,,
\end{gather*}
is formed of those states $\rho\in \mathcal A_{U_Z}$, such that $\eta=\abs Z_1\rho \abs Z_1+Z_1^*\rho Z_1$ and 
\begin{gather*}
\ran \eta\oplus Z_1 \ran \eta=\bigcup_{k=0}^{1}\bigoplus_{n=0}^{1}Z_1^{n}\ran Z_1^{*k}P_{k+1}\rho P_{k+1} Z_1^{k}\subset U_Z\,.
\end{gather*}

\subsection*{Acknowledgments} This work was partially supported by UAM-PEAPDI 2023: ``Semigrupos cuánticos de Markov: Operadores de transición de niveles de energía y sus generalizaciones"  and Estancias Posdoctorales por México 2022, Id 2524732: ``El subespacio de recurrencia rápida en un modelo de transporte cuántico de energía''.

\def\cprime{$'$} \def\lfhook#1{\setbox0=\hbox{#1}{\ooalign{\hidewidth
  \lower1.5ex\hbox{'}\hidewidth\crcr\unhbox0}}} \def\cprime{$'$}
  \def\cprime{$'$} \def\cprime{$'$} \def\cprime{$'$} \def\cprime{$'$}
  \def\cprime{$'$} \def\cprime{$'$} \def\cprime{$'$}
  \def\lfhook#1{\setbox0=\hbox{#1}{\ooalign{\hidewidth
  \lower1.5ex\hbox{'}\hidewidth\crcr\unhbox0}}} \def\cprime{$'$}
  \def\cprime{$'$} \def\cprime{$'$} \def\cprime{$'$} \def\cprime{$'$}
  \def\cprime{$'$} \def\cprime{$'$}
\providecommand{\bysame}{\leavevmode\hbox to3em{\hrulefill}\thinspace}
\providecommand{\MR}{\relax\ifhmode\unskip\space\fi MR }
\providecommand{\MRhref}[2]{%
  \href{http://www.ams.org/mathscinet-getitem?mr=#1}{#2}
}
\providecommand{\href}[2]{#2}

\end{document}